\newtheorem{lemma}{Lemma}
\newtheorem{statement}{Statement}
\DeclareMathOperator{\diag}{diag}
\newcommand{\dd}{\mathrm{d}}
\def\cO{\mathcal{O}}
\def\a{\alpha}
\newcommand{\ff}{\mathfrak{f}}
\newcommand{\fg}{\mathfrak{g}}
\newcommand{\bPsi}{\bar{\Psi}}
\newcommand{\bpsi}{\bar{\psi}}
\newcommand{\hh}{\widehat{h}}
\newcommand{\hht}{\widehat{\tilde{h}}}
\newcommand{\hL}{\widehat{L}}
\newcommand{\hLt}{\widehat{\tilde{L}}}
\DeclareMathOperator{\Tr}{Tr\!}
\DeclareMathOperator{\D}{d\!}
\let\Im\undefined
\DeclareMathOperator{\Im}{Im}
\newcommand{\ket}[1]{|#1\rangle}
\newcommand{\ketket}[2]{\ket{#1}\otimes\ket{#2}}
\newcommand{\braket}[2]{\langle#1|#2\rangle}
\newcommand{\braOket}[3]{\langle#1|#2|#3\rangle}
\newcommand{\expect}[1]{\langle #1 \rangle}
\newcommand{\be}{\begin{equation}}
\newcommand{\ee}{\end{equation}}
\newcommand{\ba}{\begin{eqnarray}}
\newcommand{\ea}{\end{eqnarray}}
\def\bpm{\begin{pmatrix}}
\def\epm{\end{pmatrix}}
\newcommand{\ov}{\overline}
\newcommand{\df}{\equiv}
\newcommand{\rcolon}{\mathclose{:}}
\newcommand{\Lcolon}{\mathlarger{\mathopen{:}}}
\newcommand{\Rcolon}{\mathlarger{\mathclose{:}}}
\newcommand{\lcirc}{\mathopen{{}^{\circ}_{\circ}}}
\newcommand{\rcirc}{\mathclose{{}^{\circ}_{\circ}}}
\newcommand{\noeq}{\mathrel{\phantom{=}}}
\title{\centering 
Proving the Weak Gravity Conjecture \\ 
in Perturbative String Theory\\
{\Large --- Part I: The Bosonic String ---}}
\author[a]{Ben Heidenreich}
\author[b]{and Matteo Lotito}
\affiliation[a]{Amherst Center for Fundamental Interactions,\\Department of Physics, University of Massachusetts, Amherst, MA 01003 USA}
\affiliation[b]{Department of Physics and Astronomy \& Center for Theoretical Physics,
Seoul National University, Seoul 08826, Republic of Korea}
\affiliation[c]{Department of Physics, Korea Advanced Institute of Science and Technology, Daejeon 34141, Republic of Korea}
\emailAdd{bheidenreich@umass.edu}
\emailAdd{matteolotito@gmail.com}
\abstract{
We present a complete proof of the Weak Gravity Conjecture in any perturbative bosonic string theory in spacetime dimension $D\ge6$. Our proof works by relating the black hole extremality bound to long range forces, which are more easily calculated on the worldsheet, closing the gaps in partial arguments in the existing literature. We simultaneously establish a strict, sublattice form of the conjecture in the same class of theories. We close by discussing the scope and limitations of our analysis, along with possible extensions including an upcoming generalization of our work to the superstring.

}
\begin{document}

\makeatletter
\let\old@fpheader\@fpheader
\renewcommand{\@fpheader}{\old@fpheader\hfill
{ACFI-T24-01}}
\makeatother

\maketitle

\section{Introduction} \label{sec:intro}

The swampland program~\cite{Vafa:2005ui} seeks to distinguish effective field theories (EFTs) that arise as the low-energy limit of a consistent quantum gravity theory (QGT) from those that, while seemingly consistent at low energies, do not---categories known as the landscape and the swampland, respectively. Without a general understanding of quantum gravity, it is impossible to definitively place a given theory in the swampland. Nonetheless, 
  the low-energy effective field theories arising in numerous string theory examples share common features that have no obvious low-energy explanations. Motivated by this and by various suggestive arguments about black holes and other expected features of quantum gravity, a number of ``swampland conjectures'' have been formulated, see \cite{Palti:2019pca, vanBeest:2021lhn, Agmon:2022thq} for reviews.

One of the oldest and best-explored of these conjectures is the \textbf{Weak Gravity Conjecture} (WGC)~\cite{ArkaniHamed:2006dz} (see \cite{Palti:2020mwc, Harlow:2022gzl} for reviews). In its mildest form, requires that a QGT with a massless photon contains a charged particle with charge-to-mass ratio greater than or equal to that of any black hole of parametrically large mass and charge:\footnote{The WGC is often defined with reference to ``extremal'' black holes, but these do not always exist, and in particular they do not exist for the electrically charged black holes studied in this paper. Similar issues are discussed in \cite{Harlow:2022gzl}, appendix A.}
\begin{equation} \label{eqn:WGCsketch}
\text{WGC:}\qquad \frac{|q|}{m} \ge \frac{|Q|}{M} \biggr|_{\text{large BH}} \,.
\end{equation}
Here we assume zero cosmological constant.\footnote{See~\S\ref{sec:final} for a discussion of the WGC when a non-zero cosmological constant is generated by string loops.} 
By a black hole we mean an object with a macroscopic horizon---i.e., with curvature much less than the quantum gravity scale \cite{Dvali:2007hz,Dvali:2007wp}---that hides the source of the mass and charge. Note that, since it is defined using only large black holes, the WGC bound is not sensitive to (generally unknown) derivative corrections.

We call a particle satisfying~\eqref{eqn:WGCsketch} \emph{superextremal}. The existence of such a particle is essentially the same as the requirement that large non-extremal black holes are kinematically unstable and can decay to light particles. Thus, the WGC compares two fundamentally different things: (1) the exact spectrum of the QGT and (2) the space of large black hole solutions in the low-energy EFT, which is determined by the two-derivative effective action for the massless fields.






Note that the WGC bound~\eqref{eqn:WGCsketch} can be saturated, and in fact this is the only way to satisfy it in supersymmetric theories with BPS black holes, since then the BPS bound does not permit strictly superextremal states. In such cases, the charged particle satisfying the bound must also be BPS. Conversely, Ooguri and Vafa have conjectured that the WGC is \emph{never} saturated when the superextremal particle is not BPS~\cite{Ooguri:2016pdq}.\footnote{This does \emph{not} mean, however, that a BPS particle always saturates the WGC bound, as there may not be a corresponding BPS black hole, see, e.g.,~\cite{Alim:2021vhs}.} We refer to this version of the conjecture as the \textbf{Ooguri-Vafa WGC} for ease of reference.

The WGC has been verified in many string-derived QGTs, but in this paper we will go further. We aim to prove that all string theories satisfy the WGC. This is a very difficult problem, and we provide only a partial solution.
 We focus on  the tree-level string spectrum in perturbative string theory. The string modes carry electric charges under the NSNS sector gauge fields, and we will prove (finally completing an argument first given in \S4.1 of~\cite{ArkaniHamed:2006dz}) that they invariably satisfy the WGC for these gauge fields.

This proves the WGC in perturbative string theory, up to the following two caveats.
Firstly, string loop corrections might spoil the result. Specifically, if the WGC bound were \emph{saturated} at string tree-level, then string loop corrections to the mass of the (apparently) extremal charged particle could render it subextremal, depending on their sign. Besides calculating these corrections explicitly, this issue can be avoided by ensuring that either (a) the extremal particle is BPS, and thus its mass is not corrected or (b) the WGC bound is satisfied but not saturated at string tree-level, so that the string loop corrections are not large enough (at small string coupling) to make the particle subextremal. In other words, if the Ooguri-Vafa WGC holds at string tree-level, then perturbative corrections to the mass spectrum are tamed.

Secondly, there might be other one-form gauge fields outside the electric NSNS sector. Because string modes only carry electric NSNS sector gauge charge, we would need to consider D-branes, solitons or other objects to complete the charged spectrum in such theories, an interesting topic that is beyond the scope of our paper. Fortunately, in both heterotic and closed bosonic string theories, this does not occur in target spacetime dimension $D\ge 6$.\footnote{In $D=4,5$, there are magnetic NSNS one-form gauge fields (the magnetic dual of the Kalb-Ramond two-form $B_2$ in 5d, and the magnetic duals of all the one-form NSNS gauge fields in 4d).} Even when additional one-form gauge fields are present (such as in type II string theories), our results prove the WGC for an important subsector of the theory.

Note that our work is different from and complementary to attempts to prove the WGC using bottom-up reasoning (see~\cite{Harlow:2022gzl} for a review). This paper aims to fill gaps in the top-down evidence for the WGC; our work has nothing to say about hypothetical QGTs that are unrelated to string theory, but has strong implications for string theories.\footnote{These implications are not entirely limited to the \emph{perturbative} string for the usual reason that BPS states can sometimes be tracked from weak to strong coupling.}

\subsection{The WGC via repulsive forces}

To motivate our proof, it is convenient to discuss several other swampland conjectures that will also be proved as byproducts of our argument.
Closely related to the WGC is the \textbf{Repulsive Force Conjecture} (RFC)~\cite{ArkaniHamed:2006dz,Palti:2017elp,Heidenreich:2019zkl}, which requires the existence of a charged particle that repels its identical twin when far apart. In particular, the long-range self-force should not be attractive:
\begin{equation} \label{eqn:RFCsketch}
\text{RFC:}\qquad k q^2 - G_N m^2 - G_\phi \mu^2 \ge 0 \,,
\end{equation}
where $k$, $G_N$, and $G_\phi$ are the electromagnetic, gravitational, and scalar force constants, respectively, and $\mu = \frac{d m}{d\phi}$ is the scalar charge.  We call a particle satisfying~\eqref{eqn:RFCsketch} \emph{self-repulsive}. In four dimensional theories, the absence of such a particle implies the existence of
 infinite towers of stable Newtonian bound states~\cite{ArkaniHamed:2006dz}.\footnote{The same conclusion does not follow in $D>4$~\cite{Heidenreich:2019zkl}.}

The WGC and the RFC are related by the observation that an identical pair of extremal Reissner-Nordstr\"om black holes exert no long-range force on each other.
Thus, \eqref{eqn:RFCsketch} is the same as \eqref{eqn:WGCsketch} if we ignore scalar forces and assume that the extremal black hole has a Reissner-Nordstr\"om geometry. However, both assumptions are invalidated in theories with moduli: the moduli alter the black hole geometry, changing the WGC bound~\cite{Heidenreich:2015nta}, while simultaneously mediating long-range scalar forces, changing the RFC bound~\cite{Palti:2017elp,Heidenreich:2019zkl}. A priori, the scalar charge of a given massive particle has nothing to do with the moduli-induced modifications to the geometry of a black hole, hence the two conjectures become distinct once moduli are introduced~\cite{Heidenreich:2019zkl}.

Nonetheless, important links remain. Despite moduli-induced corrections to both the black hole geometry and to the long-range forces, extremal black holes still exert no long-range force on their identical copies~\cite{Heidenreich:2020upe}. More important for our paper is the following observation~\cite{Heidenreich:2019zkl,Harlow:2022gzl}, reviewed in~\S\ref{subsec:selfrepulsive}: if a particle exists that is self-repulsive throughout the entire moduli space, then that particle is superextremal. Thus, the existence of such a particle ensures that both the WGC and the RFC are satisfied.

The WGC holds in all string-derived QGTs where it has been checked to date. In fact, typically there is not just one superextremal charged particle but an infinite tower of them with increasing mass and charge~\cite{Heidenreich:2015nta,Heidenreich:2016aqi,Heidenreich:2017sim,Grimm:2018ohb,Andriolo:2018lvp,Lee:2018urn,Lee:2018spm}. This property, known as the \textbf{tower WGC}~\cite{Andriolo:2018lvp}, is related to the consistency of the WGC upon compactifying on a circle~\cite{Heidenreich:2015nta,Andriolo:2018lvp}. An even stronger statement seems to hold: the lattice\footnote{The charges form a lattice if the gauge group is compact (e.g., $U(1)^N$), where the compactness of the gauge group is yet another swampland conjecture~\cite{Banks:2010zn}.} of allowed charges $\Gamma_Q$ contains a sublattice $\Gamma_{\text{ext}} \subseteq \Gamma_Q$ of finite index that is completely populated by superextremal charged particles~\cite{Heidenreich:2016aqi}. This \textbf{sublattice WGC} holds in all string-derived examples (in dimension $D\ge 5$)\footnote{The conjecture requires modification in $D=4$~\cite{Heidenreich:2016aqi,Heidenreich:2017sim,Klaewer:2020lfg} (as does the tower WGC) due to the possibility of large logarithms in loop diagrams; see also~\cite{Lee:2019tst}. These subtleties do not occur for $D \ge 5$, making possible highly sensitive checks of the conjecture~\cite{Alim:2021vhs,Gendler:2022ztv}.} where it has so-far been checked.
To distinguish it from these stronger conjectures, the WGC, as we first defined it, is sometimes called the \textbf{mild WGC}.

When restricted to the tree-level string modes, the sublattice WGC is almost a consequence of worldsheet modular invariance~\cite{Heidenreich:2016aqi,Montero:2016tif} (see also~\cite{Lee:2018spm,Aalsma:2019ryi}). As reviewed in~\S\ref{subsec:specflow}, one can show that there is a finite-index sublattice of the NSNS sector electric charge lattice that is completely populated by charged string modes satisfying
\begin{equation} \label{eqn:spectralflowbound}
\frac{\alpha'}{4} m^2 \le \frac{1}{2} \max(Q_L^2, Q_R^2) \,.
\end{equation}
In particular, a finite-index sublattice of states saturating this bound can be obtained by spectral flow from the graviton. However, this is not quite a proof of the sublattice WGC.
The crucial missing link, supplied in this paper, is the connection between this bound on the charge-to-mass ratio and the black hole extremality bound.

To make this connection, suppose that we instead try to prove the \textbf{sublattice RFC}~\cite{Heidenreich:2019zkl}, i.e., the analogous conjecture with ``superextremal'' replaced by ``self\-/repulsive''. To do so, it would be sufficient to show that the states generated by spectral flow from the graviton are self-repulsive. In fact, this would have larger consequences, as these self-repulsive states necessarily persist throughout the moduli space, implying (as reviewed in~\S\ref{subsec:selfrepulsive}) that they are superextremal~\cite{Heidenreich:2019zkl,Harlow:2022gzl}. Thus, proving that the states related to the graviton by spectral flow are self-repulsive would prove both the sublattice RFC and the sublattice WGC, implying, respectively, the (mild) RFC and the (mild) WGC.

This may seem like a roundabout approach to proving the mild WGC, which is a much weaker statement than the sublattice conjectures discussed above. In fact, the sublattice conjectures are in an important sense much easier to handle than the mild WGC or RFC in QGTs with more than one massless photon (which occur very frequently in the landscape). To state the mild WGC precisely in such examples, it is convenient to express it in terms of the charge-to-mass vectors $\vec{z} = \vec{q}/m$ of various states. In $\vec{z}$-space, parametrically large black holes occupy a finite ``black hole region,'' and we call states outside or on the boundary of this region superextremal. The mild WGC now requires not just one superextremal charged particle, but rather enough of them to enclose the entire black hole region in their convex hull~\cite{Cheung:2014vva}, see figure~\ref{fig:convexhull}.\footnote{More precisely, along every rational direction in charge space, there should be a superextremal multiparticle state, where a ``multiparticle state'' is just a formal combination of particles with mass and charge equal to the total mass and charge of the particles~\cite{Heidenreich:2019zkl}.}

\begin{figure}\centering
\includegraphics[width=0.5\textwidth]{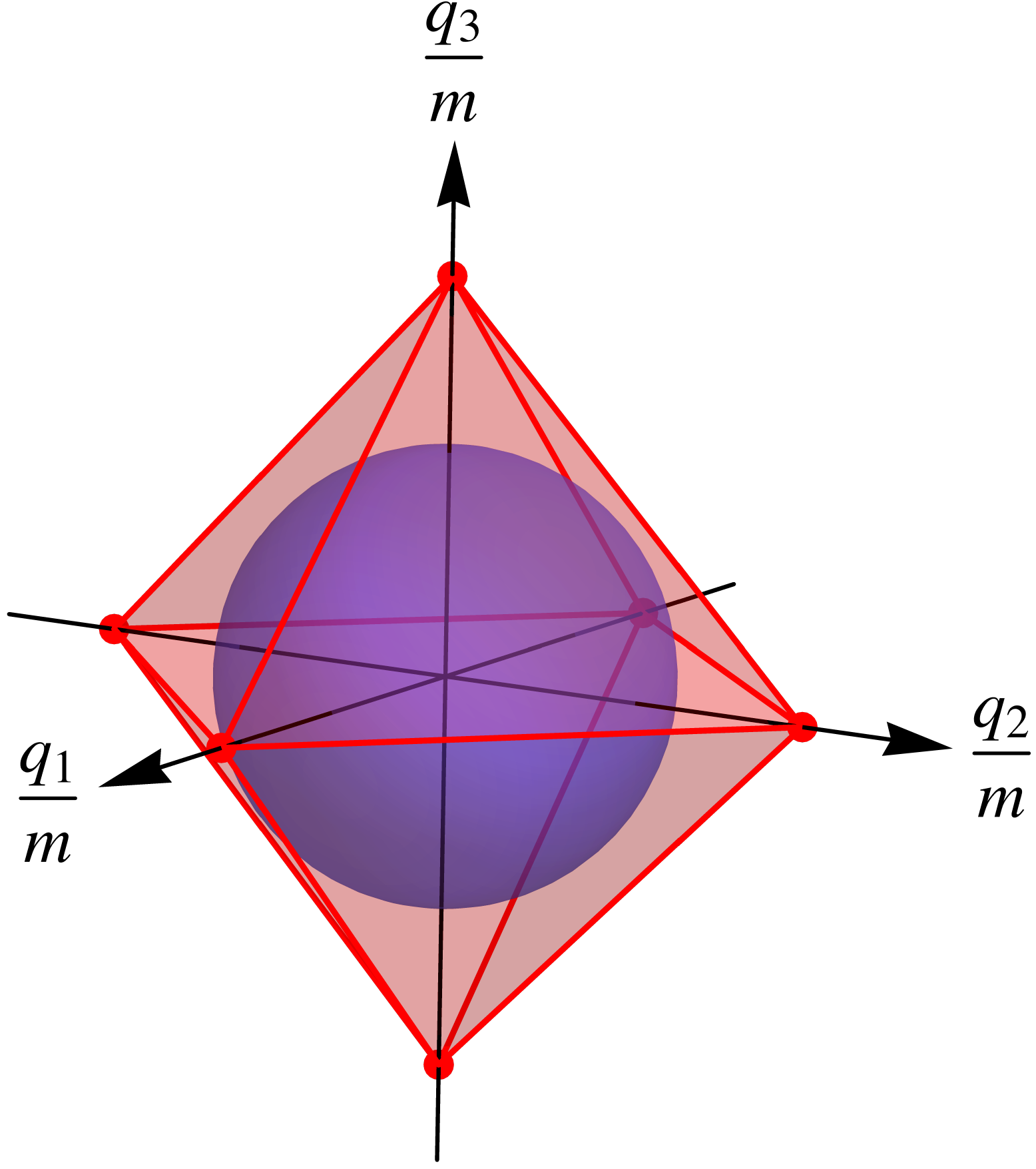}
\caption{The mild WGC requires the convex hull of the charge-to-mass ratios of all charged particles (red) to enclose the region spanned by charged black hole solutions (blue). This convex hull condition is marginally satisfied in the pictured example, as the center of each triangular face of the convex hull touches the outer surface of the ball-shaped black hole region. More generally, checking the convex hull condition in a theory with many gauge fields is a very non-trivial problem in high-dimensional geometry. Deriving the mild WGC from its stronger, sublattice variant---as in the present work---completely sidesteps this issue.}\label{fig:convexhull}
\end{figure}

This \textbf{convex hull condition} is simple and intuitive, but it involves high\-/dimensional geometry when there are more than a few massless photons, and consequently it can be quite hard to verify in such examples. It is much easier to check whether a given particle is superextremal. To do so, we need only compare its charge-to-mass ratio to that of a parametrically large black hole with a parallel charge vector. Thus, it is much easier to verify the sublattice WGC when the full spectrum is known, implying the mild WGC as a consequence without needing to consider any details of the $\vec{z}$-space geometry. 

\bigskip

With this reasoning in mind, the main new technical result required for our proof is to establish that the string modes related to the graviton by spectral flow are self-repulsive. In fact, we will show that those that saturate~\eqref{eqn:spectralflowbound} have vanishing self-force (a special case of ``self-repulsive''). In the bosonic case, we find that there is also a slightly lighter tower, with mass
\begin{equation}
\frac{\alpha'}{4} m^2_{\text{lightest}} = \frac{1}{2} \max(Q_L^2, Q_R^2) - 1 \,.
\end{equation}
These lighter states may or may not be self-repulsive (depending on $Q_L^2 - Q_R^2$), but our analysis implies that they are \emph{strictly} superextremal. Thus, the tree-level spectrum satisfies the \textbf{strict, sublattice WGC}: the charge lattice $\Gamma_Q$ contains a finite-index sublattice $\Gamma_{\text{ext}} \subseteq \Gamma_Q$ completely populated by particles that are either BPS or strictly superextremal. This implies and strengthens the Ooguri-Vafa WGC, ensuring that perturbative corrections do not spoil the tree-level result.\footnote{We defer further discussion of the Ooguri-Vafa WGC in superstring theories to~\cite{BMSuperstring}.}

\bigskip

For simplicity and clarity we have separated our proof into two parts, addressing the bosonic string in the present paper and the superstring in the sequel~\cite{BMSuperstring}. For completeness, we will review many basic facts about 2d CFTs, as well as providing a new, more explicit derivation of the universal modular transformation of the flavored partition function. While these facts are, in some sense, elementary, they are critical to establishing our novel results on the universal behavior of long-range forces and extremal black holes in perturbative (bosonic) string theories. 

The outline of this paper is as follows. In~\S\ref{sec:2} we summarize our proof strategy and discuss the computation of long-range forces on the worldsheet. In~\S\ref{sec:3}, we work through the entire proof for the bosonic string. Along the way, we review and make more rigorous some existing arguments on which our proof relies. 
In~\S\ref{sec:final} we summarize our final results, discuss their limitations and possible extensions, and provide a brief preview of the superstring generalization, to appear in~\cite{BMSuperstring}.
Several technical aspects of the proof are included in appendices. In appendix~\ref{app:NormalOrder} we describe the procedure for normal-ordering holomorphic operators on the worldsheet, in appendix~\ref{app:sugawara} we review the Sugawara construction, and in appendix~\ref{app:toroidal} we derive the long range forces between string modes in a toroidal compactification of the bosonic string via the low-energy effective action.

\section{Proof strategy}
\label{sec:2}
We first summarize our proof strategy, as developed above. Our argument relies on three key facts, established (for the bosonic string) in the indicated sections: 
\begin{enumerate}
\item
There exists an infinite tower of charged particles related to the graviton by spectral flow, which saturate~\eqref{eqn:spectralflowbound} and fill out a finite-index sublattice of the electric NSNS charge lattice (\S\ref{subsec:specflow}).
\item
These particles are self-repulsive (\S\ref{subsec:bosonic-univ}). 
\item
A particle that is self-repulsive everywhere in moduli space is superextremal (\S\ref{subsec:selfrepulsive}).
\end{enumerate}
Taken together, these facts imply both the sublattice RFC and the sublattice WGC, which in turn imply the mild RFC and WGC, respectively. While our proof is at string tree-level, it is relatively robust against perturbative corrections, as discussed in~\S\ref{sec:final}.

While all three facts listed above are essential to our proof, the second is the most novel (the others having appeared in some form in previous literature). As such, we begin by explaining in general terms an efficiently strategy for computing long-range forces on the worldsheet, saving the (bosonic string) implementation for~\S\ref{subsec:bosonic-univ}.

\subsection{Computing long-range forces on the worldsheet} \label{subsec:worldsheetforce}

There are two natural approaches to computing long-range forces, one based on the low-energy effective action and the other on the S-matrix.

First, consider the low-energy effective action. Fermions, massive fields, and charged fields do not mediate long-range forces between identical particles. Truncating to the massless, neutral bosons, the two-derivative effective action takes the general form:
\begin{equation} \label{eqn:gen-action-EFT}
S = \int \!\dd^d x\, \sqrt{-g} \biggl[\frac{1}{2 \kappa_d^2} R -\frac{1}{2} G_{i j}(\phi) \nabla\phi^i \cdot\nabla\phi^j - \frac{1}{2} \ff_{a b}(\phi) F_2^a \cdot F_2^b - V(\phi) \biggr] + \ldots \,,
\end{equation}
where $V(\phi)$ is a scalar potential of quartic or higher order and the omitted terms involve $p$-form gauge fields, theta terms, and Chern-Simons terms, none of which contribute to the long-range forces between electrically charged particles (see, e.g., \cite{Heidenreich:2020upe} for a careful accounting).

Note that the scalars $\phi^i$ are massless by assumption, but they are not necessarily \emph{moduli}. The moduli are the flat directions of $V(\phi)$ (parameterizing the vacuum manifold $V(\phi) = 0$).\footnote{Here we assume for simplicity that the cosmological constant vanishes (i.e., $V=0$ in the vacuum).} Massless scalars contribute to long-range forces whether they are moduli or not. Moduli, however, play a special role in black hole solutions, as discussed in~\S\ref{subsec:selfrepulsive}.

In terms of the couplings appearing in~\eqref{eqn:gen-action-EFT}, the long-range force between a pair of particles of mass $m(\phi)$, $m(\phi)'$ and electric charge $q_a$, $q_a'$ is
\begin{equation}
\vec{F} = \frac{\mathcal{F}}{V_{d-2} r^{d-2}} \hat{r} \qquad \text{where} \qquad
\mathcal{F} \equiv \mathfrak{f}^{a b} q_a q_b' - k_N m m' - G^{i j}  \frac{\partial m}{\partial
  \phi^i}  \frac{\partial m'}{\partial \phi^j} , \label{eqn:selfforce}
\end{equation}
where $V_{d-2}$ is the volume of the unit $(d-2)$-sphere, $k_N = \frac{d - 3}{d - 2} \kappa_d^2$ is the rationalized Newton's
constant and $\mathfrak{f}^{ab}$, $G^{ij}$ are the inverses of $\mathfrak{f}_{ab}$, $G_{ij}$, respectively. Thus, the long-range force is determined by the couplings $\mathfrak{f}_{a b}$, $k_N$ and $G_{i j}$ as well as the mass $m$, charge $q_a$, and scalar charge $\frac{\partial m}{\partial \phi^i}$ of the particles in question.

We use this effective-action description of the long-range forces to establish the connection between super-extremality and self-repulsiveness in~\S\ref{subsec:selfrepulsive}. However, to establish whether a given string mode is self-repulsive, an equivalent S-matrix formulation is more expedient, as follows.

Elastic scattering between two particles is described by a four-point amplitude, but the long-range component of the interaction is distinguished by intermediate massless force-carrying bosons going on shell, factoring the four-point amplitude into two three-point amplitudes. Thus, to calculate the long-range forces it is sufficient to calculate the three point amplitudes for each particle to emit/absorb each force-carrier, see Figure~\ref{fig:Feyn1}.

\begin{figure}
\includegraphics[width=\textwidth]{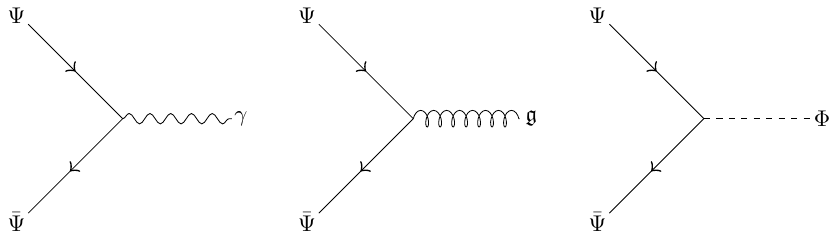}
\caption{The three-point amplitudes used to calculate the long-range forces. \label{fig:Feyn1}}
\end{figure}
To make this explicit, we first canonically normalize the EFT fields
\begin{equation}
\phi^i = \phi^i_0 + g^i_I \Phi^I\,, \qquad
A^a_{\mu} = \mathfrak{e}^a_A \gamma^A_{\mu} \,, \qquad g_{\mu \nu} = \eta_{\mu \nu} + \kappa_d \mathfrak{g}_{\mu \nu}\,,
\end{equation}
where $\phi^i_0$ denotes the chosen vacuum and $g_i^I$, $\mathfrak{e}_a^A$ are vielbeins for $G_{i j}$ and $\mathfrak{f}_{a b}$ in that vacuum (so that $G_{i j}(\phi_0) = g_i^I g_j^J \delta_{I J}$, $\mathfrak{f}_{a b}(\phi_0) = \mathfrak{e}_a^A \mathfrak{e}_b^B \delta_{A B}$), with inverse vielbeins  $g^i_I$, $\mathfrak{e}^a_A$. The three-point amplitudes between the particle $\Psi$ in question and the force carrying bosons $\gamma^A$, $\mathfrak{g}_{\mu \nu}$ and $\Phi^I$ (see Figure~\ref{fig:Feyn1}) then fix $\mathfrak{e}^a_A q_a$, $\kappa_d m$, and $g_I^i \frac{\partial m}{\partial \phi^i}$, respectively, from which the long-range force~\eqref{eqn:selfforce} can be calculated.

These three-point amplitudes correspond to the worldsheet three-point functions $\expect{\bPsi \Psi \gamma^A}$, $\expect{\bPsi \Psi \fg}$ and $\expect{\bPsi \Psi \Phi^I}$, respectively, where in an abuse of notation we denote vertex operators by the same symbols as their corresponding canonically-normalized EFT fields.\footnote{To be precise, the string scattering amplitude involves a conformal gauge-fixing determinant, i.e., conformal ghosts in a BRST approach. We suppress these (universal) details for simplicity, but note that when they are properly accounted for the closed string three-point function does not depend on the positions of the operators. Thus, we need not specify these positions.} Thus, in a canonically normalized basis
\begin{align}
\expect{\bPsi \Psi \gamma^A} &\sim \delta^{A B} \mathfrak{e}^a_B q_a\,, & \expect{\bPsi \Psi \fg} &\sim \kappa_d m\,, & \expect{\bPsi \Psi \Phi^I} &\sim \delta^{I J} g_J^i \frac{\partial m}{\partial \phi^i} \,, \label{eqn:NDA3points}
\end{align}
for non-relativistic $\Psi$, up to momentum and/or polarization-dependent factors.
The long-range force is then
\begin{equation}
\mathcal{F} \sim \expect{\bPsi \Psi \gamma^A} \delta_{A B} \expect{\gamma^B \bPsi' \Psi'} - \expect{\bPsi \Psi \fg}\expect{\fg \bPsi' \Psi'} - \expect{\bPsi \Psi \Phi^I} \delta_{I J} \expect{\Phi^J \bPsi' \Psi'} \,. \label{eqn:naiveforceCFT}
\end{equation}
However,~\eqref{eqn:naiveforceCFT} only applies if we correctly normalize both the vertex operators as well as an overall functional-determinant factor (see, e.g., \cite{Polchinski:1998rq}).
This is a non-trivial process involving, e.g., unitarity cuts, with the answer depending on the string coupling.\footnote{In fact, these normalizing factors are the \emph{only} place in which the string coupling shows up in the tree-level string S-matrix.}

\begin{figure}
\includegraphics[width=\textwidth]{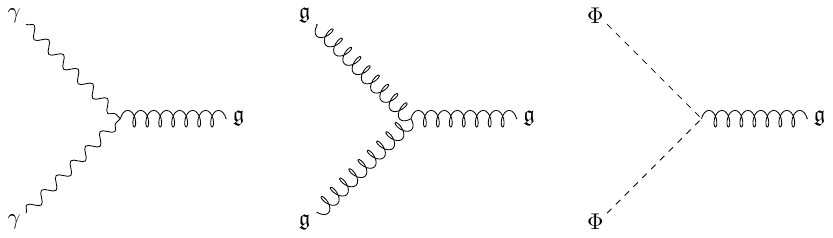}
\caption{The three-point amplitudes used to correctly normalize the vertex operators. \label{fig:Feyn2}}
\end{figure}
To avoid this complication, consider the amplitude for each force-carrier to emit or absorb a graviton, see Figure~\ref{fig:Feyn2}. Observe that:
\begin{align}
\expect{ \gamma^A \gamma^B \fg } &\sim \kappa_d \delta^{AB} \,, & \expect{ \fg\fg\fg } &\sim \kappa_d \,, & \expect{ \Phi^I \Phi^J \fg }&\sim \kappa_d \delta^{I J}\,,
\end{align}
in a canonically normalized basis, up momentum/polarization dependent factors as before. Therefore,
\begin{equation}
\frac{\mathcal{F}}{\kappa_d} \sim
\expect{\bPsi \Psi \gamma^A} \expect{\gamma \gamma \fg}^{-1}_{A B} \expect{\gamma^B \bPsi' \Psi'}
- \frac{\expect{\bPsi \Psi \fg}\expect{\fg \bPsi' \Psi'}}{\expect{\fg\fg\fg}}
- \expect{\bPsi \Psi \Phi^I} \expect{\Phi \Phi \fg}^{-1}_{I J} \expect{\Phi^J \bPsi' \Psi'} \,, \label{eqn:selfforceCFT}
\end{equation}
where $\expect{\gamma \gamma \fg}^{-1}_{A B}$ and $\expect{\Phi \Phi \fg}^{-1}_{I J}$ denotes the inverses of the matrices $\expect{\gamma^A \gamma^B \fg}$ and $\expect{\Phi^I \Phi^J \fg}$, respectively. The right-hand side of~\eqref{eqn:selfforceCFT} is now independent of the choice of basis for the gauge boson and scalar vertex operators $\gamma^A$ and $\Phi^I$, respectively. Moreover, it scales homogeneously with the normalizations of the other vertex operators and with the overall functional determinant factor (which affects every three-point function equally). Thus, even if we fail to correctly normalize the vertex operators, \eqref{eqn:selfforceCFT} still gives the right long-range force up to an overall factor.

The overall normalization can be fixed by normalizing the gravitational term to $-k_N m m'$ as in~\eqref{eqn:selfforce}.
The final result is
\begin{equation}
\mathcal{F} = k_N m m' \biggl[ \frac{\expect{\bPsi \Psi \gamma^A} \expect{\gamma \gamma \fg}^{-1}_{A B} \expect{\gamma^B \bPsi' \Psi'}}{\expect{\bPsi \Psi \fg} \expect{\fg\fg\fg}^{-1} \expect{\fg \bPsi' \Psi'}}
- 1
- \frac{\expect{\bPsi \Psi \Phi^I} \expect{\Phi \Phi \fg}^{-1}_{I J} \expect{\Phi^J \bPsi' \Psi'}}{\expect{\bPsi \Psi \fg} \expect{\fg\fg\fg}^{-1} \expect{\fg \bPsi' \Psi'}} \biggr] \,, \label{eqn:selfforceCFT2}
\end{equation}
up to momentum/polarization-dependent factors in the first and last terms that we have not properly accounted for. The result~\eqref{eqn:selfforceCFT2} is independent of the normalizations of all the vertex operators and of the overall functional determinant factor. Thus, we need not compute these normalizations; it is sufficient to know the mass of $\Psi$ as well as the three-point functions $\expect{ \bPsi \Psi \gamma^A }$, $\expect{ \bPsi \Psi \fg }$, $\expect{ \bPsi \Psi \Phi^I }$, $\expect{ \gamma^A \gamma^B \fg }$, $\expect{ \fg\fg\fg }$, and $\expect{ \Phi^I \Phi^J \fg }$ in any convenient basis.

However, as written~\eqref{eqn:selfforceCFT2} still contains unspecified momentum/polarization dependent factors in the first and third terms that need to be correctly handled to extract the long-range force. To avoid dealing with these factors explicitly, first note that the worldsheet CFT has two components, an ``external'' CFT describing the target space of the $d$-dimensional theory and an ``internal'' CFT describing, e.g., a compactification of the critical string. The external CFT is \emph{universal}, in that it depends only on the spacetime dimension and the class of string theory we are considering (bosonic, heterotic, type II, etc.), whereas the internal CFT encodes all of the details of the theory (e.g., the choice of compactification manifold). For example, for a bosonic string theory in a $d$-dimensional Minkowski background, the external CFT is simply $d$ free bosons, one of which is timelike.

The vertex operators and their $n$-point functions likewise split into external and internal CFT components, where the external components encode the momentum, polarization and other universal properties. Thus, rather than computing the external CFT three-point functions, we fold these factors into the unspecified coefficients of the first and third terms of~\eqref{eqn:selfforceCFT2}, keeping only the internal CFT portions of each expression. Since the unspecified coefficients are universal---i.e., they depend only on external CFT data---we can then fix them by comparing with known results in a representative string theory with the same external CFT.

Thus, to compute the long-range forces on $\Psi$, we need only know its mass as well as the \emph{internal} portions of the three-point functions $\expect{ \bPsi \Psi \gamma^A }$, $\expect{ \bPsi \Psi \fg }$, $\expect{ \bPsi \Psi \Phi^I }$, $\expect{ \gamma^A \gamma^B \fg }$, $\expect{ \fg\fg\fg }$, and $\expect{ \Phi^I \Phi^J \fg }$ in any convenient basis. The rest will be fixed by comparing with a specific string theory of the same type where the long-range forces are known, such as bosonic string theory compactified on a torus, see appendix~\ref{app:toroidal}.

With this general strategy in mind, we now lay out our proof in detail for the bosonic string, leaving the superstring case to forthcoming work~\cite{BMSuperstring}.

\section{Proof for the bosonic string}
\label{sec:3}
In this section, we prove the WGC in bosonic string theory, following the strategy outlined above. We first describe how the gauge bosons, graviton, massless scalars and massive charged particles appearing in the low-energy EFT match with the bosonic worldsheet description (\S\ref{subsec:worldsheetEFT}). We then review and sharpen the arguments of~\cite{Heidenreich:2016aqi,Montero:2016tif}, showing that a tower of massive charged particles with $\alpha' m^2 \sim Q^2$ exists (\S\ref{subsec:specflow}). Next, we apply the strategy laid out above to determine the long-range forces between these charged particles (\S\ref{subsec:bosonic-univ}). Finally, we connect self-repulsiveness and superextremality to show that these particles are superextremal (\S\ref{subsec:selfrepulsive}).

\subsection{Worldsheet spectrum and interactions} \label{subsec:worldsheetEFT}

We begin by reviewing some details of the bosonic worldsheet theory for ease of reference and to establish conventions.

The worldsheet of an arbitrary bosonic string theory in $D$ flat target space dimensions is described by a compact, unitary, modular-invariant ``internal'' CFT of central charge $c=\tilde{c} = 26-D$ together with $D$ ``external'' free bosons (one timelike) and the conformal ghosts. For concreteness, we use old covariant quantization, where the ghosts are ignored and physical states are required to be Virasoro primaries of weight $(h,\tilde{h})~=~(1,1)$, with primaries differing by a null state declared to be equivalent.\footnote{Null states arise because the external CFT is non-unitary due to the timelike free boson.}
These states can be decomposed into ``external'' and ``internal'' tensor factors:
\begin{equation}
\ket{\Psi} = \sum_i \ket{\Psi_i}_{\text{ext}} \otimes  \ket{\Psi_i}_{\text{int}} \,,
\end{equation}
where $\ket{\Psi_i}_{\text{ext}}$ and $\ket{\Psi_i}_{\text{int}}$ are (not necessarily primary) states of the external and internal CFT, respectively, with weights adding to $(1,1)$. External states are constructed using raising operators $\alpha_{-m}^\mu$ and $\tilde{\alpha}_{-n}^\nu$ ($m,n>0$, of weight $(m,0)$ and $(0,n)$, respectively) applied to the target-space momentum-$k^\mu$ ground state $\ket{0; k}$ (itself of weight $h = \tilde{h} = \frac{\alpha'}{4} k^2$). 

Massless bosons mediate long-range forces in the low energy effective theory. Since the internal-CFT states have non-negative weights by unitarity, these arise in four possible ways: 
\begin{equation}\label{eq:worldhseetstates}
\begin{alignedat}{5}
\mathrm{I})&\quad&& \a_{-1}^\mu  \tilde{\a}_{-1}^\nu \ketket{0; k}{1} \,, &\qquad&&  \mathrm{II}) &\quad&& \a_{-1}^\mu  \ketket{0; k}{\tilde{J}}\,,  \\
\mathrm{III})&&& \tilde{\a}_{-1}^\nu \ketket{0; k}{J}\,, &&& \mathrm{IV})&&&  \ketket{0; k}{\varphi} \,,
\end{alignedat}
\end{equation}
where $k^2=0$ and $1$, $J(z)$, $\tilde{J}(\bar{z})$, and $\varphi(z,\bar{z})$ are internal-CFT operators of weight $(0,0)$, $(1,0)$, $(0,1)$, and $(1,1)$, respectively, all Virasoro primaries,\footnote{All states with $h,\tilde{h} \le 1$ are Virasoro primaries, since by unitarity $L_n \ket{\psi} = 0$ for $n > h_{\psi}$, whereas for $h_{\psi}=1$, $\lVert L_1 \ket{\psi} \rVert^2 = \braOket{\psi}{L_{-1}L_1}{\psi} = 0 \Longrightarrow L_1\ket{\psi}=0$ using the fact that $\ket{\chi} \equiv L_1 \ket{\psi}$ has $h_\chi = 0$, hence $\lVert L_{-1} \ket{\chi} \rVert^2 = \braOket{\chi}{L_1 L_{-1}}{\chi} = \braOket{\chi}{[L_1, L_{-1}]}{\chi} = 2 h_{\chi} \braket{\chi}{\chi} = 0 \Longrightarrow L_{-1}\ket{\chi} = L_{-1} L_1 \ket{\psi}= 0$.}
 and $\ket{1}$, $\ket{J}$, $\ket{\tilde{J}}$, and $\ket{\varphi}$ are the corresponding states. Any compact unitary CFT has a unique weight-$(0,0)$ operator (the identity operator), hence (I) has a unique instance in every theory, giving the target-space graviton, dilaton, and (for oriented strings) NSNS two-form $B_{\mu \nu}$. Likewise, weight-$(1,0)$ and $(0,1)$ operators are left and right-moving conserved currents, hence (II) and (III) give a target-space gauge boson $A_\mu^a$ for each continuous global symmetry of the internal CFT. Finally, (IV) gives a massless target-space scalar for each marginal ($h=\tilde{h}=1$) operator in the internal CFT.\footnote{The scalar is a modulus if and only if $\varphi$ is exactly marginal. However, all massless scalars (not just moduli) contribute to long-range forces, so we will \emph{not} assume exact marginality.}

Similarly, for each internal-CFT primary $\ket{\psi}$ charged under the worldsheet global symmetries, there is a family of target-space charged particles involving $\ket{\psi}$ and its descendants. 
The simplest of these involve no $\ket{\psi}$-descendants, and take the form
\begin{equation}\label{eqn:GeneralChargeState}
\ket{\Psi} = (\ldots) \ketket{0;k}{\psi} \,,
\end{equation}
where $(\ldots)$ denotes some combination of external-CFT raising operators $\alpha_{-m}^\mu$ and $\tilde{\alpha}_{-n}^\nu$ of weight $(h_\alpha,\tilde{h}_\alpha)$ satisfying $h_{\alpha} - \tilde{h}_{\alpha} = -h_{\psi} + \tilde{h}_\psi$ (where $h_\psi - \tilde{h}_\psi \in \mathbb{Z}$ follows from modular invariance). Since $\ket{\Psi}$ must have weight $(1,1)$ and $h_\alpha, \tilde{h}_\alpha$ are non-negative integers, we read off the spectrum
\begin{equation}
\frac{\alpha'}{4} m^2 = \max(h_\psi, \tilde{h}_\psi) +N-1 \,, \label{eqn:psimass}
\end{equation}
where $N$ is any non-negative integer.\footnote{We assume $D \ge 3$, ensuring that the Virasoro constraints can be solved for any $N \ge 0$ (this can be seen, e.g., in light-cone gauge).} The physical states involving descendants of $\ket{\psi}$ are more complicated to write, but fit within the same spectrum.

Each state in the spectrum has a CPT conjugate with vertex operator $V_{\text{CPT}}(k) = \ov{V}(-k)$, where $\ov{\mathcal{O}}$ denotes the \emph{conjugate} of a local operator $\mathcal{O}$. To define this, first consider the \emph{Euclidean adjoint}, which is the Wick rotation of the ordinary adjoint. The Euclidean adjoint is an anti-linear map that incorporates reversal of Euclidean time, so that $w^\dag = w$ in cylindrical quantization (where $w = \sigma + i t_E$) whereas $z^\dag = 1/z$ in radial quantization (where $z=e^{-i w}$).\footnote{Note that the Euclidean adjoint of a holomorphic operator remains holomorphic despite the anti-linearity, consistent with the fact that the adjoint of a left-mover is a left-mover on the Lorentzian worldsheet.}

A Lorentzian self-adjoint operator $\mathcal{O}^\dag(x) = \mathcal{O}(x)$ Wick rotates to a cylindrically quantized Euclidean self-adjoint operator $[\mathcal{O}(w,\bar{w})]^\dag = \mathcal{O}(w,\bar{w})$. More generally, we define the conjugate local operator $\ov{\mathcal{O}}(w,\bar{w})$ to equal $[\mathcal{O}(w,\bar{w})]^\dag$ \emph{in cylindrical quantization}. Then if $\mathcal{O}$ is a weight $(h,\tilde{h})$ primary operator, we obtain
\begin{equation}
[\mathcal{O}(z,\bar{z})]^\dag = (-z^2)^{h} (-\bar{z}^2)^{\tilde{h}} \bar{\mathcal{O}}(z,\bar{z}) \,, \label{eqn:selfadjointradial}
\end{equation}
in radial quantization, since $z^\dag = 1/z$. Thus, in radial quantization a Lorentzian self-adjoint operator $\mathcal{O}$ is self-conjugate, $\ov{\mathcal{O}}(z,\bar{z}) = \mathcal{O}(z,\bar{z})$ but not Euclidean self-adjoint.

Just as moving an operator $\mathcal{O}$ to the origin in radial quantization creates a ket (initial state) $|\mathcal{O}\rangle$ by the state-operator correspondence, likewise taking $\mathcal{O}^\dag$ to infinity on the Riemann sphere creates the corresponding bra (final state) $\langle\mathcal{O}|$, i.e.,
\begin{equation}
\lim_{z\to0} \mathcal{O}(z,\bar{z}) \longrightarrow |\mathcal{O}\rangle \qquad \text{is equivalent to} \qquad \lim_{z\to\infty} [\mathcal{O}(z,\bar{z})]^\dag \longrightarrow \langle \mathcal{O}| \,. \label{eqn:braradial}
\end{equation}
Thus, radially-quantized matrix elements with $n-2$ local operators inserted are related to $n$-point functions on the complex plane via\footnote{When applying this expression to a string amplitude, $V_1$ and $V_n$ must be fixed (rather than integrated) vertex operators. Then $h_1 = \tilde{h}_1 = 0$ in BRST quantization due to the $bc$ ghost contribution, and the $z$-dependent prefactor drops out. Equivalently, in old covariant quantization $h_1 = \tilde{h}_1 = 1$ and the $|z_1|^4$ prefactor is part of the gauge-fixing determinant.}
\begin{equation}
\langle \bar{V}_1 | V_2(z_2,\bar{z}_2) \cdots |V_n\rangle = \lim_{z_1 \to \infty} (-z_1^2)^{h_1} (-\bar{z}_1^2)^{\tilde{h}_1} \langle V_1(z_1,\bar{z}_1) V_2(z_2,\bar{z}_2) \cdots  V_n(0) \rangle \,. \label{eqn:matrixElementCorrelator}
\end{equation}
For neutral force carriers, the internal vertex operators $1$, $J$, $\tilde{J}$, $\varphi$ can be chosen to be self-conjugate, hence accounting for external factors their CPT conjugates differ only by their helicities. For charged states, the internal vertex operator $\psi$ has a conjugate $\bar{\psi}$ of the opposite charge.

\subsection{Modular invariance and spectral flow} \label{subsec:specflow}

We now show that the spectrum of a bosonic string theory with a conserved current $J(z)$ contains an infinite tower of charged particles with masses $\alpha' m^2 \sim Q^2$. A heuristic argument, first articulated in~\cite{ArkaniHamed:2006dz}, is as follows: the conserved current satisfies $\ov\partial J = 0$ as an operator equation. This suggests that we can write $J(z) = \partial \phi$ for a compact free boson $\phi(z) \sim \phi(z) + 2\pi f$. An infinite tower of charged operators $e^{i n \phi(z)/f}$, $n\in\mathbb{Z}$ can then be constructed.

This can be made precise using modular invariance, as shown in~\cite{Heidenreich:2016aqi,Montero:2016tif}. The partition function of the internal CFT on the torus $w\cong w+2\pi \cong w + 2\pi \tau$ takes the form
\begin{equation}
Z(\tau, \bar{\tau}) = \Tr\Bigl[ q^{L_0 - \frac{c}{24}} \bar{q}^{\tilde{L}_0 - \frac{\tilde{c}}{24}}\Bigr]  = \sum q^{h-\frac{c}{24}} \bar{q}^{\tilde{h}-\frac{\tilde{c}}{24}} \,, \qquad q \df e^{2\pi i \tau} \,, \label{eqn:Ztau}
\end{equation}
where the sum includes both primaries and descendants. Modular invariance requires
\begin{equation}
Z(\tau, \bar{\tau}) = Z(\tau+1, \bar{\tau}+1) = Z\Bigl(-\frac{1}{\tau}, -\frac{1}{\bar{\tau}}\Bigr) \,. \label{eqn:ZtauMT}
\end{equation}
(This implies, for instance, that $h - \tilde{h} \in \mathbb{Z}$ for every state in the spectrum.)
Likewise, the torus one-point function of a weight-$(h,\tilde{h})$ primary transforms as
\begin{equation}
Z[\mathcal{O}(w,\bar{w})](\tau,\bar{\tau}) = Z[\mathcal{O}(w,\bar{w})](\tau+1,\bar{\tau}+1) = \frac{1}{\tau^h \bar{\tau}^{\tilde{h}}} Z[\mathcal{O}(w/\tau,\bar{w}/\bar{\tau})]\Bigl(-\frac{1}{\tau},-\frac{1}{\bar\tau}\Bigr) \,. \label{eqn:ZOtauMT}
\end{equation}
More generally, for $n$ primary operators\footnote{It is sufficient for $\mathcal{O}_i$ to be a \emph{conformal} (not necessarily Virasoro) primary, since the M\"obius transformation $w' = w/\tau$ lies within the conformal group $SL(2,\mathbb{C})$.} $\mathcal{O}_i$ of weights $(h_i, \tilde{h}_i)$ inserted at positions $w_i$,
\begin{multline}
Z[\mathcal{O}_1(w_1,\bar{w}_1) \cdots \mathcal{O}_n(w_n,\bar{w}_n)](\tau,\bar{\tau}) = Z[\mathcal{O}_1(w_1,\bar{w}_1) \cdots \mathcal{O}_n(w_n,\bar{w}_n)](\tau+1,\bar{\tau}+1) \\
= \frac{1}{\tau^{\sum_i h_i} \bar{\tau}^{\sum_i \tilde{h}_i}} Z\Bigl[\mathcal{O}_1\Bigl(\frac{w_1}{\tau},\frac{\bar{w}_1}{\bar{\tau}}\Bigr)\cdots \mathcal{O}_n \Bigl(\frac{w_n}{\tau},\frac{\bar{w}_n}{\bar{\tau}}\Bigr)\Bigr]\Bigl(-\frac{1}{\tau},-\frac{1}{\bar\tau}\Bigr) \,. \label{eqn:ZOntauMT}
\end{multline}

To leverage these constraints to learn about the charged spectrum, we introduce chemical potentials into the partition function. We assume that the global symmetry group $G$ of the internal CFT is compact.\footnote{This is closely related to (and perhaps a consequence of) the compactness of the internal CFT~\cite{Harlow:2018tng,Benjamin:2020swg}, as well as to the conjectured compactness of gauge symmetries in quantum gravity~\cite{Banks:2010zn}.}
Let $J^a(w)$, $a=1,\ldots,n_L$ and $\tilde{J}^{\tilde{a}}(\bar{w})$, $\tilde{a} = 1,\ldots,n_R$ be the left and right moving conserved currents associated to the $U(1)^{n_L + n_R}$ maximal torus of $G$. Their OPEs take the general form
\begin{align}
J^a(w) J^b(0) &\sim - \frac{k^{a b}}{w^2} \,, & J^a(w) \tilde{J}^{\tilde{b}}(0)&\sim 0\,, & \tilde{J}^{\tilde{a}}(\bar{w}) \tilde{J}^{\tilde{b}}(0) &\sim - \frac{\tilde{k}^{\tilde{a} \tilde{b}}}{\bar{w}^2} \,.
\end{align}
Taking the currents to be Euclidean self-adjoint by convention, $J^a(w)^\dag = J^a(w)$, $\tilde{J}^{\tilde{a}}(\bar{w})^\dag = \tilde{J}^{\tilde{a}}(\bar{w})$, unitarity implies that the symmetric matrices $k^{a b}$ and $\tilde{k}^{\tilde{a} \tilde{b}}$ are real and positive-definite, so we normalize the currents such that
\begin{equation}
J^a(w) J^b(0) \sim - \frac{\delta^{a b}}{w^2} \,, \qquad \tilde{J}^{\tilde{a}}(\bar{w}) \tilde{J}^{\tilde{b}}(0) \sim - \frac{\delta^{\tilde{a} \tilde{b}}}{\bar{w}^2} \,. \label{eqn:JJOPE}
\end{equation}
Now consider the line operator
\begin{equation}
U_\Sigma(\mu, \tilde{\mu}) = e^{i \mu_a \oint_\Sigma J^a(w) d w - i \tilde{\mu}_{\tilde{a}} \oint_\Sigma \tilde{J}^{\tilde{a}}(\bar{w}) d \bar{w}}\,, \qquad \mu_a, \tilde{\mu}_{\tilde{a}} \in \mathbb{C} \,.
\end{equation}
For $\mu_a, \tilde{\mu}_{\tilde{a}} \in \mathbb{R}$ the operator $U_\Sigma(\mu, \tilde{\mu})$ is the symmetry operator associated to $U(1)^{n_L + n_R}$. Since the group is compact, we have
\begin{equation}
U_\Sigma(\mu, \tilde{\mu})=U_\Sigma(\mu + \rho, \tilde{\mu}+\tilde{\rho}) \,, \qquad \forall (\rho, \tilde{\rho}) \in \Gamma \,, \label{eqn:muperiod}
\end{equation}
where $\Gamma \cong \mathbb{Z}^{n_L + n_R}$ is the period lattice and $U(1)^{n_L + n_R} \cong \mathbb{R}^{n_L + n_R} / \Gamma$.\footnote{Although the conserved currents can be split into left and right movers, the period lattice generally \emph{does not} factor into left and right-moving components.}

Inserting $U_\Sigma(\mu,\tilde{\mu})$ on the constant-worldsheet-time circle $\Im w = 0$, we obtain the ``flavored'' partition function:
\begin{equation}
Z(\mu, \tau; \tilde{\mu}, \bar{\tau}) = \sum q^{h-\frac{c}{24}} y^{Q} \bar{q}^{\tilde{h}-\frac{\tilde{c}}{24}} \tilde{y}^{\tilde{Q}} \,, \qquad q \df e^{2\pi i \tau} \,,\;  y^{Q} \df e^{2\pi i \mu_a Q^a} \,,\;  \tilde{y}^{\tilde{Q}} \df e^{-2\pi i \tilde{\mu}_{\tilde{a}} \tilde{Q}^{\tilde{a}}} ,
\end{equation}
where $Q^a$, $\tilde{Q}^{\tilde{a}}$ are the $U(1)^{n_L + n_R}$ charges of the states and $\mu_a$, $\tilde{\mu}_{\tilde{a}}$ are the associated chemical potentials (with fugacities $y$, $\tilde{y}$).

The flavored partition function is a useful avatar for the charged spectrum, but it is not clear a priori how it transforms under modular transformations, since the $S$ transform $\tau \to -1/\tau$ changes the 1-cycle $\Sigma$ on which the symmetry operator is inserted, generating a twisted sector in place of the chemical potential (see figure~\ref{fig:SdualWS}).
\begin{figure}\centering
\includegraphics[width=.7\textwidth]{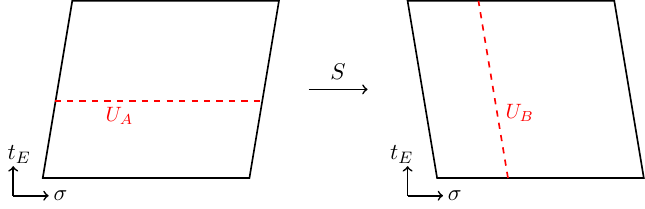}
\caption{The $S$ transform $\tau \to -1/\tau$ exchanges line operators wrapping the A and B cycles of the torus ($w\to w+2\pi$ and $w\to w+2\pi\tau$, respectively). Whereas symmetry operators inserted on the A cycle generate a chemical potential, on the B cycle they have a very different effect, creating a twisted sector. The flavored partition function thus fails to be modular invariant.}
\label{fig:SdualWS}
\end{figure}
Nevertheless, it has been argued that the flavored partition function transforms according to a simple, universal rule~\cite{Benjamin:2016fhe}:
\begin{equation}
Z(\mu,\tau;\tilde{\mu},\bar{\tau}) = Z(\mu,\tau+1;\tilde{\mu},\bar{\tau}+1) = e^{-\frac{\pi i}{\tau} \mu^2 +\frac{\pi i}{\bar{\tau}} \tilde{\mu}^2} Z\biggl(\frac{\mu}{\tau},-\frac{1}{\tau};\frac{\tilde{\mu}}{\bar{\tau}},-\frac{1}{\bar{\tau}}\biggr) \,, \label{eqn:ZmuMT}
\end{equation}
where $\mu^2 \df \mu_a \mu_b \delta^{a b}$, $\tilde{\mu}^2 \df \tilde\mu_{\tilde{a}} \tilde\mu_{\tilde{b}} \delta^{\tilde{a} \tilde{b}}$.

\subsubsection*{Proof of \eqref{eqn:ZmuMT}}

Since the formula~\eqref{eqn:ZmuMT} is crucial to our argument, we now provide an explicit and (to our knowledge) novel proof of it for completeness. This complements the more telegraphic argument given in~\cite{Benjamin:2016fhe}.

Consider an $n$-point function of (left-moving) conserved currents $J^a(w_i)$; since the currents are weight $(1,0)$ primaries,
\begin{multline}
Z[J^{a_1}(w_1) \cdots J^{a_n}(w_n)](\tau) = Z[J^{a_1}(w_1) \cdots J^{a_n}(w_n)](\tau+1) \\ = \frac{1}{\tau^n} Z\Bigl[J^{a_1}\Bigl(\frac{w_1}{\tau}\Bigr) \cdots J^{a_n}\Bigl(\frac{w_n}{\tau}\Bigr)\Bigr]\Bigl(-\frac{1}{\tau}\Bigr) \,, \label{eqn:JnMT}
\end{multline}
according to \eqref{eqn:ZOntauMT}, where we suppress $\bar{\tau}$'s henceforward for notational simplicity. Due to the OPE \eqref{eqn:JJOPE}, this $n$-point function is singular when the $w_i$ are not all distinct. These singularities can be removed by a normal ordering procedure, see appendix~\ref{app:NormalOrder} for a general discussion. Specifically, we consider the ``modular'' normal ordering:\footnote{Since the infinite double sum vanishes for $w_1=w_2$, this coincides with the conformal normal ordering $\Lcolon (\cdots) \Rcolon$ defined in~\S\ref{subsec:SugawaraOPEs} when the operators are coincident.}
\begin{align}
\Lcolon J^a(w_1) J^b(w_2) \Rcolon_\tau &\df J^a(w_1) J^b(w_2) + \frac{\delta^{a b}}{(w_1-w_2)^2} \nonumber \\
&\noeq \qquad +\delta^{a b} \sum_{(m,n)\ne (0,0)} \biggl(\frac{1}{[w_1-w_2+2\pi(m+n \tau)]^2} - \frac{1}{[2\pi (m+n\tau)]^2}\biggr) \nonumber \\
&= J^a(w_1) J^b(w_2) + \frac{\delta^{a b}}{4 \pi^2} \wp\Bigl(\frac{w_1-w_2}{2\pi}\Big|\tau\Bigr) \,,
\end{align}
where $\wp(u|\tau) \equiv \frac{1}{u^2} +  \sum_{(m,n)\ne (0,0)} \bigl[\frac{1}{(u+m+n\tau)^2} - \frac{1}{(m+n\tau)^2}\bigr]$ is the Weierstrass $\wp$ function. The infinite sum over poles removes not just the singularity where $w_1 \to w_2$, but also those where $w_1 \to w_2 + 2\pi$, $w_1 \to w_2 + 2\pi\tau$, etc., which arise due to the periods of the torus. The $w$-independent counterterms make the double sum absolutely convergent, which ensures modularity, $\wp(u|\tau) = \wp(u|\tau+1) = \frac{1}{\tau^2}\wp(u/\tau|-1/\tau)$ in addition to the manifest periodicity $\wp(u|\tau) = \wp(u+1|\tau) = \wp(u+\tau|\tau)$.

Since $\Lcolon J^a(w_1) J^b(w_2) \Rcolon_\tau$ is free from singularities, so too is the modular-normal-ordered $n$-point function:
\begin{equation}
Z\bigl[\Lcolon J^{a_1}(w_1) \cdots J^{a_n}(w_n)\Rcolon_\tau\bigr](\tau) \,, \label{eqn:ZmodNO}
\end{equation}
see the general proof given in appendix~\ref{app:NormalOrder}. The modular properties of $\frac{1}{4 \pi^2} \wp\Bigl(\frac{w_1-w_2}{2\pi}\Big|\tau\Bigr)$ match those of $J^a(w_1) J^b(w_2)$, so this modular-normal-ordered $n$-point function retains the modular transformation~\eqref{eqn:JnMT}, in particular
\begin{equation}
Z\bigl[\Lcolon (\mu_a J^a(0))^n \Rcolon_\tau\bigr](\tau) = Z\bigl[\Lcolon (\mu_a J^a(0))^n \Rcolon_\tau\bigr](\tau+1) = \frac{1}{\tau^n} Z\bigl[\Lcolon (\mu_a J^a(0))^n \Rcolon_\tau\bigr](-1/\tau) \,.
\end{equation}
These transformation rules are elegantly encoded in the generating function
\begin{equation}
\mathcal{Z}(\mu,\tau) \df Z\Bigl[ \Lcolon e^{2 \pi i \mu_a J^a(0)} \Rcolon_\tau \Bigr](\tau) \quad \text{where} \quad \mathcal{Z}(\mu,\tau+1) = \mathcal{Z}(\mu/\tau,-1/\tau) = \mathcal{Z}(\mu,\tau) \,. \label{eqn:modularCalZ}
\end{equation}

To relate $\mathcal{Z}(\mu,\tau)$ to the flavored partition function $Z(\mu,\tau)$, note that by construction~\eqref{eqn:ZmodNO} is an entire function of the insertion positions $w_i$ that is doubly periodic $w_i \cong w_i + 2\pi \cong w_i + 2\pi \tau$. Since there are no non-constant entire functions on the torus, such a function can only be a constant, and we can freely average each insertion over the A cycle, replacing currents $J^a(w)$ with charge operators $J_0^a = \frac{1}{2\pi} \int_0^{2\pi}\! J^a(w) d w$, whose eigenvalues are the charges $Q^a$. Note, however, that the normal-ordering contributions cannot be neglected, as the Weierstrass $\wp$ function has a non-zero average:\footnote{This can be derived by integrating the double-sum definition of $\wp$ term-by-term, then dropping terms for which the inner sum $\sum_{m = \infty}^\infty (\cdots)$ vanishes.}
\begin{equation}
\int_0^1 \wp(u|\tau) du = -\sum_{n=-\infty}^\infty \,\sideset{}{'}\sum_{m=-\infty}^\infty \frac{1}{(m+n\tau)^2} = -\frac{\pi^2}{3} E_2(\tau) \,, \label{eqn:WeierstrassPAvg}
\end{equation}
where the primed sum omits the term $m=0$ when $n=0$ and $E_2(\tau)$ is the holomorphic Eisenstein series of weight 2, which is a quasimodular form. Since the double sum is not absolutely convergent, the order of summation matters, which is related to the quasimodularity of $E_2(\tau)$.

Accounting for \eqref{eqn:WeierstrassPAvg}, one finds that, e.g.,
\begin{equation}
Z\bigl[\Lcolon J^{a}(0) J^{b}(0)\Rcolon_\tau\bigr](\tau) = Z\biggl[J_0^a J_0^b- \frac{\delta^{a b}}{12}  E_2(\tau) \biggr](\tau) \,.
\end{equation}
A similar calculation was done in~\cite{Benjamin:2016fhe}; since the left-hand-side is a weight-two modular form, the known quasimodular properties of $E_2(\tau)$ determine the modular transformation of $Z[J_0^a J_0^b ](\tau)$. While this can be argued to extend to higher orders in the charge operators in a universal way~\cite{Benjamin:2016fhe}, we can be more explicit as follows.

Define the ``zero-mode'' normal ordering:\footnote{This reduces to creation/annihilation normal-ordering $\lcirc (\cdots) \rcirc$ in the cylindrical limit $\tau \to i \infty$, see~\eqref{eqn:CAnormalorder}.}
\begin{align}
\lcirc J^a(w_1) J^b(w_2) \rcirc_\tau &\df J^a(w_1) J^b(w_2) + \delta^{a b} \sum_{n=-\infty}^\infty \sum_{m=-\infty}^\infty \frac{1}{[w_1-w_2+2\pi(m+n \tau)]^2}  \nonumber \\
&= J^a(w_1) J^b(w_2) + \frac{\delta^{a b}}{4 \pi^2} \wp^{(0)} \Bigl(\frac{w_1-w_2}{2\pi}\Big|\tau\Bigr) \,,
\end{align}
where $\wp^{(0)}(u|\tau) \equiv \wp(u|\tau) + \frac{\pi^2}{3} E_2(\tau)$ is the Weierstrass $\wp$ function shifted to have zero average along the A cycle, $\int_0^1 \wp^{(0)}(u|\tau) du = 0$. The $n$-point function of zero-mode-normal-ordered currents on the torus is still an entire, doubly periodic function---hence a constant---so we can replace the current operators with charge operators as before, but now the normal-ordering contributions drop out due to $\int_0^1 \wp^{(0)}(u|\tau) du=0$, leaving
\begin{equation}
Z\bigl[\lcirc J^{a_1}(0) \cdots J^{a_n}(0)\rcirc_\tau\bigr](\tau) = Z[J_0^{a_1} \cdots J_0^{a_n} ](\tau)\,.
\end{equation}
Thus, the flavored partition function can be written as
\begin{equation}
Z(\mu,\tau) = Z\Bigl[ \lcirc e^{2 \pi i \mu_a J^a(0)} \rcirc_\tau \Bigr](\tau)\,.
\end{equation}

Relating this to the modular function $\mathcal{Z}(\mu,\tau)$ is now a straightforward exercise in ``reordering'' the normal-ordered exponential. Using~\eqref{eqn:reorder} and
$\lcirc J^a(w_1) J^b(w_2) \rcirc_\tau = \Lcolon J^a(w_1) J^b(w_2) \Rcolon_\tau + \frac{\delta^{a b}}{12} E_2(\tau)$, one finds
\begin{multline}
\lcirc e^{2 \pi i \mu_a J^a(0)} \rcirc_\tau = \exp\biggl[\frac{E_2(\tau)}{24} \delta^{a b} \iint d w_1 d w_2 \frac{\delta}{\delta J^a(w_1)} \frac{\delta}{\delta J^b(w_2)} \biggr] \Lcolon e^{2 \pi i \mu_a J^a(0)} \Rcolon_\tau \\[3pt]
= \exp\biggl[-\frac{\pi^2}{6} E_2(\tau) \mu^2\biggr] \Lcolon e^{2 \pi i \mu_a J^a(0)} \Rcolon_\tau \,,
\end{multline}
where $\mu^2 \df \mu_a \mu_b \delta^{a b}$ as before. Thus,
\begin{equation}
Z(\mu,\tau) = \exp\biggl[-\frac{\pi^2}{6} E_2(\tau) \mu^2\biggr] \mathcal{Z}(\mu,\tau) \,.
\end{equation}
Using \eqref{eqn:modularCalZ} together with $E_2 (\tau) = E_2 (\tau + 1) = \frac{1}{\tau^2} E_2 (- 1 / \tau) + \frac{6 i}{\pi \tau}$, we finally obtain
\begin{equation}
Z(\mu,\tau) = Z(\mu,\tau+1) = e^{-\frac{\pi i}{\tau} \mu^2} Z\biggl(\frac{\mu}{\tau},-\frac{1}{\tau}\biggr) \,,
\end{equation}
which matches \eqref{eqn:ZmuMT} in the case $\tilde{\mu}_a = 0$.
Incorporating the right-moving chemical potentials is a straightforward exercise that we leave to the reader.

To summarize, our proof relies on the contrasting properties of two normal orderings, one used to construct a modular function $\mathcal{Z}(\mu,\tau)$ and the other to construct the flavored partition function $Z(\mu,\tau)$ itself. By relating the two normal orderings, we read off the modular transformation of the flavored partition function~\eqref{eqn:ZmuMT}.

\subsubsection*{Application of \eqref{eqn:ZmuMT}}

With~\eqref{eqn:ZmuMT} established, we can now argue rigorously for an infinite tower of charged states.
Combining with~\eqref{eqn:muperiod}, we obtain the quasiperiodicity condition~\cite{Heidenreich:2016aqi}
\begin{equation}
Z(\mu + \tau \rho, \tau, \tilde{\mu} + \bar{\tau} \tilde{\rho}, \bar{\tau}) = e^{-2\pi i \mu \rho - \pi i \rho^2 \tau + 2 \pi i \tilde{\mu}\tilde{\rho} + \pi i \tilde{\rho}^2 \bar\tau} Z(\mu, \tau, \tilde{\mu}, \bar{\tau}) \,, \qquad (\rho,\tilde{\rho}) \in \Gamma \,. \label{eqn:quasiperiod}
\end{equation}
This can be written more usefully in terms of $\hh \df h-\frac{1}{2} Q^2$, $\hht \df \tilde{h}-\frac{1}{2} \tilde{Q}^2$:
\begin{equation}
Z = \sum q^{\hh-\frac{c}{24}} \bar{q}^{\hht-\frac{\tilde{c}}{24}} q^{\frac{1}{2}Q^2} y^{Q} \bar{q}^{\frac{1}{2} \tilde{Q}^2} \tilde{y}^{\tilde{Q}} =  \sum q^{\hh-\frac{c}{24}} \bar{q}^{\hht-\frac{\tilde{c}}{24}} q^{\frac{1}{2}(Q+\rho)^2} y^{Q+\rho} \bar{q}^{\frac{1}{2} (\tilde{Q}+\tilde{\rho})^2} \tilde{y}^{\tilde{Q}+\tilde{\rho}} \,.
\end{equation}
In particular, matching the $q$-expansions on both sides, we conclude that the spectrum is invariant under the shift
\begin{equation}
(Q,\tilde{Q}) \to (Q,\tilde{Q}) + (\rho,\tilde{\rho}) \,,\qquad (\rho,\tilde{\rho}) \in \Gamma \,, \qquad \text{with $\hh$, $\hht$ held fixed.} \label{eqn:SpecFlow}
\end{equation}
Thus, starting with the state $\ket{1}$ corresponding to the identity operator, we obtain an infinite tower of charged states $\ket{\psi}$ with weights 
\begin{equation}
h = \frac{1}{2}Q^2, \qquad \tilde{h} = \frac{1}{2}\tilde{Q}^2, \qquad \text{for all charges} \quad (Q,\tilde{Q}) \in \Gamma \,, \label{eqn:psiweights}
\end{equation}
as first shown in~\cite{Heidenreich:2016aqi,Montero:2016tif}. Note that these states have the minimum possible weights given their charges, since if a lower weight state existed then by applying~\eqref{eqn:SpecFlow} in reverse we would obtain a neutral state of negative weight.\footnote{One can reach the same conclusion using the Sugawara construction, which implies the unitarity bounds $h \ge \frac{1}{2} Q^2$, $\tilde{h} \ge \frac{1}{2} \tilde{Q}^2$, see~\S\ref{subsec:unitarity}.} 

Several comments are in order. First, the charge lattice
\begin{equation}
\Gamma^\ast = \{ (Q, \tilde{Q})\; |\; \forall\,(\rho,\tilde\rho)\in\Gamma, \,Q\rho-\tilde{Q}\tilde{\rho} \in \mathbb{Z} \}
\end{equation}
is in general distinct from the period lattice $\Gamma$, but~\eqref{eqn:SpecFlow} implies that $\Gamma \subseteq \Gamma^\ast$. Likewise, $h - \tilde{h} \in \mathbb{Z}$ implies that $Q^2 - \tilde{Q}^2 \in 2 \mathbb{Z}$ for all $(Q,\tilde{Q}) \in \Gamma$, hence the period lattice $\Gamma$ is an even integral lattice; only in special cases is it self-dual, $\Gamma^\ast = \Gamma$. More generally, the finite abelian group $K = \Gamma^\ast / \Gamma$ plays a role similar to that of the level of a Kac-Moody algebra.

Second,~\eqref{eqn:SpecFlow} is an example of spectral flow, as follows. Insert $U_\Sigma(\mu, \tilde{\mu})$, $\mu_a,\tilde\mu_\alpha \in \mathbb{R}$, on a temporal circle as in figure~\ref{fig:SdualWS}, generating a twisted sector. When $(\mu,\tilde\mu)$ lie on the period lattice we recover the untwisted sector, but adiabatically increasing $(\mu,\tilde\mu)$ from zero to a point on the period lattice may induce a spectral flow; for instance, the ground state may adiabatically transform into an excited state.
In fact, a modular transformation relates this process to the quasiperiod~\eqref{eqn:quasiperiod} of the flavored partition function, so~\eqref{eqn:SpecFlow} describes the resulting spectral flow. This flow
 provides a physical mechanism to generate the entire tower of charged states from the vacuum.

Finally, the foregoing discussion is intimately connected to the Sugawara construction, to be discussed in the next subsection. In particular, $\hh$ and $\hht$ are the eigenvalues of the Virasoro operators $\hL_0$ and $\hLt_0$ associated to the ``remainder'' CFT left over when the conserved current portion is removed.

\bigskip

Therefore, per~\eqref{eqn:psimass}, any bosonic string theory contains an infinite tower of charged particles of mass
\begin{equation}
\frac{\alpha'}{4} m^2 = \frac{1}{2} \max(Q^2, \tilde{Q}^2) -1 \,, \qquad \text{for all $(Q,\tilde{Q}) \in \Gamma$.} \label{eqn:towermass}
\end{equation}
This does not yet prove the WGC (nor the stronger, sublattice WGC) as we have not shown that the particles in the tower are superextremal. In simple examples, such as toroidal compactifications of the heterotic string~\cite{Sen:1994eb} or orbifolds thereof~\cite{Heidenreich:2016aqi},
\begin{equation}
\frac{\alpha'}{4} M^2_{\text{BH}} \ge \frac{1}{2} \max(Q^2, \tilde{Q}^2) \,. \label{eqn:ExtHetOrbifold}
\end{equation}
The central result of our paper is that this holds more generally, as we will show in~\S\ref{subsec:selfrepulsive}.

\subsection{Long-range forces and the RFC}\label{subsec:bosonic-univ}

We now calculate the long-range forces between the particles in the tower~\eqref{eqn:towermass} that we have just shown to exist.

Using~\eqref{eqn:selfforceCFT2},
the long-range force on a particle $\Psi$ (with antiparticle $\bar{\Psi}$) can be calculated from three-point functions on the Riemann sphere of the form:
\begin{equation}
  \langle \bar{\Psi} \Psi \gamma^A \rangle, \langle \bar{\Psi} \Psi \mathfrak{g} \rangle,
  \langle \bar{\Psi} \Psi \Phi^I \rangle, \langle \gamma^A \gamma^B \mathfrak{g} \rangle,
  \langle \mathfrak{g}\mathfrak{g}\mathfrak{g} \rangle, \langle \Phi^I \Phi^J \mathfrak{g} \rangle \label{eqn:3ptNeeded}
\end{equation}
where $\gamma^A$, $\mathfrak{g}$ and $\Phi^I$ are
gauge boson, graviton, and massless\footnote{Since we are considering bosonic string theory, there is inevitably at least one tachyon (scalar with mass $m^2 < 0$) in the spectrum. Although interactions mediated by tachyons do not fall off exponentially like those mediated by scalars with $m^2>0$, they behave differently than those mediated by massless scalars, and we do not consider them here. See~\S\ref{sec:tachyons} for further discussion.} scalar vertex operators, respectively, and $\Psi$ is the vertex operator of a charged particle in the tower~\eqref{eqn:towermass}.

To evaluate these three-point functions, note that each of the vertex operators in question is a
tensor product of external and internal factors, hence the three-point functions
factor:
\begin{equation}
  V_i = V_i^{\mathrm{ext}} \otimes V_i^{\mathrm{int}} \qquad \Rightarrow \qquad
  \expect{ V_1 V_2 V_3 } = \expect{ V_1^{\mathrm{ext}} V_2^{\mathrm{ext}}
  V_3^{\mathrm{ext}} }_{\mathrm{ext}} \cdot \expect{ V_1^{\mathrm{int}}
  V_2^{\mathrm{int}} V_3^{\mathrm{int}} }_{\mathrm{int}} ,
\end{equation}
where we count the conformal gauge-fixing determinant ($bc$ ghosts) as part of the external CFT.
As the external three-point functions are by definition universal (i.e., they depend only on the external CFT), we need only compute the internal CFT three-point functions,
 fixing the universal factors by comparing to known results for bosonic string theory compactified on a torus (see appendix~\ref{app:toroidal}). Note that, although the entire closed string three-point amplitude does not depend on the positions of the three vertex operators, the internal factor $\expect{ V_1^{\mathrm{int}}
  V_2^{\mathrm{int}} V_3^{\mathrm{int}} }_{\mathrm{int}}$ \emph{will} depend on these positions. This dependence is universal, as it must cancel the position-dependence of $\expect{ V_1^{\mathrm{ext}} V_2^{\mathrm{ext}}
  V_3^{\mathrm{ext}} }_{\mathrm{ext}}$, so it is sufficient to compute the coefficient of this universal, position-dependent factor.

Per the discussion in~\S\ref{subsec:worldsheetEFT}, up to normalization the internal CFT vertex operators take the form
\begin{align}
\fg_{\text{int}} &= 1\,, &
\Phi^{0}_{\text{int}} &= 1\,, &
\Phi^i_{\text{int}} &= \varphi^i(z,\bar{z})\,, &
\gamma^A_{\text{int}} &= J^a(z), \tilde{J}^{\tilde{a}}(\bar{z})\,, &
\Psi_{\text{int}} &= \psi(z,\bar{z})\,,
\end{align}
where $\Phi^0$ denotes the dilaton and $\Phi^i$, $i>0$, denote the remaining massless scalars, arising from $(1,1)$ primaries $\varphi^i(z,\bar{z})$ of the internal CFT.

The dilaton does not kinetically mix with the other massless scalars, because
\begin{equation}
\expect{\Phi^{0}_{\text{int}}\, \Phi^i_{\text{int}}\, \fg_{\text{int}} } =  \expect{ \varphi^i } = 0 \,,
\end{equation}
since a sphere one-point function of a non-zero weight operator must vanish.\footnote{Scale invariance requires
$\langle \mathcal{O} (z, \bar{z}) \rangle_{S^2} \propto \frac{1}{z^h
\bar{z}^h}$ for $\mathcal{O}$ of weight $(h, \tilde{h})$, but this is not
translationally invariant unless $h = \tilde{h} = 0$.\label{fn:onepoint}} Thus, both the graviton and the dilaton make universal contributions to the long-range force, independent of the internal CFT. Comparing with~\eqref{eqn:gravdilaton}, we find
\begin{equation}
\mathcal{F}^{\fg + \Phi^0} = - \kappa_d^2 m m' \label{eqn:gravdilForce1}
\end{equation}
for the graviton/dilaton mediated force between string modes of mass $m$, $m'$. Thus,~\eqref{eqn:selfforceCFT2} becomes
\begin{equation}
\mathcal{F} = \mathcal{N}_J \frac{\expect{\bpsi \psi J^A} \expect{J J}^{-1}_{A B} \expect{J^B \bpsi' \psi'}}{\expect{\bpsi \psi} \expect{1}^{-1} \expect{\bpsi' \psi'}}
- \kappa_d^2 m m' 
- \mathcal{N}_{\varphi} \frac{\expect{\bpsi \psi \varphi^i} \expect{\varphi \varphi}^{-1}_{i j} \expect{\varphi^j \bpsi' \psi'}}{\expect{\bpsi \psi} \expect{1}^{-1} \expect{\bpsi' \psi'}} \,, \label{eqn:selfforceCFTinternal}
\end{equation}
where $J^A = (J^a, \tilde{J}^{\tilde{a}})$ represents both left and right-moving conserved currents, $\langle 1 \rangle$ is the sphere partition
function of the internal CFT, and $\mathcal{N}_J$, $\mathcal{N}_{\varphi}$ are normalizing factors determined by the external CFT, with a universal dependence on the positions of the vertex operators that cancels that of the internal CFT correlators.

Per the discussion
in \S\ref{subsec:worldsheetEFT}, the two- and three-point functions in~\eqref{eqn:selfforceCFTinternal} can equivalently
be expressed as inner products and matrix elements, respectively, giving\footnote{Here $\mathcal{N}_J'$ and $\mathcal{N}_{\varphi}'$ are again normalizing factors with a universal dependence on the positions of the $J$ and $\varphi$ insertions, respectively.  They are related to $\mathcal{N}_J$, $\mathcal{N}_{\varphi}$ by taking the $z_i \to \infty$ limits of various vertex operators $V_i(z_i,\bar{z}_i)$, stripping off appropriate powers of $z_i$ to be absorbed by the internal CFT limit \eqref{eqn:matrixElementCorrelator}.}
\begin{equation}
\mathcal{F} =  \mathcal{N}_J' \frac{\expect{\psi|J^A|\psi} \expect{J|J}^{-1}_{A B} \expect{\psi'|J^B|\psi'}}{\expect{\psi|\psi} \expect{1}^{-1} \expect{\psi'|\psi'}}
- \kappa_d^2 m m' 
- \mathcal{N}_{\varphi}' \frac{\expect{\psi|\varphi^i |\psi} \expect{\varphi|\varphi}^{-1}_{i j} \expect{\psi'|\varphi^j|\psi'}}{\expect{\psi|\psi} \expect{1}^{-1} \expect{\psi'|\psi'}}  \,. \label{eqn:selfforceCFTinternal2}
\end{equation}
Consider the first term. Using \eqref{eqn:matrixElementCorrelator} and conformal normal ordering, we find
\begin{equation}
  \langle J^a |J^b \rangle = \lim_{z \rightarrow \infty} (- z^2) \langle
  J^a (z) J^b (0) \rangle
  = \lim_{z \rightarrow \infty} \langle \delta^{a b} -
  z^2 \Lcolon J^a (z) J^b (0) \Rcolon \rangle = \delta^{a b} \langle 1 \rangle,
\end{equation}
since the sphere expectation value of the
normal-ordered product vanishes, see \S\ref{subsec:conformalNO}. 
By similar calculations, $\langle J^a | \tilde{J}^{\tilde{b}} \rangle = 0$
and $\langle \tilde{J}^{\tilde{a}} | \tilde{J}^{\tilde{b}} \rangle = \delta^{\tilde{a} \tilde{b}} \langle 1 \rangle$.
Conformal invariance dictates that $\langle \psi |J^a (z) | \psi \rangle
\propto \frac{1}{z}$, therefore applying Cauchy's integral theorem and the $J^a(z)$ mode expansion in radial quantization~\eqref{eqn:radialModeExp1}:
\begin{equation}
  \langle \psi | \oint \frac{\D z'}{2 \pi} J^a (z') | \psi \rangle = -
  \langle \psi |J_0^a | \psi \rangle = - Q^a \expect{\psi|\psi} \qquad
  \Rightarrow \qquad \langle \psi |J^a (z) | \psi \rangle = \frac{i}{z}
  Q^a  \expect{\psi|\psi}, \label{eqn:psiJpsi}
\end{equation}
where $Q^a$ is the left-moving charge of $\Psi$.\footnote{The $z$-dependence will be cancelled by the conformal ghosts / gauge-fixing determinant, so we can ignore it when computing the long-range forces, up to a universal factor.} Likewise $\langle \psi | \tilde{J}^{\tilde{a}} (\bar{z}) | \psi \rangle = - \frac{i}{\bar{z}}  \tilde{Q}^{\tilde{b}} \expect{\psi|\psi}$, and analogous expressions hold for $\Psi'$ involving the corresponding charges $Q^{a\prime}$, $\tilde{Q}^{\tilde{b}\prime}$. We conclude that the gauge-boson mediated long range force is proportional to
\begin{equation}
\mathcal{F}^{\gamma} \propto \delta_{a b} Q^a Q^{b\prime} + \delta_{\tilde{a} \tilde{b}} \tilde{Q}^{\tilde{a}} \tilde{Q}^{\tilde{b}\prime} \,,
\end{equation}
up to a universal factor. The overall normalization can be fixed by comparing with the known result for bosonic string theory compactified on a torus, \eqref{eqn:gaugeforce}:
\begin{equation}
\mathcal{F}^{\gamma} = \frac{2 \kappa_d^2}{\alpha'}  (\delta_{a b} Q^a Q^{b\prime} + \delta_{\tilde{a} \tilde{b}} \tilde{Q}^{\tilde{a}} \tilde{Q}^{\tilde{b}\prime}) . \label{eqn:gaugeForce1}
\end{equation}

Now consider the last term in~\eqref{eqn:selfforceCFTinternal2}.
First, note that a subset of the massless scalars arise from
$(1, 1)$ primaries of the form $\lambda^{a \tilde{b}} (z, \bar{z}) = J^a (z) 
\tilde{J}^{\tilde{b}} (\bar{z})$. These are orthogonal:
\begin{equation}
  \langle \lambda^{a \tilde{b}} | \lambda^{c \tilde{d}} \rangle = \lim_{z
  \rightarrow \infty} \langle |z|^4 J^a (z) \tilde{J}^{\tilde{b}} (\bar{z})
  J^c (0) \tilde{J}^{\tilde{d}} (0) \rangle = \delta^{a c} \delta^{\tilde{b}
  \tilde{d}} \langle 1 \rangle .
\end{equation}
Now suppose that $\chi (z, \bar{z})$ is another neutral modulus that is linearly independent from $\lambda^{a \tilde{b}}$. Without loss of generality, we can assume that $\chi$ is orthogonal
to $\lambda^{a \tilde{b}}$, so that
\begin{multline}
  0 = \langle \lambda^{a \tilde{b}} | \chi \rangle = \lim_{z \rightarrow
  \infty} | z |^4 \langle 1| J^a (z) \tilde{J}^{\tilde{b}} (\bar{z}) | \chi
  \rangle = \sum_{n, \tilde{n} > 0} \lim_{z \rightarrow \infty}  \langle 1|
  \frac{J^a_n}{z^{n - 1}} 
  \frac{\tilde{J}^{\tilde{b}}_{\tilde{n}}}{\bar{z}^{\tilde{n} - 1}} | \chi
  \rangle = \langle 1| J_1^a \tilde{J}_1^{\tilde{b}} | \chi \rangle, \\
  \Longrightarrow \qquad J_1^a \tilde{J}_1^{\tilde{b}} | \chi \rangle = 0 ,
  \label{eqn:chicons}
\end{multline}
where we use $\langle 1 | J_{-n}^a = \langle 1 | \tilde{J}_{-n}^a = 0$ for $n\ge 0$ since $J_{-n}^a = (J_n^a)^\dag$ acts to the left as a lowering operator.
Note that $\tilde{J}_1^{\tilde{b}} | \chi \rangle$ has weight $(1,0)$ and is therefore a left-moving conserved current. Since by assumption $J^a(z)$, $a=1,\cdots,n_L$ form a basis of left-moving conserved currents, we have
\begin{equation}
\tilde{J}_1^{\tilde{b}} | \chi \rangle = \Lambda^{\tilde{b}}_{\; a} J_{- 1}^a | 1 \rangle \label{eqn:chilower}
\end{equation}
for some coefficients $\Lambda^{\tilde{b}}_{\; a}$, where $J_{- 1}^a \ket{1}$ is the state corresponding to $J^a(z)$.\footnote{More precisely, if the worldsheet symmetry algebra $\mathfrak{g}$ is non-abelian then there will be additional conserved currents that are charged under the maximal torus that we are considering. However, these cannot appear in~\eqref{eqn:chilower} since $\chi$ is neutral by assumption.}
However, using
\eqref{eqn:chicons} this implies that
\begin{equation}
  0 = J_1^a (\tilde{J}_1^{\tilde{b}} | \chi \rangle) = \Lambda^{\tilde{b}}_{\;
  c} J_1^a J_{- 1}^c | 1 \rangle = \Lambda^{\tilde{b}}_{\; a} | 1 \rangle
  \qquad \Longrightarrow \qquad \Lambda^{\tilde{b}}_{\; a} = 0,
\end{equation}
using the algebra \eqref{eqn:LJalgebra}.
Thus $\tilde{J}_1^{\tilde{b}} | \chi \rangle = 0$, and by the same argument
$J_1^a | \chi \rangle = 0$. Since by assumption $\chi (z, \bar{z})$ is
neutral, $J_0 \ket{\chi} = \tilde{J}_0 \ket{\chi} = 0$. Therefore, by unitarity
\begin{equation}
 J_n^a | \chi \rangle = \tilde{J}_n^{\tilde{b}} | \chi \rangle = 0 \qquad
   \text{for} \qquad n \geqslant 0, \label{eqn:chiCond}
\end{equation}
i.e., $\chi (z, \bar{z})$ is a neutral \emph{current primary}.

By essentially the same argument that led to \eqref{eqn:psiJpsi}, we have:
\begin{equation}
  \langle \psi | \lambda^{a \tilde{b}} (z, \bar{z}) | \psi \rangle =
  \frac{1}{| z |^2} Q^a \tilde{Q}^{\tilde{b}} \langle 1 \rangle .
\end{equation}
This leaves only matrix elements of the form $\langle \psi | \chi (z, \bar{z}) |
\psi \rangle$ still to be calculated. At first, this seems very difficult,
since we have specified almost nothing about the modulus $\chi (z, \bar{z})$,
other than its orthogonality to the $J \tilde{J}$ moduli $\lambda^{a \tilde{b}}
(z, \bar{z})$.

In fact, $\langle \psi | \chi (z, \bar{z}) | \psi \rangle = 0$. To show this,
we employ the Sugawara construction (see, e.g.,~\cite{Goddard:1986bp}). As reviewed in appendix \ref{app:sugawara},
the stress tensor splits into two components:
\begin{equation}
  T (z) = T^j (z) + \hat{T} (z), \qquad T^j (z) = - \frac{1}{2} \delta_{a b} \Lcolon
  J^a (z) J^b (z) \Rcolon,
\end{equation}
each with associated Virasoro generators $L_m^j, \hat{L}_m$ that satisfy
their own Virasoro algebras and commute with each other. Using~\eqref{eqn:Ljmodes}, we compute:
\begin{equation}
  L^j_0 | \psi \rangle = \frac{1}{2} \sum_{p = - \infty}^{\infty} \delta_{a b}
  \lcirc J^a_{- p} J_p^b \rcirc  | \psi \rangle = \frac{1}{2} \delta_{a b} Q^a Q^b
  | \psi \rangle \qquad \Rightarrow \qquad h^j_{\psi} = \frac{1}{2}
  Q^2 ,
\end{equation}
where $J_n^a | \psi \rangle = 0$ for $n > 0$ since $| \psi \rangle$ is the
lowest-weight state of charge $Q_{\psi}^a$. Since $\ket{\psi}$ has the same weight $h_\psi = \frac{1}{2} Q^2$ under the full Virasoro algebra by \eqref{eqn:psiweights}, we conclude that $\hat{h}_\psi = h_\psi - h_\psi^j = 0$.
Doing the same computation for the right-movers, we find
\begin{equation}
(h^j_{\psi}, \tilde{h}^j_{\psi}) =
\frac{1}{2}  (Q^2, \tilde{Q}^2), \qquad (\hat{h}_{\psi},
\widetilde{\hat{h}}_{\psi}) = (0, 0) .
\end{equation}
 By comparison, using~\eqref{eqn:chiCond}
\begin{equation}
  L_0^j | \chi \rangle = \frac{1}{2} \sum_{p = - \infty}^{\infty} \delta_{a b}
  \lcirc J^a_{- p} J_p^b \rcirc  | \chi \rangle = 0 \qquad \Rightarrow \qquad h_{\chi}^j = 0 .
\end{equation}
Since $\ket{\chi}$ has weight $h_{\chi} = 1$ under the full Virasoro algebra, $\hat{h}_{\chi} = h_{\chi} - h_{\chi}^j = 1$. Accounting for the right-movers in the same way, we find
\begin{equation}
(h^j_{\chi}, \tilde{h}^j_{\chi}) = (0, 0), \qquad  (\hat{h}_{\chi}, \widetilde{\hat{h}}_{\chi}) = (1, 1) \,.
\end{equation}
By a similar calculation, $L^j_n | \psi
\rangle = L^j_n | \chi \rangle = 0$ for $n > 0$. Since $\ket{\psi}$, $\ket{\chi}$ are primaries under the full Virasoro algebra, this implies that $\hat{L}_n | \psi
\rangle = \hat{L}_n | \chi \rangle = 0$ for $n > 0$.

Thus, $\psi$ and $\chi$ are primaries with non-zero weight under $T^j$ and
$\hat{T}$, respectively, and no weight under the other component of the stress
tensor. Because the two components do not interact, the three-point function
$\langle \psi | \chi (z, \bar{z}) | \psi \rangle$ should be separately
invariant under conformal transformations induced by $T^j$ and $\hat{T}$.
Then, since $\hat{T}$ conformal transformations only act on $\chi$ and not
$\psi$, we must have $\langle \psi | \chi (z, \bar{z}) | \psi \rangle = 0$
because conformal invariance implies that sphere one-point functions of
non-zero weight operators must vanish, see footnote~\ref{fn:onepoint}.

For completeness, we now rederive this important fact explicitly using
operator algebra. First, note that by a standard argument
\begin{equation}
  \| \hat{L}_{- 1} | \psi \rangle \|^2 = \langle \psi | \hat{L}_1 \hat{L}_{-
  1} | \psi \rangle = \langle \psi | [\hat{L}_1, \hat{L}_{- 1}] | \psi \rangle
  = 2 \langle \psi | \hat{L}_0 | \psi \rangle = 0 \qquad \Rightarrow \qquad
  \hat{L}_{- 1} | \psi \rangle = 0,
\end{equation}
hence $\hat{L}_0, \hat{L}_{\pm 1}$ all annihilate $| \psi \rangle$. Next, we
rewrite the fact that $|\chi\rangle$ has weight $\hat{h}_{\chi} = 1$ in
operator language (in terms of $\chi (z, \bar{z})$):
\begin{align}
  \hat{L}_0 | \chi \rangle = | \chi \rangle \qquad \Longrightarrow \qquad \chi
  (0) & = \oint_{\Sigma} \frac{\D z'}{2 \pi i} z'  \hat{T} (z') \chi (0)
  \nonumber\\
  \Longrightarrow \qquad \chi (z, \bar{z}) & = \oint_{\Sigma'} \frac{\D
  z'}{2 \pi i}  (z' - z)  \hat{T} (z') \chi (z, \bar{z}), 
\end{align}
where in the second step we translate $\chi$ away from the origin. Here the
integration contour $\Sigma$ is a counterclockwise loop about $z' = 0$ and
$\Sigma'$ is a counterclockwise loop about $z' = z$. However, provided there
are no other operators along the circle $| z' | = | z |$, $\Sigma'$ can be
deformed into a pair of oppositely oriented circles at $| z' | = | z | \pm
\varepsilon$, see figure~\ref{fig:ContourDeform}.
\begin{figure}
\centering
\includegraphics[width=\textwidth]{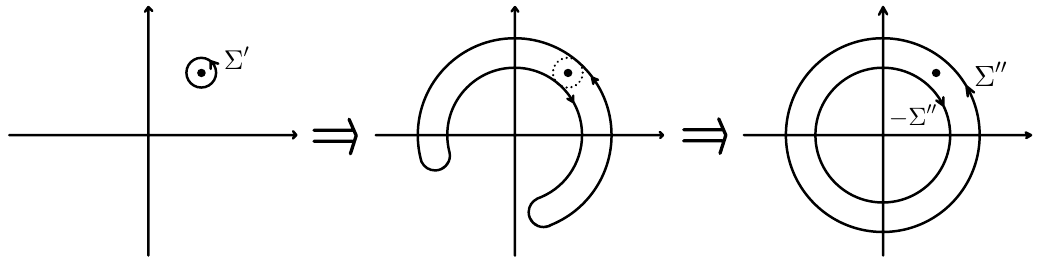}
\caption{Deforming a small loop about $\chi(z,\bar{z})$ into a pair of origin-centered circles of radius $|z| \pm \varepsilon$. A contour integration on the latter becomes a commutator in radial quantization.}\label{fig:ContourDeform}
\end{figure}
 Due to the implicit radial
time-ordering, this produces a commutator:
\begin{equation}
  \chi (z, \bar{z}) = \left[ \oint_{\Sigma''} \frac{\D z'}{2 \pi i}  (z' -
  z)  \hat{T} (z'), \chi (z, \bar{z}) \right] = [\hat{L}_0 - z \hat{L}_{- 1},
  \chi (z, \bar{z})], \label{eqn:chiident}
\end{equation}
where $\Sigma''$ is the counterclockwise circle $| z' | = | z |$ and we apply
the mode expansion~\eqref{eqn:radialModeExp2} in the second step.

Taking the $\langle \psi | \cdots | \psi \rangle$ matrix element of
\eqref{eqn:chiident}, we obtain:
\begin{equation}
  \langle \psi | \chi (z, \bar{z}) | \psi \rangle = \langle \psi | [\hat{L}_0
  - z \hat{L}_{- 1}, \chi (z, \bar{z})] | \psi \rangle = 0,
\end{equation}
as expected, where we use $\hat{L}_{- 1}^{\dag} = \hat{L}_1$ (see~\S\ref{subsec:unitarity}) and the fact
that $\hat{L}_0, \hat{L}_{\pm 1}$ annihilate $| \psi \rangle$.

Gathering all the pieces, we conclude that the long-range force mediated by the massless scalars other than the dilaton is proportional to
\begin{equation}
\mathcal{F}^{\Phi^i} \propto \delta_{a b} Q^a Q^{b\prime} \, \delta_{\tilde{a} \tilde{b}} \tilde{Q}^{\tilde{a}} \tilde{Q}^{\tilde{b}\prime} \,,
\end{equation}
up to a universal factor. Comparing with~\eqref{eqn:lambdaforce} to fix the normalization, we find
\begin{equation}
  \mathcal{F}^{\Phi^i} = - \frac{4 \kappa_d^2}{{\alpha'}^2} \, \frac{\delta_{a b} Q^a Q^{b\prime} \, \delta_{\tilde{a} \tilde{b}} \tilde{Q}^{\tilde{a}} \tilde{Q}^{\tilde{b}\prime}}{m m'} . \label{eqn:scalarForce1}
\end{equation}
Combining~\eqref{eqn:gravdilForce1}, \eqref{eqn:gaugeForce1}, \eqref{eqn:scalarForce1}, we find the net long-range force coefficient:
\begin{align}
  \mathcal{F} &= \frac{2 \kappa_d^2}{\alpha'}  (\delta_{a b} Q^a Q^{b\prime} + \delta_{\tilde{a} \tilde{b}} \tilde{Q}^{\tilde{a}} \tilde{Q}^{\tilde{b}\prime})
  -\frac{4 \kappa_d^2}{{\alpha'}^2} \, \frac{\delta_{a b} Q^a Q^{b\prime} \, \delta_{\tilde{a} \tilde{b}} \tilde{Q}^{\tilde{a}} \tilde{Q}^{\tilde{b}\prime}}{m m'}
   -\kappa_d^2 m m' \nonumber \\
  & = - \frac{4 \kappa_d^2}{{\alpha'}^2 m m'}  \biggl(
  \frac{\alpha'}{2} m m' - \delta_{a b} Q^a Q^{b\prime} \biggr) \biggl(
  \frac{\alpha'}{2} m m' - \delta_{\tilde{a} \tilde{b}} \tilde{Q}^{\tilde{a}} \tilde{Q}^{\tilde{b}\prime} \biggr) , \label{eqn:netForce1}
\end{align}
for any pair of states in the tower \eqref{eqn:towermass}.

By adding additional oscillator modes to the tower \eqref{eqn:towermass} either in the external directions or within the current algebra, one obtains a more general class of states with masses
\begin{equation}
\frac{\alpha'}{4} m^2 = \frac{1}{2} Q^2 +N -1 = \frac{1}{2} \tilde{Q}^2 + \tilde{N} -1 \,, \qquad \text{for $(Q,\tilde{Q}) \in \Gamma$, $N,\tilde{N}\ge 0$,} \label{eqn:towermass2}
\end{equation}
where the tower \eqref{eqn:towermass} consists of the lightest states of each charge.
The above reasoning is unaffected by the presence of these oscillator modes, so~\eqref{eqn:netForce1} applies to the states~\eqref{eqn:towermass2} as well. It will be useful to keep track of these more general states below. 

In particular, the self-force coefficient for one of these states works out to
\begin{equation}
\mathcal{F}_{\text{self}} = - \frac{16 \kappa_d^2}{{\alpha'}^2 m^2} (N-1) (\tilde{N} - 1) \,, \qquad \text{where $\tilde{N} - N = \frac{1}{2}(Q^2 - \tilde{Q}^2)$.}\label{eqn:netForce2}
\end{equation}
There are three different cases to consider, depending on the charges:
\begin{enumerate}
\item
If $|Q^2 - \tilde{Q}^2| > 2$, then $|N-\tilde{N}| \ge 2$ and the lightest level-matched state has either $N=0$, $\tilde{N} \ge 2$ or vice versa, implying that it is strictly self-repulsive, $\mathcal{F}_{\text{self}} > 0$.
\item
If $|Q^2 - \tilde{Q}^2| = 2$ then the lightest level-matched state has $N=0$, $\tilde{N}=1$ or vice versa, implying that it has vanishing self-force, $\mathcal{F}_{\text{self}} = 0$. 
\item
If $|Q^2 - \tilde{Q}^2| = 0$ then the lightest level-matched state has $N=\tilde{N}=0$,\footnote{This state, with mass $\frac{\alpha'}{4} m^2 = \frac{1}{2} Q^2 - 1 = \frac{1}{2} \tilde{Q}^2 - 1$, is a tachyon in parts of the moduli space.} implying that it is \emph{self-attractive}, $\mathcal{F}_{\text{self}} < 0$.
\end{enumerate}
Thus, the lightest state, with mass~\eqref{eqn:towermass}, can have a self-force of any sign, depending on $Q^2 - \tilde{Q}^2$. However, in all three cases, the next-to-lightest state, with mass
\begin{equation}
\frac{\alpha'}{4} m^2  = \frac{1}{2} \max(Q^2,\tilde{Q}^2) \,, \label{eqn:secondlightest}
\end{equation}
has vanishing self force. All states heavier than this have $N,\tilde{N} \ge 2$, hence they are self-attractive.

Since there is a state with vanishing or repulsive long-range self-force for every $(Q,\tilde{Q}) \in \Gamma$, we conclude that the sublattice RFC holds at string tree level.

\subsection{Self-repulsiveness vs. superextremality and the WGC} \label{subsec:selfrepulsive}

We now leverage the above results about long-range forces to fix the black hole extremality bound and show that the tower~\eqref{eqn:towermass} is (strictly) superextremal.

To understand the relationship between self-repulsiveness and superextremality, we begin by defining the latter precisely. 
A particle of charge $\vec{q}$ and mass $m$ is superextremal in the sense of~\eqref{eqn:WGCsketch} if $M = N m$ is no larger than the mass of the lightest black hole of charge $\vec{Q} = N \vec{q}$ as $N \to \infty$. Note that the inclusion of a large factor $N$ is necessary to avoid murky questions about sub-Planckian ``black holes'' ($m$ itself may be well below the Planck scale), whereas the $N\to \infty$ limit makes the definition unambiguous by removing any dependence on derivative corrections to the effective action. 

Let us assume, for technical reasons, that the black holes we are interested in are spherically symmetric.\footnote{In the simplest examples it can be proven that there are spherically symmetric black holes that are no heavier than any other black hole solution of the same charge, see, e.g.,~\cite{Gibbons:1982jg,Gibbons:1993xt}. Whether this holds in general is not established in the literature, to our knowledge.} For the purpose of studying such solutions, we can simplify the two-derivative effective action to~\cite{Heidenreich:2020upe}
\begin{equation} \label{eqn:gen-action-BH}
S = \int \!\dd^d x\, \sqrt{-g} \biggl(\frac{1}{2 \kappa_d^2} R -\frac{1}{2} G_{i j}(\phi) \nabla\phi^i\cdot\nabla\phi^j - \frac{1}{2} \ff_{a b}(\phi) F_2^a \cdot F_2^b \biggr) \,,
\end{equation}
where $\phi^i$ are the neutral moduli, $G_{i j}(\phi)$ is the metric on moduli space, and $F_2^a = d A_1^a$ are the vector fields (or a Cartan subalgebra in the non-abelian case).

Note that, in comparison with~\eqref{eqn:gen-action-EFT}, there is no scalar potential in~\eqref{eqn:gen-action-BH} since scalar fields that are not moduli have been omitted. This is because we are interested in parametrically heavy black holes $M_{\text{BH}} \to \infty$, which are accurately described by the leading-order terms in a derivative expansion. Since the scalar potential $V(\phi)$ is \emph{lower} order in the derivative expansion than the kinetic terms, the leading-order solution must extremize it, implying that only scalar fields with vanishing potential are switched on at leading order in the $M_{\text{BH}} \to \infty$ limit.\footnote{For the same reason, bosonic string tachyons can be ignored in the following analysis.}

Assuming the action~\eqref{eqn:gen-action-BH} and making an appropriate gauge choice,  spherically symmetric charge $Q_a$ solutions with an event horizon take the form
\begin{align}
  \dd s^2 &=  - e^{2 \psi(r)} f (r) \dd t^2 +
  e^{- \frac{2}{d - 3} \psi(r)}  \left[ \frac{\dd r^2}{f (r)} + r^2 \dd
  \Omega^2_{d - 2} \right] \,, &  f (r) &= 1 - \frac{r_h^{d - 3}}{r^{d - 3}}\,, \nonumber \\ 
  F^a_2 &= \frac{\ff^{a b} (\phi(r)) Q_b}{V_{d - 2}}  \frac{e^{2\psi(r)}}{r^{d - 2}} \dd t \wedge \dd r \,,  
\end{align}
where $r_h \geqslant 0$ is a constant, $V_{d - 2} = 2 \pi^{\frac{d - 1}{2}} / \Gamma ( \frac{d - 1}{2} )$ is the volume of a $(d-2)$-sphere, $\ff^{a b}(\phi)$ is the inverse of $\ff_{a b}(\phi)$, and $\psi(r)$, $\phi^i(r)$ solve a system of second-order ODEs plus a first-order constraint. 
In terms of $z \df \frac{1}{(d - 3) V_{d-2} r^{d - 3}}$,\footnote{Note that, in comparison to~\cite{Heidenreich:2020upe}, $z^{\text{(here)}} = z^{\text{(there)}}/V_{d-2}$.}
\begin{gather}
  k_N^{- 1} \frac{\dd}{\dd z} [f \dot{\psi}] = \mathfrak{f}^{a b} (\phi) Q_a Q_b e^{2
  \psi}\,, \qquad 
  \frac{\dd}{\dd z} [f \dot{\phi}^i] + f \Gamma^i_{\; j k} (\phi)  \dot{\phi}^j
  \dot{\phi}^k = \frac{1}{2} G^{i j} (\phi) \mathfrak{f}^{a b}_{, j} (\phi)
  Q_a Q_b e^{2 \psi} \,, \nonumber\\
  k_N^{- 1} \dot{\psi} (f \dot{\psi} + \dot{f}) + f G_{i j} (\phi) 
  \dot{\phi}^i \dot{\phi}^j = \mathfrak{f}^{a b} (\phi) Q_a Q_b e^{2 \psi} \,, \label{eqn:BHeqns}
\end{gather}
where dots denote $z$-derivatives, $\Gamma^i_{\; j k}(\phi)$ is the Levi-Civita connection for the metric on moduli space $G_{i j}(\phi)$, $G^{i j}(\phi)$ is its inverse, $k_N = \frac{d - 3}{d - 2} \kappa_d^2$ is the rationalized Newton's
constant (so that the gravitational force is $- \frac{k_N m m'}{V_{d - 2} r^{d -2}}$), and now $f=1-z/z_h$.

Black hole solutions are those that remain smooth at $r=r_h$. The ADM mass of such a solution is
\begin{equation}
M_{\text{BH}} = k_N^{- 1}  \biggl[ - \dot{\psi}_{\infty} + \frac{1}{2 z_h} \biggr] \,.
\end{equation}
To bound this mass, define the functional:
\begin{equation}
  \hat{S}[\psi,\phi] = \int_0^{z_h} \biggl[ \frac{1}{2 k_N} (f \dot{\psi} + \dot{f})^2
  + \frac{1}{2} f^2 G_{i j}(\phi)  \dot{\phi}^i  \dot{\phi}^j + \frac{1}{2} e^{2
  \psi} \mathfrak{f}^{a b} (\phi) Q_a Q_b \biggr] \dd z.
\end{equation}
Applying~\eqref{eqn:BHeqns}, we obtain 
\begin{equation}
  \hat{S} = k_N^{- 1} \biggl( \int_0^{z_h} \frac{\dd}{\dd z} \biggl[ \frac{1 + f}{2} f
  \dot{\psi} \biggr] \dd z + \frac{1}{2 z_h} \biggr) = k_N^{- 1}  \biggl( -
  \dot{\psi}_{\infty} + \frac{1}{2 z_h} \biggr) = M_{\text{BH}} .
\end{equation}
Now we observe that $\hat{S}$ can be rewritten as 
\begin{multline}
  \hat{S} =  \int_0^{z_h} \biggl( \frac{k_N}{2}  \biggl[ \frac{f \dot{\psi} +
  \dot{f}}{k_N} + e^{\psi} M (\phi) \biggr]^2 + \frac{1}{2} G_{i j} 
  [ f \dot{\phi}^i + e^{\psi} G^{i k}  M_{, k} ] [f
  \dot{\phi}^j + e^{\psi} G^{j l}  M_{, l}] \biggr) \dd z \\
   + \frac{1}{2}  \int_0^{z_h} e^{2 \psi}  \bigl[\mathfrak{f}^{a b} Q_a
  Q_b - k_N M (\phi)^2 - G^{i j} M_{, i}  M_{, j} \bigr] \dd z -
   f e^{\psi} M(\phi) \biggr|_0^{z_h},  \label{eqn:Bogomolnyi}
\end{multline}
for any function $M(\phi)$. Suppose that $M(\phi)$ is chosen to satisfy
\begin{equation}
  \mathfrak{f}^{a b} Q_a Q_b - k_N M (\phi)^2 - G^{i j} M_{, i}
  M_{, j} \geqslant 0, \label{eqn:Wcond}
\end{equation}
with $M(\phi_h)$ finite.  Then we find
\begin{equation}
  \hat{S} \geqslant -
   f e^{\psi} M(\phi) \biggr|_0^{z_h} = M(\phi_{\infty}),
\end{equation}
so we derive the bound $M_{\rm BH} \geqslant M(\phi_{\infty})$. 

Suppose there is a particle of mass $m(\phi)$ and charge $q_a$ that is self-repulsive throughout moduli space. Then, by definition
\begin{equation} \label{eqn:SRcond}
  \mathfrak{f}^{a b} q_a q_b - k_N m^2 - G^{i j}  \frac{\partial m}{\partial
  \phi^i}  \frac{\partial m}{\partial \phi^j} \geqslant 0,
\end{equation}
for all $\phi$, see~\eqref{eqn:selfforce}. Substituting $M(\phi) = N m(\phi)$ and $\vec{Q} = N \vec{q}$, we conclude that~\eqref{eqn:Wcond} is satisfied, hence $M_{\text{BH}} \geqslant M = N m$, and the particle is superextremal (i.e., $\frac{| \vec{q} |}{m} \geqslant \frac{| \vec{Q}_{\text{BH}} |}{M_{\text{BH}}}$). Thus, a particle that is self-repulsive throughout moduli space is superextremal.

A few comments are in order. First, the above argument relies on the fact that the self-repulsive particle exists throughout moduli space, since if $m(\phi)$ is undefined in some region that the black hole solution explores then the argument fails.

Second, while strictly speaking the argument requires the particle to be self-repulsive \emph{everywhere} in the moduli space, the argument still applies when~\eqref{eqn:SRcond} fails somewhere ``far away,'' provided that black hole solutions do not venture that far in the moduli space. This is relevant to our perturbative arguments, because we cannot calculate the long-range forces in strongly coupled regions of moduli space. However, since for black holes with only electric charge the attractor mechanism~\cite{Ferrara:1995ih,Cvetic:1995bj,Strominger:1996kf,Ferrara:1996dd,Ferrara:1996um} tends to flow \emph{away} from regions of strong coupling,\footnote{In $D=4$ or $D=5$, black holes may also carry magnetic charges. This statement and those that follow do not apply to such magnetic / dyonic black holes. However, since the string states we are studying have no magnetic charge, we need only study electrically-charged black holes to determine if they are superextremal.} this is unlikely to affect the extremality bound at weak coupling. More precisely, one can show that a spherically symmetric black hole with large $g_s$ near its horizon cannot exist when the asymptotic string coupling $(g_s)_{\infty}$ is sufficiently weak, i.e., such a solution obeys a sharp bound $(g_s)_{\infty} \ge g_s^{\text{crit}}$ where $g_s^{\text{crit}}$ depends on the spacetime dimension $D$ and the maximum value of the string coupling at which the string-tree-level effective action can be trusted~\cite{BHbound}. Thus, for weak enough string coupling such non-perturbative black holes cannot affect the extremality bound.

Finally, note that~\eqref{eqn:Wcond} only includes moduli derivatives $M_{,i}$, but the long-range force~\eqref{eqn:selfforce} receives contributions from all massless scalars, not just the moduli. However, since the inclusion of additional scalars can only increase the attractive self-force, a self-repulsive particle will still satisfy (but not saturate) \eqref{eqn:Wcond} when these scalars contribute to the long-range self-force, so the conclusion stands.

With these caveats in mind, we now consider the next-to-lightest states,~\eqref{eqn:secondlightest}, which as we have shown have vanishing long-range self-force. By the above reasoning these states are (super)extremal
\begin{equation}
\frac{\alpha'}{4} M_{\text{BH}}^2 \ge \frac{1}{2} \max(Q^2,\tilde{Q}^2) > \frac{1}{2} \max(Q^2,\tilde{Q}^2)-1 \,, \label{eqn:towerSuperext}
\end{equation}
implying that the tower~\eqref{eqn:towermass} is \emph{strictly} superextremal. This proves the sublattice WGC at tree level in bosonic string theory.

\bigskip

While not essential to proving the WGC, it is interesting to ask whether the first inequality in~\eqref{eqn:towerSuperext} can be saturated, i.e., what is the mass of an extremal black hole? To saturate the inequality, the completed squares in~\eqref{eqn:Bogomolnyi} must vanish. One can show that this requires $f=1$ ($r_h =0$)~\cite{Harlow:2022gzl}, so that
\begin{align}
\dot{\psi} &= - k_N e^{\psi} M (\phi)\,, & \dot{\phi}^i &= - e^{\psi} G^{i j}  M_{, j} \,. \label{eqn:fakeWeqns}
\end{align}
These Bogomol'nyi-like equations imply the equations of motion~\eqref{eqn:BHeqns} given a function $M(\phi)$ saturating~\eqref{eqn:Wcond}. Since they are first order, there is a unique solution for $z>0$ ($r<\infty$) given the boundary conditions at $z=0$, $\psi(r = \infty) = 0$ and $\phi^i(r = \infty) = \phi_\infty^i$, the latter being the values of the moduli in the chosen string vacuum.\footnote{This reproduces the ``fake superpotential'' method of obtaining solutions introduced in~\cite{Ceresole:2007wx,Andrianopoli:2007gt,Andrianopoli:2009je,Andrianopoli:2010bj,Trigiante:2012eb}.}

Thus, we obtain an extremal black hole of mass $M(\phi_\infty)$, \emph{provided that} the solution to~\eqref{eqn:fakeWeqns} is regular at the horizon $z=\infty$ ($r=0$). One can show that this is the case if and only if the $M(\phi)$ gradient flow ends at a critical point $\phi_h$ (i.e., an ``attractor point'') in the interior of moduli space with $M(\phi_h) > 0$. However, the function $M(\phi)$ defined by
\begin{equation}
\frac{\alpha'}{4} M^2 = \frac{1}{2} \max(Q^2, \tilde{Q}^2) \,, \label{eqn:Mtest}
\end{equation}
has no such critical points since it depends exponentially on the dilaton, so electrically charged extremal black holes do not exist in bosonic string theory.\footnote{Similar issues arise in superstring theories. There may of course be related objects with a horizon-like region with string-scale curvature, but these are not black holes according to the definition of~\S\ref{sec:intro}.}

Let us therefore rephrase the question: are there black holes in the theory with charge-to-mass ratio arbitrarily close to saturating the first inequality in~\eqref{eqn:towerSuperext}? To answer this question, it is convenient to consider solutions to~\eqref{eqn:fakeWeqns} with singular horizons, which nonetheless are the limit of a smooth family of subextremal black hole solutions~\cite{Heidenreich:2020upe,Harlow:2022gzl}. A necessary and sufficient condition for this is that $\dot{\psi} \le 0$ for all $z\ge 0$, see~\cite{Harlow:2022gzl}, appendix A. Per~\eqref{eqn:fakeWeqns}, this is ensured if $M(\phi) \ge 0$ along the $M(\phi)$ downwards gradient flow starting at $\phi_\infty^i$.

We now ask if the function $M(\phi)$ in~\eqref{eqn:Mtest}
has the required properties. First, note that $(1,1)$ primaries $\lambda^{a \tilde{b}} (z, \bar{z}) = J^a (z) 
\tilde{J}^{\tilde{b}} (\bar{z})$ are exactly marginal operators.\footnote{Turning them on has the effect of boosting the period lattice $\Gamma$, changing its decomposition into left and right-moving components.} This implies that the corresponding massless scalar fields are moduli. Thus, since the next-to-lightest tower~\eqref{eqn:secondlightest} has vanishing long-range self force and the massless scalars that contribute to this force are moduli, \eqref{eqn:Wcond} is indeed saturated. Moreover, the form of~\eqref{eqn:Mtest} implies that $M^2(\phi) > 0$ everywhere,\footnote{By contrast, if we only had $M^2(\phi) \ge 0$ then $M(\phi)$ could pass between the positive and negative branches of the square root at the loci where $M^2(\phi) = 0$, violating $M(\phi) > 0$. This explains how the lightest tower in cases with $|Q^2 - \tilde{Q}^2| = 2$ can have zero self-force and yet be strictly superextremal.} so by choosing the positive square root, $M(\phi) > 0$ everywhere. By the above discussion, this is enough to conclude that the solution to~\eqref{eqn:fakeWeqns} is the limit of a family of subextremal black hole solutions,
i.e., the extremality bound is precisely
\begin{equation}
\frac{\alpha'}{4} M^2_{\text{BH}} \ge \frac{1}{2} \max(Q^2, \tilde{Q}^2) \,,
\end{equation}
just like in toroidal orbifolds~\eqref{eqn:ExtHetOrbifold}.

\section{Summary and Discussion}\label{sec:final}

In this paper, we have proven the Weak Gravity Conjecture (WGC) in all perturbative, bosonic string theories in dimensions $D\ge 6$, building on the partial arguments given in~\cite{ArkaniHamed:2006dz,Heidenreich:2016aqi}. Our proof proceeded in three steps. First, we used modular invariance of the internal CFT to fix the modular transformation of the flavored partition function (following~\cite{Benjamin:2016fhe}), implying that the charged spectrum is quasiperiodic (following~\cite{Montero:2016tif,Heidenreich:2016aqi}). Starting with the identity operator of the internal CFT (which generates the graviton), this implies that there is a tower of charged particles of mass~\cite{Heidenreich:2016aqi}:
\begin{equation}
\frac{\alpha'}{4} m^2 = \frac{1}{2} \max(Q^2, \tilde{Q}^2) + N -1 \,, \qquad N \in \mathbb{Z}_{\ge 0}, \label{eqn:towerSummary}
\end{equation}
for all charges $(Q,\tilde{Q})$ that lie on the \emph{period lattice} $\Gamma$, an even lattice that is a finite-index sublattice of the charge lattice $\Gamma_Q = \Gamma^\ast$. We emphasize that, although the mass formula~\eqref{eqn:towerSummary} resembles that of a toroidal (Narain) compactification of the bosonic string, this formula applies to \emph{any} bosonic string theory, including compactifications on arbitrary Ricci-flat manifolds, non-geometric compactifications, or indeed the theory described by any internal CFT with appropriate central charges. Unlike the Narain case,~\eqref{eqn:towerSummary} does not necessarily describe every string mode, and in particular says nothing about modes with charges not on the period lattice, $(Q,\tilde{Q}) \in \Gamma^\ast \setminus \Gamma$.

Next, we computed the long-range forces between the modes~\eqref{eqn:towerSummary}. To do so, we wrote the force in terms of three-point functions, most of which can be computed (up to universal factors depending only on the external CFT) using the current algebra of the internal CFT in a straightforward fashion. Non-trivially, we showed that the  remaining three-point functions---which describe the coupling of the modes~\eqref{eqn:towerSummary} to massless scalars other than the dilaton and $J \tilde{J}$ moduli---vanish. To prove this fact, we used the Sugawara construction to separate the parts of the internal CFT associated to the conserved currents from the remainder. Fixing the universal factors that we did not calculate by comparing with a reference theory (Narain compactification), we found the tree-level self-force coefficient for the modes~\eqref{eqn:towerSummary}:
\begin{equation}
  \mathcal{F} = - \frac{4 \kappa_d^2}{{\alpha'}^2 m^2}  \biggl(
  \frac{\alpha'}{2} m^2 - Q^2 \biggr) \! \biggl(
  \frac{\alpha'}{2} m^2 - \tilde{Q}^2 \biggr) \,.
\end{equation}
An analogous formula was derived for heterotic Narain compactification in~\cite{Heidenreich:2019zkl} (see also appendix~\ref{app:toroidal} for bosonic Narain compactification) but our argument shows that it holds for \emph{any} bosonic string theory, a new result of our paper. In particular, this implies that the $N=1$ modes in~\eqref{eqn:towerSummary} have zero long-range self force at string tree level.

Finally, we showed that a particle that is self-repulsive throughout moduli space is superextremal (following~\cite{Heidenreich:2019zkl,Harlow:2022gzl}). Since the $N=1$ modes in~\eqref{eqn:towerSummary} have zero long-range self force, this implies that they are superextremal:
\begin{equation}
\frac{\alpha'}{4} M_{\text{BH}}^2 \ge \frac{1}{2} \max(Q^2,\tilde{Q}^2) > \frac{1}{2} \max(Q^2,\tilde{Q}^2)-1 \,,
\end{equation}
hence the $N=0$ modes in~\eqref{eqn:towerSummary} are strictly superextremal. Thus, the tree-level spectrum contains a strictly superextremal particle for every charge $(Q,\tilde{Q})$ on the period lattice $\Gamma_{\text{ext}} = \Gamma$. This proves that the tree-level spectrum satisfies the strict, sublattice WGC, which in turn implies the Ooguri-Vafa version of the mild WGC, completing our proof.

While many aspects of our proof have appeared in previous literature, we have attempted to make this paper as self-contained as possible, reviewing and improving arguments for the essential facts, both new and old. For instance, we have found a novel, explicit argument deriving the modular transformation of the flavored partition function, reproducing the results of~\cite{Benjamin:2016fhe}.

\subsection{Loop corrections} \label{sec:loops}

As our proof directly constrains only the tree-level spectrum, it is interesting to consider how it might be affected by string loop corrections. These could have several important effects, such as correcting the masses of the particles in the tower~\eqref{eqn:towerSummary}, generating finite decay widths, or correcting the extremality bound itself.

First, we note that since the lightest particle of each charge $(Q,\tilde{Q})$ is \emph{strictly} superextremal by an amount proportional to the string scale, any corrections with a perturbative expansion in positive powers of the string coupling will be too small to make these particles subextremal when the string coupling is small, $g_s \ll 1$.

This leaves only loop corrections that qualitatively alter the situation to be considered. These may, for instance, lift the moduli space and/or generate a cosmological constant. Lifting some or all of the moduli has the effect of strengthening the extremality bound, making the WGC easier to satisfy~\cite{Heidenreich:2015nta,Harlow:2022gzl}. Strictly speaking, from the perspective of the mild WGC as defined in~\S\ref{sec:intro}, this is an abrupt effect: generating even a small moduli potential discontinuously alters the physics of parametrically large black holes, since when they are sufficiently large the potential will begin to dominate the derivative expansion. However, especially from the perspective of the sublattice WGC, a renormalized notion of extremality may be more natural to consider~\cite{Heidenreich:2017sim}. Then, depending on the scale we are interested in, we use black holes of different sizes to normalize the extremality bound, in which case a perturbatively generated moduli potential will have a perturbative effect on the extremality bound. In that case, by the reasoning described in the previous paragraph, the corrections cannot make particles in the superextremal tower subextremal when the string coupling is sufficiently small.

When string loop corrections generate a cosmological constant (CC), the situation is less clear. To define the WGC with a CC, the notion of ``extremality'' should be referenced to black holes that are large in Planck units but small in cosmological units~\cite{Nakayama:2015hga,Heidenreich:2016aqi,Harlow:2022gzl},\footnote{This is imprecise; a precise formulation of the WGC in the presence of a CC is not yet known.} since cosmological-scale black holes often have masses that increase with higher powers of their charges. Moreover, since the loop-generated CC scales with a positive power of $g_s$, a dilaton tadpole inevitably results, causing the tree-level vacuum to roll.\footnote{If the CC is positive, the tree-level vacuum rolls downhill towards zero string coupling (where the CC vanishes). If the CC is negative, the tree-level vacuum rolls downhill towards infinite string coupling, at least until perturbation theory breaks down.} In this case, we need to consider black holes in a time-dependent cosmological background. If the potential is shallow and the black hole is not too large, these can be approximated by asymptotically flat black holes. Thus, at weak string coupling there is once more an intermediate mass range of black holes that behave similarly to their tree-level counterparts, and if we use these black holes to define a suitable generalization of the WGC to such rolling backgrounds, the conjecture will be satisfied.

In general, complications such as approximate moduli with a small\-/but\-/nonzero potential and the presence of a (not too large) cosmological constant require us to think carefully about how the WGC is defined. (By contrast, the definitions in~\S\ref{sec:intro} apply to the idealized scenario where the moduli space is exact and the cosmological constant vanishes.) We expect that when this is done correctly, perturbatively small corrections that introduce such complications will not have a large effect on whether the appropriately defined WGC bound is satisfied, and thus we expect our tree-level results to extend to perturbative, bosonic string theory at sufficiently weak string coupling. Further exploring the correct statement of the WGC in the presence of these complications is an interesting topic for future research.

Note that, by contrast, the tree-level bosonic string \emph{does not} satisfy a strict form of the sublattice RFC, since there are no strictly self-repulsive particles in the cases $Q^2 - \tilde{Q}^2 \in \{0,\pm 2\}$. The effect of loop corrections on the self-force of these particles is another interesting topic for the future.

\subsection{Tachyons} \label{sec:tachyons}

Throughout our paper, we have ignored the obvious issue that all bosonic string theories have one or more tachyons in their tree-level spectra, rendering the vacuum unstable. In part, our present work is intended to set up a more meaningful generalization to superstring theory~\cite{BMSuperstring}, where tachyons can be avoided. However, it is also possible that bosonic string theory can be interpreted as an \emph{unstable} critical point in the landscape of QGTs (or, more precisely as a nearly flat hilltop, due to the one-loop dilaton tadpole). From this viewpoint, one can ask whether the WGC also constraints such unstable critical points in the same way that it constrains vacua in the QGT landscape. Our analysis suggests that it does.

One might wonder whether the presence of tachyons requires us to rethink the definition of the WGC, as in~\S\ref{sec:loops}. At least formally, this is not the case, because we can imagine ``integrating out'' the tachyons in much the same way that we integrate out massive fields, generating higher derivative operators that have a subleading effect in the derivative expansion. The only differences between tachyons and massive particles in such a procedure are an extra set of minus signs.

\subsection{Expanding the scope of the proof}

As noted in~\S\ref{sec:intro}, our proof (and its superstring generalization~\cite{BMSuperstring}) is limited to the electric NSNS gauge sector, since string modes are only charged under this sector. In the case of closed bosonic string theory, this excludes objects that are magnetically charged under the Kalb-Ramond two form $B_2$ as well as those that are magnetically charged under the same 1-form NSNS gauge fields whose electric charges we have been studying. Such objects---which can be particles in $D=5$ and $D=4$, respectively---correspond to target space solitons. To study them in a general bosonic string theory we would need a prescription for assembling a new worldsheet CFT describing the appropriate soliton background, starting with the internal CFT specifying the bosonic string theory in question. This is an interesting topic for future research.

In the type II superstring case, there are also RR gauge fields to consider, for which the charged states are D-branes. To study the D-brane spectrum systematically, we would need to systemically understand the possible boundary conditions allowed by the internal CFT. We defer further discussion of this interesting open problem to~\cite{BMSuperstring}.

\section*{Acknowledgments}

We thank M.~Montero 
for helpful discussions and M.~Etheredge and T.~Rudelius for comments on the manuscript. We also thank our anonymous referee for identifying several places where we glossed over important details in earlier drafts.
 BH was supported by National Science Foundation grants PHY-1914934, PHY-2112800 and PHY-2412570.
 ML was supported in part by the National Research Foundation of Korea (NRF) Grants 2021R1A2C2012350 and RS-2024-00405629.
BH thanks the Yau Mathematical Sciences Center and the Kavli Institute for the Physics and Mathematics of the Universe for hospitality during the inception of this work. This work was initiated at Aspen Center for Physics, which is supported by National Science Foundation grant PHY-1607611.

\appendix

\section{Holomorphic normal ordering}\label{app:NormalOrder}

Let $\mathcal{O}_1, \mathcal{O}_2, \mathcal{O}_3 \ldots$ be a collection of
holomorphic operators, and suppose that the singular parts of their mutual
OPEs on the complex plane are proportional to the identity
operator
\begin{equation}
  \mathcal{O}_i (z) \mathcal{O}_j (0) \sim \mathfrak{f}_{i j} (z) \cdot
  \boldsymbol{1}.
\end{equation}
Here the singular behavior of $f_{i j} (z) = f_{j i} (- z)$ as $z \rightarrow
0$ matches that of the OPE, whereas the finite part is arbitrary.

A \textbf{normal-ordering prescription} $N [\cdots]$ is a prescription for
removing the OPE singularity
\begin{equation}
  N [\mathcal{O}_i (z_1) \mathcal{O}_j (z_2)] \equiv \mathcal{O}_i (z_1)
  \mathcal{O}_j (z_2) - N_{i j} (z_1, z_2),
\end{equation}
where the \textbf{contractions} $N_{i j} (z_1, z_2) = N_{j i} (z_2, z_1)$,
also written as $\wick[offset=0.92em]{\c \cO_i(z_1) \c \cO_j(z_2)}$, are specified functions chosen such that $N_{i j} (z_1, z_2)$ is
analytic for $z_1 \neq z_2$ and $N_{i j} (z_1, z_2) -\mathfrak{f}_{i j} (z_1 -
z_2)$ is analytic as $z_1 \rightarrow z_2$. As a result, $N [\mathcal{O}_i
(z_1) \mathcal{O}_j (z_2)]$ is an entire function of $z_1$ and $z_2$. More
precisely, this holds as an operator equation, i.e., $\langle N [\mathcal{O}_i
(z_1) \mathcal{O}_j (z_2)] \rangle$ is an entire function, whereas $\langle
\cdots N [\mathcal{O}_i (z_1) \mathcal{O}_j (z_2)] \cdots \rangle$ is only
singular (as a function of $z_{1, 2}$) when $z_1$ or $z_2$ approaches the
insertion point of another operator.

We then define the normal-ordered product of $n$ operators $\mathcal{O}_1,
\ldots, \mathcal{O}_n$ as
\begin{equation}
  N [\mathcal{O}_1 (z_1) \cdots \mathcal{O}_n (z_n)] \equiv \mathcal{O}_1
  (z_1) \cdots \mathcal{O}_n (z_n) + \sum \text{subtractions},
  \label{eqn:NOsubtract}
\end{equation}
where the subtractions consist of all possible ways of replacing one or more
pairs of operators in the product $\mathcal{O}_1 (z_1) \cdots \mathcal{O}_n
(z_n)$ with $-\wick[offset=0.92em]{\c \cO_i(z_1) \c \cO_j(z_2)} = - N_{i j} (z_i, z_j)$. Normal-ordered products can be built-up
iteratively, since
\begin{multline}
  N [\mathcal{O}_1 (z_1) \cdots \mathcal{O}_n (z_n)] = N [\mathcal{O}_1 (z_1)
  \cdots \mathcal{O}_k (z_k)] N [\mathcal{O}_{k + 1} (z_{k + 1}) \cdots
  \mathcal{O}_n (z_n)] \\ + \sum \text{cross-subtractions}, \label{eqn:NOcross}
\end{multline}
where the cross-subtractions pair operators in $N [\mathcal{O}_1 (z_1) \cdots
\mathcal{O}_k (z_k)]$ exclusively with those in $N [\mathcal{O}_{k + 1} (z_{k
+ 1}) \cdots \mathcal{O}_n (z_n)]$. In particular, consider the $k = 2$ case:
\begin{equation}
  N [\mathcal{O}_1 (z_1) \cdots \mathcal{O}_n (z_n)] = N [\mathcal{O}_1 (z_1)
  \mathcal{O}_2 (z_2)] N [\mathcal{O}_3 (z_3) \cdots \mathcal{O}_n (z_n)] +
  \sum \text{cross-subtractions} .
\end{equation}
By construction, $N [\mathcal{O}_1 (z_1) \mathcal{O}_2 (z_2)]$ is non-singular
as $z_1 \rightarrow z_2$, provided that the other insertion points $z_3,
\ldots, z_n$ are separated from $z_1, z_2$. Likewise, the cross-subtractions
are non-singular under the same assumptions. Therefore, $N [\mathcal{O}_1
(z_1) \cdots \mathcal{O}_n (z_n)]$ is non-singular as $z_1 \rightarrow z_2$ so
long as $z_{\ell} \neq z_{1, 2}$ for all $\ell > 2$. The same argument applies
to any pair of variables $z_i, z_j$, hence $N [\mathcal{O}_1 (z_1) \cdots
\mathcal{O}_n (z_n)]$ is non-singular (as an operator equation) (1) when all
the variables are distinct and (2) when a single pair coincides.

Could $N [\mathcal{O}_1 (z_1) \cdots \mathcal{O}_n (z_n)]$ have singularities
at higher codimension? This is forbidden by \textbf{Hartogs's extension theorem}, which states that a holomorphic function of
$N > 1$ complex variables has no isolated (non-removable) singularities.\footnote{This is not the strongest version of the theorem, but is sufficient for our purposes.} Essentially, Hartogs's extension theorem implies that the singular locus of a holomorphic function of several variables cannot have complex codimension two or higher, because then there would be an isolated singularity on a two-dimensional slice transverse to the singular locus. To make this concrete, consider the following:
\begin{lemma}
A function $f(z_1, \ldots, z_n)$ of $n$ complex variables that is analytic when its arguments are distinct and when at most two coincide (with all others distinct) is an entire function. \label{lem:Hartog}
\end{lemma}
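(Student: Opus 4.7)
The plan is to reduce this to Hartogs's extension theorem in the following strong form: a holomorphic function defined on the complement of a closed analytic subvariety of complex codimension at least two extends uniquely to a holomorphic function on the ambient open set.

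First I would pin down the singular locus. By hypothesis, $f$ is analytic everywhere except possibly on the triple-coincidence set
\begin{equation}
S \equiv \bigcup_{i<j<k} \{(z_1,\ldots,z_n) \in \C^n : z_i = z_j = z_k\}.
\end{equation}
For $n \le 2$ this set is empty and the statement is trivial, so assume $n \ge 3$. Each stratum $\{z_i = z_j = z_k\}$ is the common zero set of the two linearly independent forms $z_j - z_i$ and $z_k - z_i$, hence a complex affine subspace of codimension $2$. Thus $S$ is a finite union of codimension-$2$ linear subspaces, in particular a closed analytic subvariety of $\C^n$ of complex codimension $2$.

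Next I would apply Hartogs's extension theorem to the function $f$, holomorphic on $\C^n \setminus S$ with $S$ of codimension $2$ inside $\C^n$ (whose complex dimension is $\ge 3$), obtaining a unique holomorphic extension to all of $\C^n$. That extension is the desired entire function, and uniqueness follows from the identity theorem on the open dense set where $f$ is already known to be analytic.

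If a more self-contained route were preferred, I would argue locally at each point $p \in S$. Pick a complex $2$-plane $L$ through $p$ in generic position, so that $L$ meets the finitely many codimension-$2$ strata passing through $p$ only at $p$ itself within some polydisc. Then $f|_L$ is holomorphic on a punctured neighborhood of $p$ inside $L$, and the classical two-variable Hartogs theorem (holomorphic on a polydisc minus a point extends to the polydisc) fills in the puncture. Sweeping out a family of such transverse planes and invoking Hartogs's theorem on separate analyticity would then produce a holomorphic extension in a full open neighborhood of $p$ in $\C^n$; uniqueness of analytic continuation would glue these local extensions into a single entire function. The main obstacle in this elementary route is the local-to-global patching---checking that the extensions obtained on different transverse planes agree and assemble into a function holomorphic in all $n$ variables simultaneously---which is precisely what the general codimension-$\ge 2$ Hartogs extension theorem bypasses, and why that formulation is the cleanest tool to invoke.
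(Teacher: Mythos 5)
Your proposal is correct in outline but reaches the conclusion by a genuinely different route from the paper, and it contains one small oversight in identifying the singular locus. On the route: the paper deliberately uses only the weakest form of Hartogs's theorem (a holomorphic function of more than one complex variable has no isolated non-removable singularities) and engineers a two-dimensional affine slice $F(x,y)$ through any putative singular point of $f$, chosen so that the pair-coincidence loci $z_p=z_q$ restrict to complex lines of pairwise distinct slopes; their mutual intersections are then isolated points of $\C^2$, and the weak Hartogs theorem applied to $F$ yields a contradiction. You instead invoke the stronger extension theorem for complements of closed analytic subvarieties of codimension at least two, applied globally on $\C^n$. Your version is shorter and cleaner but leans on heavier machinery; the paper's version is more elementary and self-contained, at the cost of the explicit slope bookkeeping. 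Your second, ``local'' route honestly flags the local-to-global patching problem that the global theorem is designed to bypass, which is a fair assessment.

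The oversight: for $n\ge 4$ the set where the hypothesis fails to guarantee analyticity is \emph{not} just the triple-coincidence set $\bigcup_{i<j<k}\{z_i=z_j=z_k\}$. The hypothesis only covers configurations with at most one coinciding pair (all other arguments distinct), so configurations with two \emph{disjoint} coinciding pairs---e.g.\ $z_1=z_2$ and $z_3=z_4$ with no triple coincidence---are also potentially singular, and they do not lie in your $S$. The correct potential singular locus is the union of $\{z_i=z_j\}\cap\{z_k=z_l\}$ over all distinct unordered pairs $\{i,j\}\ne\{k,l\}$, which contains both the triple strata and the disjoint-pair strata. Fortunately each such stratum is still an affine subspace of complex codimension two, since the linear forms $z_j-z_i$ and $z_l-z_k$ are independent whenever $\{i,j\}\ne\{k,l\}$, so your argument goes through verbatim once the singular set is enlarged accordingly; but as written the claim that $f$ is analytic off your $S$ does not follow from the hypothesis when $n\ge 4$.
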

\begin{proof}
Assuming that $f(z_1,\ldots,z_n)$ is \emph{not} an entire function, it has at least one singular point $(\hat{z}_1, \ldots, \hat{z}_n)$. Define the holomorphic function:
\begin{equation}
F(x,y) \equiv f(\hat{z}_1 + (1+i) x-y, \ldots, \hat{z}_n + (n^2+i n^3) x  - n y )\,. \label{eqn:Fslice}
\end{equation}
To understand the singularity structure of $F(x,y)$, consider
\begin{equation}
z_p - z_q = \hat{z}_p - \hat{z}_q  + (p-q) [p+q +i (p^2+pq+q^2)] x - (p-q) y \,.
\end{equation}
Thus, $z_p = z_q$ along the line $y = [p+q +i (p^2+pq+q^2)] x + \frac{\hat{z}_p - \hat{z}_q}{p-q}$. By the assumed properties of $f(z_1,\ldots,z_n)$, $F(x,y)$ can only be singular where two or more of these lines intersect. Since $p+q +i (p^2+pq+q^2) \ne \tilde{p}+\tilde{q} +i (\tilde{p}^2+\tilde{p}\tilde{q}+\tilde{q}^2)$ for $(p,q) \ne (\tilde{p},\tilde{q}), (\tilde{q},\tilde{p})$, no two of these lines have the same slope---indeed, the slice~\eqref{eqn:Fslice} was chosen to have this property---hence they intersect at isolated points in $\mathbb{C}^2$. Since its potential singularities are isolated, see figure~\ref{fig:FxySingularities}, Hartog's extension theorem implies that $F(x,y)$ is an entire function. As this contradicts the assumption that $z_i = \hat{z}_i$ ($x=y=0$) was a singular point of $f$, $f(z_1,\ldots,z_n)$ must be an entire function.
\end{proof}
\noindent As shown previously, the normal-ordered product $N[\mathcal{O}_1(z_1) \cdots  \mathcal{O}_n (z_n)]$ is analytic when at most two of its arguments coincide, hence it is an entire
function (as an operator equation) by the above lemma.
\begin{figure}
\centering
\includegraphics[width=0.6\textwidth]{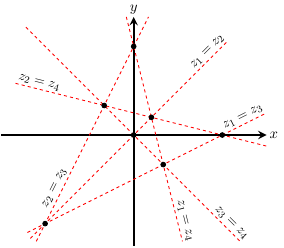}
\caption{A schematic picture of the potential singularities of $F(x,y)$---depicted in $\mathbb{R}^2$ rather than the actual domain $\mathbb{C}^2$---where the red dashed lines represent the (complex) lines $z_i = z_j$, and the black dots represent their intersections. By assumption $F(x,y)$ is analytic away from the intersections. Since the lines all have distinct slopes, these intersections are isolated points in $\mathbb{C}^2$, so Hartog's extension theorem implies that $F(x,y)$ is an entire function.}\label{fig:FxySingularities}
\end{figure}

By appropriate modifications to the contractions, the same method can be
applied to non-trivial Riemann surfaces. For instance, working a cylinder $w
\cong w + 2 \pi$ or a torus $w \cong w + 2 \pi \cong w + 2 \pi \tau$ in place
of the complex plane, the $\mathcal{O}_i (w_1) \mathcal{O}_j (w_2)$ OPE will
be singular when $w_i \rightarrow w_j + 2 \pi m$ for $m \in \mathbb{Z}$ (on
the cylinder) or when $w_i \rightarrow w_j + 2 \pi (m + n \tau)$ for $m, n \in
\mathbb{Z}$ (on the torus). Choosing a contraction $N_{i j} (w_1, w_2)$ that
matches all of these singularities, the proof that $N [\mathcal{O}_1 (w_1)
\cdots \mathcal{O}_n (w_n)]$ is entire is essentially unchanged.

Anti-holomorphic operators can be normal-ordered in the same way as above.
Since the OPEs of holomorphic and anti-holomorphic operators with each other
are non-singular, this can be extended to general products of holomorphic and
anti-holomorphic operators.\footnote{It should be possible to relax the
assumption that the singular part of the OPE is proportional to the identity
operator but we will not need this.}

We now collect a few more useful identities. Normal-ordering an
unsubtracted operator product is done using contractions in
place of subtractions:
\begin{equation}
  \mathcal{O}_1 (z_1) \cdots \mathcal{O}_n (z_n) = N [\mathcal{O}_1 (z_1)
  \cdots \mathcal{O}_n (z_n)] + \sum \text{contractions},
  \label{eqn:NOcontract}
\end{equation}
where the contractions consist of all possible ways of replacing one or more
pairs of operators in the product $\mathcal{O}_1 (z_1) \cdots \mathcal{O}_n
(z_n)$ with the contraction $\wick[offset=0.92em]{\c \cO_i(z_i) \c \cO_j(z_j)} = N_{i j} (z_i, z_j)$. More generally, one can translate between
two normal orderings $\tilde{N} [\mathcal{O}_1 (z_1) \mathcal{O}_2 (z_2)] = N
[\mathcal{O}_1 (z_1) \mathcal{O}_2 (z_2)] + \Delta_{i j} (z_1, z_2)$ using the
``recontraction'' $\Delta_{i j} (z_1, z_2) = \tilde{N}_{i j} (z_1, z_2) - N_{i
j} (z_1, z_2)$ (which is an entire function). One has:
\begin{equation}
  \tilde{N} [\mathcal{O}_1 (z_1) \cdots \mathcal{O}_n (z_n)] = N
  [\mathcal{O}_1 (z_1) \cdots \mathcal{O}_n (z_n)] + \sum
  \text{recontractions}, \label{eqn:NOrecontract}
\end{equation}
where the recontractions consist of all possible ways of replacing one or more
pairs of operators in the product $N[\mathcal{O}_1 (z_1) \cdots \mathcal{O}_n(z_n)]$ with the recontraction $\Delta_{i j} (z_1, z_2)$.
As in, e.g.,~\cite{Polchinski:1998rq}, the equations (\ref{eqn:NOsubtract}), (\ref{eqn:NOcross}),
\eqref{eqn:NOcontract}, \eqref{eqn:NOrecontract} can all be rewritten using
exponentials of functional derivatives. For instance, \eqref{eqn:NOrecontract} becomes:
\begin{equation}
  \tilde{N} [(\cdots)] = \exp \biggl[ \frac{1}{2}  \sum_{i, j} \iint \D z
  \D z' \Delta_{i j} (z, z')  \frac{\delta}{\delta \mathcal{O}_i (z)} 
  \frac{\delta}{\delta \mathcal{O}_j (z')} \biggr] N [(\cdots)], \label{eqn:reorder}
\end{equation}
where $(\cdots)$ is the expression to be re-ordered, $\frac{\delta
\mathcal{O}_i (z)}{\delta \mathcal{O}_j (z')} = \delta_i^j \delta (z - z')$,
and the integrals are defined formally using $\int \D z f (z)
\delta (z - z') = f (z')$ (rather than as contour integrals).

\subsection{Conformal normal ordering} \label{subsec:conformalNO}

If the holomorphic operators $\mathcal{O}_i(z)$ to be normal ordered are conformal primaries of weight $h_i>0$ then their OPEs take the general form
\begin{equation}
  \mathcal{O}_i (z) \mathcal{O}_j (0) \sim -\frac{k_{i j}}{z^{h_i + h_j}} ,
\end{equation}
for constants $k_{i j} = (-1)^{h_i+h_j} k_{j i}$, where we assume as before that the singular terms are proportional to the identity operator. Then it is natural to define the normal ordering:
\begin{equation} \label{eqn:confNormalOrder}
\Lcolon \mathcal{O}_i (z_1) \mathcal{O}_j (z_2) \Rcolon \equiv \mathcal{O}_i (z_1) \mathcal{O}_j (z_2) + \frac{k_{i j}}{(z_1 - z_2)^{h_i + h_j}} ,
\end{equation}
which has the fortuitous property that $\Lcolon \mathcal{O}_i (z_1) \mathcal{O}_j (z_2) \Rcolon$ transforms in the same way as $\mathcal{O}_i (z_1) \mathcal{O}_j (z_2)$ under global conformal (M\"obius) transformations $z \to \frac{a z + b}{c z+d}$.

This \emph{conformal normal ordering} is particularly useful on the Riemann sphere, where the expectation value of a conformal-normal-ordered product vanishes:
\begin{equation}
\langle \Lcolon \mathcal{O}_1(z_1) \cdots \mathcal{O}_n(z_n) \Rcolon \rangle_{S^2} = 0 \,.\label{eqn:NOexpectedvalue}
\end{equation}
This is because the left-hand side $f(z_1,\ldots,z_n)$ is an entire function by construction, whereas applying the global conformal transformation $z' = 1/z$:\
\begin{equation}
\langle \Lcolon \mathcal{O}_1(z_1) \cdots \mathcal{O}_n(z_n) \Rcolon \rangle_{S^2} = (z_1')^{2 h_1} \cdots (z_n')^{2 h_n} \langle \Lcolon \mathcal{O}_1(z_1') \cdots \mathcal{O}_n(z_n') \Rcolon\rangle_{S^2}\,.
\end{equation}
Thus, since $h_i > 0$, $f(z_1,\ldots,z_n) \to 0$ as $z_i \to \infty$ ($z_i' \to 0$) for any $i$. Per Cauchy's integral formula, an entire function that vanishes asymptotically vanishes identically, establishing~\eqref{eqn:NOexpectedvalue}.

\section{Review of the Sugawara construction} \label{app:sugawara}

In this appendix, we review the Sugawara construction (see, e.g.,~\cite{Goddard:1986bp}), whereby the
stress-energy tensor $T$ of a CFT with a conserved current $J$ is split into a
piece constructed from the current itself plus a remainder whose OPE with $J$
is non-singular. We focus on the abelian version of the construction, which is
all that is need for this paper.

\subsection{Local operators and OPEs}\label{subsec:SugawaraOPEs}

Suppose for simplicity that there is a single,
left-moving\footnote{Right-moving conserved currents can be handled in exactly
the same fashion as the left-moving ones, so we will say nothing more about
them.} conserved current $J$. The $J J$ OPE takes the general form $J (z) J
(0) \sim - \frac{k}{z^2}$ for some constant $k$, where unitarity requires $k >
0$. Without loss of generality, we rescale $J$ to set $k = 1$ henceforward, so
that
\begin{align}
  T (z) T (0) &\sim \frac{c}{2 z^4} + \frac{2}{z^2} T (0) + \frac{1}{z} \partial T (0), \nonumber \\
   T (z) J (0) &\sim \frac{1}{z^2} J (0) + \frac{1}{z}
  \partial J (0), & J (z) J (0) &\sim - \frac{1}{z^2},
\end{align}
using the fact that $J (z)$ is a $(1, 0)$ primary.

Using the contraction $\wick[offset=0.92em]{\c J (z_1) \c J (z_2)} \equiv - \frac{1}{(z_1 - z_2)^2}$, we define the conformal
normal-ordered product $\Lcolon J (z_1) \ldots J (z_n) \Rcolon$ of current operators as in
\S\ref{subsec:conformalNO}. This removes the OPE singularities, allowing us to define local operators $\Lcolon J^n (z) \Rcolon$ for any $n
\geqslant 0$. With some foresight, we decompose the stress tensor as
\begin{align}
  T (z) &= T^j (z) + \hat{T} (z), & \text{where} \quad T^j (z) &\equiv -
  \frac{1}{2} \Lcolon J^2 (z) \Rcolon, & \hat{T} (z) &\equiv T (z) - T^j (z) .
\end{align}
Normal ordering, we obtain
\begin{align}
  T^j (z) J (0) &= - \frac{1}{2} \Lcolon J^2 (z) \Rcolon J (0) = - \frac{1}{2} \Lcolon J^2(z) J(0) \Rcolon - \Lcolon \wick{J (z) \c J (z) \c J (0)} \Rcolon   \nonumber\\
  & = - \frac{1}{2} \Lcolon J^2 (z) J (0) \Rcolon + \frac{1}{z^2} J (z) \sim \frac{1}{z^2} J (0) + \frac{1}{z} \partial J (0) \sim T (z) J (0), 
\end{align}
implying that the $\hat{T} J$ OPE is non-singular, $\hat{T} (z) J (0) \sim 0$.
This in turn implies that $\hat{T} (z) \Lcolon J (z_1) \ldots J (z_n) \Rcolon$ is
non-singular as $z \rightarrow z_i$ for any $1 \leqslant i \leqslant n$ (with
$z_j \neq z_i$ for $j \neq i$), hence by Hartogs's extension theorem the OPE
$\hat{T} (z) \Lcolon J (z_1) \ldots J (z_n) \Rcolon$ is completely non-singular, see lemma~\ref{lem:Hartog}.

Using normal ordering, $\hat{T} (z) \Lcolon J^2 (0) \Rcolon \sim 0$, and the $T T$ OPE, the
$T^j$, $\hat{T}$ OPEs are straightforward to calculate:
\begin{align}
  T^j (z) T^j (0) &\sim \frac{1}{2 z^4} + \frac{2}{z^2} T^j (0) + \frac{1}{z}
  \partial T^j (0), & \hat{T} (z) T^j (0) &\sim 0, \nonumber \\
  \hat{T} (z) 
  \hat{T} (0) &\sim \frac{c - 1}{2 z^4} + \frac{2}{z^2}  \hat{T} (0) +
  \frac{1}{z} \partial \hat{T} (0) .
\end{align}
Thus, at the level of the OPE, $T^j$ and $\hat{T}$ function like decoupled
stress tensors for a current sector of central charge $c^j = 1$ and a remainder of central charge $\hat{c} = c - 1$.

This construction readily generalizes to $N$ abelian currents, for which
\begin{align}
  T (z) T (0) &\sim \frac{c}{2 z^4} + \frac{2}{z^2} T (0) + \frac{1}{z} 
  \partial T (0) , \nonumber \\
   T (z) J^a (0) &\sim \frac{1}{z^2} J^a (0) +
  \frac{1}{z} \partial J^a (0), & J^a (z) J^b (0) &\sim - \frac{\delta^{a
  b}}{z^2}, \label{eqn:TJopes}
\end{align}
after normalizing the currents as before, where $a, b = 1, \ldots N$. We then
have
\begin{align}
  T^j (z) &= - \frac{1}{2} \delta_{a b} \Lcolon J^a (z) J^b (z) \Rcolon, &
  \text{where} \qquad \wick{\c J^a (z_1) \c J^b (z_2)} &\equiv - \frac{\delta^{a b}}{(z_1 - z_2)^2} .
\end{align}
Proceeding as before, one finds:
\begin{align}
  T^j (z) T^j (0) &\sim \frac{N}{2 z^4} + \frac{2}{z^2} T^j (0) + \frac{1}{z}
  \partial T^j (0), & T^j (z) J^a (z) &\sim \frac{1}{z^2} J^a (0) +
  \frac{1}{z} \partial J^a (0), \nonumber\\ \hat{T} (z) T^j (0) &\sim 0 , & J^a (z) J^b (0) &\sim - \frac{\delta^{a b}}{z^2},  \nonumber\\
  \hat{T} (z)  \hat{T} (0) &\sim \frac{c - N}{2 z^4} + \frac{2}{z^2}  \hat{T}
  (0) + \frac{1}{z} \partial \hat{T} (0), & \hat{T} (z) J^a (0) &\sim 0\,.  \label{eqn:ThatTjJopes}
\end{align}
Thus, the OPEs factor into a current sector with effective stress tensor $T^j$
and central charge $c^j = N$, as well as a residual sector with effective
stress tensor $\hat{T}$ and central charge $\hat{c} = c - N$.

\subsection{Mode expansions}
\label{sec:modeexpansion}

Working in radial quantization, we consider the mode expansions:\footnote{Note that for the right-moving currents we use the mode expansion $\tilde{J}^a (\bar{z}) = -i \sum_n
  \frac{\tilde{J}_n^a}{\bar{z}^{n + 1}}$, which corresponds to the cylindrical mode expansion $\tilde{J}^a (\bar{w}) = \sum_n J_n^a e^{-i n \bar{w}}$.}
\begin{equation}
  T (z) = \sum_n \frac{L_n}{z^{n + 2}}, \qquad J^a (z) = i \sum_n
  \frac{J_n^a}{z^{n + 1}} . \label{eqn:radialModeExp1}
\end{equation}
By a standard calculation, applying these mode expansions to the OPEs
\eqref{eqn:TJopes} leads to the current algebra
\begin{align}
  [L_m, L_n] &= (m - n) L_{m + n} + \frac{c}{12} m (m^2 - 1) \delta_{m, - n}, \nonumber \\
  [L_m, J_n^a] &= - n J^a_{m + n}, & [J^a_m, J^b_n] &= m \delta^{a b} \delta_{m, - n} . \label{eqn:LJalgebra}
\end{align}
By the same calculation, applying the mode expansions
\begin{equation}
  T^j (z) = \sum_n \frac{L_n^j}{z^{n + 2}}, \qquad \hat{T} (z) = \sum_n
  \frac{\hat{L}_n}{z^{n + 2}}, \label{eqn:radialModeExp2}
\end{equation}
to \eqref{eqn:ThatTjJopes}, we obtain
\begin{align}
  [L_m^j, L_n^j] &= (m - n) L^j_{m + n} + \frac{N}{12} m (m^2 - 1) \delta_{m, - n}, & [L^j_m, J_n^a] &= - n J^a_{m + n},  \nonumber\\
  [\hat{L}_m, L^j_n]&= 0, & [J^a_m, J^b_n] &= m \delta^{a b} \delta_{m, - n}, \nonumber\\
  [\hat{L}_m, \hat{L}_n] &= (m - n)  \hat{L}_{m + n} + \frac{c - N}{12} m (m^2 - 1) \delta_{m, - n}, &  [\hat{L}_m, J_n] &= 0\,.  \label{eqn:LhatJalgebra}
\end{align}

Note that $L^j_n$ is built from the $J_m$ modes. To see how, note that for $|
z_1 | > | z_2 |$, the time-ordered product of two currents has the expansion
\begin{align}\label{eqn:JJOPEradial}
  {J^a}  (z_1) J^b (z_2) &= - \sum_{n_1, n_2} \frac{J_{n_1}^a J_{n_2}^b}{z_1^{n_1 + 1} z_2^{n_2
  + 1}} = - \sum_{n_1, n_2} \frac{\lcirc J_{n_1}^a J_{n_2}^b \rcirc }{z_1^{n_1 + 1}
  z_2^{n_2 + 1}}  - \sum_{n_1 > n_2} \frac{[J_{n_1}^a, J_{n_2}^b]}{z_1^{n_1 + 1} z_2^{n_2 + 1}} 
   \nonumber\\
  &=  \lcirc J^a(z_1) J^b(z_2) \rcirc - \frac{\delta^{a b}}{z_1^2}  \sum_{n = 0}^{\infty} n
  \frac{z_2^{n - 1}}{z_1^{n - 1}} = \lcirc J^a(z_1) J^b (z_2) \rcirc - \frac{\delta^{a b}}{(z_1 - z_2)^2}, 
\end{align}
where $\lcirc \cdots \rcirc$ denotes the normal-ordering of creation/annihilation operators.
Thus, $\lcirc J^a(z_1) J^b (z_2) \rcirc = \Lcolon J^a(z_1) J^b (z_2) \rcolon$ in radial quantization, and so
\begin{align}
  T^j (z) &= - \frac{1}{2} \delta_{a b}\,  \lcirc J^a(z) J^b (z) \rcirc = \frac{1}{2}  \sum_{n_1, n_2} \frac{\delta_{a b}}{z^{n_1 + n_2
  + 2}}  \lcirc J_{n_1}^a J_{n_2}^b  \rcirc\,, \nonumber\\ \Rightarrow \qquad L_n^j &= \frac{1}{2} \sum_{p = -
  \infty}^{\infty} \delta_{a b}\,  \lcirc J_{n - p}^a J_p^b  \rcirc . \label{eqn:Ljmodes}
\end{align}
As a cross-check, substituting $L_n = \hat{L}_n + L_n^j = \hat{L}_n +
\frac{1}{2}  \sum_{p = - \infty}^{\infty} \delta_{a b}\, \lcirc J_{n - p}^a J_p^b  \rcirc$ into \eqref{eqn:LJalgebra} and simplifying using the $J J$
algebra, one eventually obtains \eqref{eqn:LhatJalgebra}, as expected.

\paragraph{Cylindrical quantization}

One can also write the mode expansions~\eqref{eqn:radialModeExp1}, \eqref{eqn:radialModeExp2} in cylindrical quantization (in terms
of $w \cong w + 2 \pi$, with $z = e^{- i w}$):
\begin{align}
  T (w) &= - \sum_n L_n e^{i n w} + \frac{c}{24}, &
  J^a (w) &= \sum_n J_n^a e^{i n w}, \nonumber\\
  T^j (w) &= - \sum_n L_n^j e^{i n w} + \frac{N}{24}, &
  \hat{T} (w) &= -\sum_n \hat{L}_n e^{i n w} + \frac{c - N}{24},  \label{eqn:cylindermodes}
\end{align}
where the extra terms in the mode expansions of $T (w)$, $T^j (w)$, and
$\hat{T} (w)$ versus \eqref{eqn:radialModeExp1}, \eqref{eqn:radialModeExp2}
are due to their non-tensorial conformal transformations, fixed by the
$\frac{1}{z^4}$ term in their OPE with $T$.

The $\frac{N}{24}$ term in the mode expansion of
$T^j (w)$ can also be obtained by comparing conformal normal ordering and
creation/anihilation operator normal ordering in cylindrical quantization. In
particular, for $\Im w_1 > \Im w_2$, one finds:
\begin{align}
  {J^a}  (w_1) J^b (w_2) &=  \sum_{n_1, n_2} \! e^{i n_1 w_1 + i n_2 w_2} J_{n_1}^a J_{n_2}^b \nonumber\\
  &= \sum_{n_1, n_2}\! e^{i n_1 w_1 + i n_2 w_2} \lcirc J_{n_1}^a J_{n_2}^b \rcirc + \sum_{n_1 > n_2}\!\! e^{i n_1 w_1 + i n_2 w_2}  [J_{n_1}^a, J_{n_2}^b] \nonumber\\
  &=  
 \lcirc J^a(w_1) J^b(w_2) \rcirc - \frac{\delta^{a b}}{4 \sin^2 \bigl( \frac{w_1 - w_2}{2} \bigr)}, 
\end{align}
so that
\begin{equation}
  \lcirc J^a(w_1) J^b (w_2) \rcirc = \Lcolon J^a (w_1) J^b (w_2) \Rcolon + \delta^{a b}  \biggl[ \frac{1}{4 \sin^2 \left( \frac{w_1 - w_2}{2} \right)} - \frac{1}{(w_1 - w_2)^2} \biggr]. \label{eqn:CAnormalorder}
\end{equation}
In the limit $w_1 \rightarrow w_2$, this becomes $\lcirc J^a(w) J^b (w) \rcirc = \Lcolon J^a (w) J^b (w) \Rcolon + \frac{1}{12} \delta^{a b}$. Therefore,
\begin{align}
  T^j (w) &= - \frac{1}{2} \delta_{a b} \Lcolon J^a (z) J^b (z) \Rcolon 
    = - \frac{1}{2} \delta_{a b} \lcirc J^a(w) J^b(w) \rcirc + \frac{N}{24}  \nonumber\\
  &= - \frac{1}{2}  \sum_{n_1, n_2} e^{i (n_1 + n_2) w} \delta_{a b} \lcirc J_{n_1}^a J_{n_2}^b \rcirc + \frac{N}{24}
    = - \sum_n L_n^j e^{i n w} + \frac{N}{24}, 
\end{align}
as expected, where we used \eqref{eqn:Ljmodes}.

\subsection{Unitarity bounds} \label{subsec:unitarity}

Since they arise from self-adjoint operators in the Lorentzian theory, $T(w)$ and $J^a(w)$ are Euclidean self-adjoint
\begin{align}
T(w)^\dag &= T(w) \,, & J^a(w)^\dag &= J^a(w) \,,
\end{align}
in cylindrical quantization. Using the mode expansions~\eqref{eqn:cylindermodes}, \eqref{eqn:Ljmodes} and $w^\dag = w$, one finds
\begin{align}
L_m^\dag &= L_{-m}\,, & (J_m^a)^\dag &= J^a_{-m}\,, & (L^j_m)^\dag &= L^j_{-m}\,, & \hat{L}_m^\dag &= \hat{L}_{-m}\,,
\end{align}
so the operators $T^j(w)$ and $\hat{T}(w)$ are also Euclidean self-adjoint.

Much like unitarity requires the weight $L_0 |\psi\rangle = h |\psi\rangle$ of a non-zero state $|\psi\rangle \ne 0$ to be non-negative, $h \ge 0$, there are unitarity bounds on the weights $h^j$, $\hat{h}$ of a state under the Sugawara components $T^j(z)$, $\hat{T}(z)$, as follows:
\begin{statement}
A state $|\psi\rangle \ne 0$ of definite $\hat{T}(z)$ weight $\hat{L}_0 |\psi\rangle = \hat{h}|\psi\rangle$ satisfies $\hat{h} \ge 0$.
\end{statement}

\begin{proof}
If $|\psi\rangle$ is a $\hat{T}(z)$ conformal primary, $\hat{L}_{1} |\psi\rangle = 0$, then
\begin{equation}
\lVert \hat{L}_{-1} |\psi\rangle \rVert^2 = \langle \psi | \hat{L}_{1} \hat{L}_{-1} | \psi \rangle = \langle \psi | [\hat{L}_{1}, \hat{L}_{-1}] | \psi \rangle = 2 \langle \psi | \hat{L}_0 | \psi \rangle = 2 \hat{h} \langle \psi | \psi \rangle ,
\end{equation}
 so $\hat{h} = \lVert \hat{L}_{-1} |\psi\rangle \rVert^2 / (2 \langle \psi|\psi\rangle) \ge 0$.
 
More generally, consider the ladder of states $| \psi_p \rangle = \hat{L}_1^p | \psi \rangle$, $p\ge 0$, with weights $\hat{h}_p = \hat{h} - p$. Assuming a spectrum bounded from below, this ladder must have a bottom rung, i.e., there exists $P\ge 0$ such that $| \psi_P \rangle \ne 0$ and $| \psi_p \rangle = 0$ for $p > P$. Then by construction $| \psi_P\rangle$ is a $\hat{T}(z)$ conformal primary, and $\hat{h}_P = \hat{h} - P \ge 0$ by the preceeding argument, implying $\hat{h} \ge P \ge 0$.
\end{proof}

\begin{statement}
A state $|\psi\rangle \ne 0$ of definite charge $J^a_0 |\psi\rangle = Q^a |\psi\rangle$ and $T^j(z)$ weight $L^j_0 |\psi\rangle = h^j |\psi\rangle$ satisfies $h^j \ge \frac{1}{2} \delta_{a b} Q^a Q^b$.
\end{statement}
 
\begin{proof}
Since $(J_n^a)^\dag = J^a_{-n}$, the operator
\begin{equation}
L_0^j = \frac{1}{2} \delta_{a b} J_0^a J_0^b + \sum_{n=1}^\infty \delta_{a b} J_{-n}^a J_n^b = \frac{1}{2} \delta_{a b} J_0^a J_0^b + \sum_{n=1}^\infty \sum_{a=1}^N (J_{n}^a)^\dag J_n^a
\end{equation}
is non-negative, and
\begin{equation}
h^j = \frac{\langle \psi | L_0^j | \psi \rangle}{\langle \psi | \psi \rangle} =  \frac{1}{2} \delta_{a b} Q^a Q^b + \sum_{n=1}^\infty \sum_{a=1}^N \frac{\lVert J_n^a |\psi\rangle \rVert^2}{\langle \psi | \psi \rangle} \ge \frac{1}{2} \delta_{a b} Q^a Q^b \,,
\end{equation}
where the bound is saturated iff $| \psi \rangle$ is a current primary, $J_n^a | \psi \rangle = 0$ for all $n>0$.
\end{proof}

Note that for \emph{any} state $| \psi \rangle$ of charge $Q^a$, lowering using $J_n^a$, $n>0$ always produces a current primary with $h^j = \frac{1}{2} \delta_{a b} Q^a Q^b$. However, there need not be a charge $Q^a$ state saturating the bound $\hat{h} \ge 0$. For a unitary, modular-invariant CFT, \eqref{eqn:SpecFlow} implies that such an ``extremal'' state exists iff $Q^a$ lies on the period lattice $\Gamma$.

\section{Toroidal compactification of the bosonic string}
\label{app:toroidal}
In this appendix we derive the effective action for toroidal compactifications
of the bosonic string and use it to determine the long range forces between
string modes. This is similar to the calculation in~\cite{Heidenreich:2019zkl}, appendix B, where an analogous formula was derived for heterotic string theory on a torus.

Consider bosonic string theory compactified down to $D$ dimensions on a $k =
26 - D$ dimensional torus. The mass spectrum is
\begin{equation}
  \frac{\alpha'}{4} m^2 = \frac{1}{2} \delta^{a b} Q_a Q_b + N - 1 =
  \frac{1}{2} \delta^{\tilde{a}  \tilde{b}} \tilde{Q}_{\tilde{a}}
  \tilde{Q}_{\tilde{b}} + \tilde{N} - 1,
\end{equation}
where $N, \tilde{N}$ are non-negative integers and $Q_a$, $\tilde{Q}_{\alpha}$
are the charges under the left and right-moving conserved currents $J^a,
\tilde{J}^{\tilde{a}}$, with $a, \tilde{a} = 1, \ldots, k$. The possible left/right moving
charges $Q_A = (Q_a, \tilde{Q}_{\tilde{a}})$, $A = 1,
\ldots, 2 k$ live on a lattice $Q_A \in \Gamma$ that is
even self-dual with respect to the signature $(k, k)$ metric $\eta^{A B} = \Bigl(\begin{smallmatrix}\delta^{a b} & 0\\0 & - \delta^{\tilde{a}  \tilde{b}}\end{smallmatrix}\Bigr)$, i.e.,
\begin{equation}
  \forall \, Q \in \Gamma, \eta^{AB} Q_A Q_B \in 2\mathbb{Z}, \quad \text{and} \quad
 \Gamma^{\ast} \equiv \bigl\{ Q'_A \bigm| \forall\, Q_B \in \Gamma, \eta^{C D} Q_C Q'_D \in
  \mathbb{Z} \bigr\} =  \Gamma. \label{eqn:latticeCons}
\end{equation}
To write the
mass formula in the same notation, we define the positive-definite symmetric
bilinear form $\varphi^{A B} = \delta^{AB} = \Bigl(\begin{smallmatrix}
  \delta^{a b} & 0\\
  0 & \delta^{\tilde{a}  \tilde{b}}
\end{smallmatrix}\Bigr)$, so that
\begin{equation}
  \frac{\alpha'}{4} m^2 = \frac{1}{4}  (\varphi^{A B} + \eta^{A B}) Q_A Q_B +
  N - 1 = \frac{1}{4}  (\varphi^{A B} - \eta^{A B}) Q_A Q_B + \tilde{N} - 1 .
\end{equation}
Taking the sum and difference of these two equations gives:
\begin{align}
  m^2 &= \frac{1}{\alpha'} \varphi^{A B} Q_A Q_B + \frac{2}{\alpha'}  (N +
  \tilde{N} - 2), & \frac{1}{2} \eta^{A B} Q_A Q_B + N - \tilde{N} &= 0 .
  \label{eqn:massLevel}
\end{align}
Notice that the metric $\eta^{A B}$ appears in the charge lattice constraints
\eqref{eqn:latticeCons} and the level-matching condition, whereas the mass
spectrum depends on $\varphi^{A B}$. It is well known that any two even
self-dual lattices of the same indefinite signature are related by a Lorentz
transformation $Q_A \rightarrow Q_B \Lambda_{\;\; A}^B$ for
$\Lambda^A_{\;\; C} \eta^{C D} \Lambda^B_{\;\; D} =
\eta^{A B}$. Thus, after a suitable change of basis, the charge lattice can
take any chosen, canonical form $\Gamma_0$, but in general $\varphi^{A B}$ is
no longer equal to $\delta^{AB}$ in such a basis. Thus, there is a moduli space of
worldsheet CFTs that is isomorphic (up to discrete identifications) to the
space of positive-definite symmetric bilinear forms $\varphi^{A B}$ related to
$\delta^{AB}$ by a Lorentz transformation.

To describe this moduli space explicitly, note that in the original basis
$\varphi^{A B}$ satisfies\footnote{We also have $\varphi^A_{\;\; A}
= \varphi^{A B} \eta_{A B} = 0$, but this turns out to be a consequence of the
other conditions.}
\begin{equation}
  \varphi^{A B} = \varphi^{B A} \succ 0, \qquad \varphi^A_{\;\; B}
  \varphi^B_{\;\; C} = \delta^A_C, \label{eqn:phiCond}
\end{equation}
where ``$(\cdots) \succ 0$'' indicates a positive-definite matrix and indices
are raised and lowered with the metric $\eta_{A B}$, e.g.,
$\varphi^A_{\;\; C} \equiv \varphi^{A B} \eta_{B C}$.\footnote{Note
that $\varphi_{A B} \equiv \eta_{A C} \eta_{B D} \varphi^{C D}$ is also the
\emph{inverse} of $\varphi^{A B}$, due to the condition
$\varphi^A_{\;\; B} \varphi^B_{\;\; C} = \delta^A_C$.}
Since these conditions are $O (k, k)$ invariant, they hold in any basis.
Conversely, let $\varphi^{A B}$ be any matrix satisfying \eqref{eqn:phiCond}.
Since $\varphi^{A B}$ is symmetric and positive definite, we can choose a
basis where
\begin{equation}
  \varphi^{A B} = \delta^{A B},
\end{equation}
by the usual Gram-Schmidt procedure. When the change of basis does not
necessarily preserve the form of the metric $\eta_{A B}$, the constraint
$\varphi^A_{\;\; B} \varphi^B_{\;\; C} = \delta^A_C$
implies that $\sum_B \eta_{A B} \eta_{B C} = \delta_{A C}$ in this basis;
since $\eta_{A B}$ is symmetric, after an orthogonal transformation preserving
$\varphi^{A B} = \delta^{A B}$ we can fix
\begin{equation}
  \eta_{A B} = \diag (1, \ldots, 1, - 1, \ldots, - 1),
\end{equation}
as before. Thus the conditions \eqref{eqn:phiCond} are necessary and
sufficient to ensure that $\varphi^{I J}$ is indeed a Lorentz transformation
of $\delta^{I J}$.

Expanded about any chosen vacuum $\varphi_0^{A B}$, we have $\varphi^{A B} =
\varphi_0^{A B} + \lambda^{A B} + O (\lambda^2)$. Fixing $\varphi_0^{A B} =
\delta^{A B}$ by an $O (k, k)$ transformation, this becomes
\begin{equation}
  \varphi = \begin{pmatrix}
    1 & 0\\
    0 & 1
  \end{pmatrix} + \begin{pmatrix}
    0 & \lambda\\
    \lambda^{\top} & 0
  \end{pmatrix} + O (\lambda^2), \label{eqn:phiCanon}
\end{equation}
where the moduli $\lambda^{a \tilde{a}}$ are those generated by the $(1, 1)$
worldsheet primaries $\lambda^{a \tilde{a}} (z, \bar{z}) = J^a (z) 
\tilde{J}^{\tilde{a}} (\bar{z})$.

The low-energy effective action involves the gauge fields $F^I = (F^a,
F^{\tilde{a}})$ corresponding to the conserved currents $J^a,
\tilde{J}^{\tilde{a}}$, the moduli $\varphi^{A B}$ corresponding to the exactly
marginal operators $\lambda^{a \tilde{a}}$, as well as the spacetime metric
$g_{\mu \nu}$, dilaton $\Phi_D$ and Kalb-Ramond two-form $B_2$ as usual. One
finds\footnote{Here we ignore the tachyon as well as its KK modes. The latter
can become massless at special values of $R$, but generically have non-zero
$m^2$ (whether positive or negative).}
\begin{align}
  S &= \frac{1}{2 \kappa_D^2}\! \int \! \dd^D x \sqrt{- g}\, e^{- 2 \Phi_D}\! 
  \biggl[ R + 4 (\nabla \Phi_D)^2 \!-\! \frac{1}{2}  | \tilde{H}_3 |^2 \!-\!
  \frac{\alpha'}{2} \varphi_{A B} F^A \!\cdot\! F^B \!+\! \frac{1}{8} \nabla
  \varphi_{A B} \!\cdot\! \nabla \varphi^{A B} \biggr], \nonumber\\
  &\noeq \text{where} \quad \varphi^{A B} = \varphi^{B A} \succ 0, \qquad
  \varphi^A_{\;\; B} \varphi^B_{\;\; C} = \delta^A_C,
  \qquad \dd \tilde{H}_3 = - \frac{\alpha'}{2} \eta_{A B} F^A \wedge F^B . 
  \label{eqn:effAction}
\end{align}

This can be shown inductively, where the starting $D = 26, k = 0$ case is
familiar from textbooks. For the inductive step, we begin with the effective
action \eqref{eqn:effAction} in $D$ dimensions and compactify on a circle $y
\cong y + 2 \pi R$ with the ansatz:
\begin{align}
  \dd s^2_D &= \dd s_d^2 + e^{2 \sigma}  (\dd y + C_1)^2, &
  \Phi_D &= \Phi_d + \frac{1}{2} \sigma, \nonumber\\
  \tilde{H}_3^{(D)} &= \tilde{H}_3 + \tilde{H}_2 \wedge (\dd y + C_1),
  & F^A_{(D)} &= \tilde{F}^A + \dd \alpha^A \wedge (\dd y + C_1) ,
\end{align}
where $d = D - 1$. Then the action reduces to:
\begin{multline}
  S = \frac{1}{2 \kappa_d^2} \int \dd^d x \sqrt{- g} e^{- 2 \Phi_d} 
  \biggl[ R + 4 (\nabla \Phi_d)^2 - \frac{1}{2}  | \tilde{H}_3 |^2 -
  \frac{\alpha'}{2} \varphi_{A B}  \tilde{F}^A \cdot \tilde{F}^B - \frac{1}{2}
  e^{2 \sigma}  | G_2 |^2 \\ - \frac{1}{2} e^{- 2 \sigma} | \tilde{H}_2 |^2 
  + \frac{1}{8} \nabla \varphi_{A B} \cdot \nabla \varphi^{A B} - (\nabla
  \sigma)^2 - \frac{\alpha'}{2} e^{- 2 \sigma} \varphi_{A B} \nabla \alpha^A
  \cdot \nabla \alpha^B \biggr]\,,   \label{eqn:redAction}
\end{multline}
where $\kappa_d^2 = \kappa_D^2 / (2 \pi R)$, $G_2 = \dd C_1$, and
\begin{align}
  \dd \tilde{H}_3 &= - \frac{\alpha'}{2} \eta_{A B} 
  \tilde{F}^A \wedge \tilde{F}^B - \tilde{H}_2 \wedge G_2, & \dd
  \tilde{H}_2 &= - \alpha' \eta_{A B}  \tilde{F}^A \wedge \dd \alpha^B,
  & \dd \tilde{F}_2^A &= \dd \alpha^A \wedge G_2.
\end{align}
One can solve the modified Bianchi identities for $\tilde{H}_2$ and $\tilde{F}_2^A$ as
follows:
\begin{equation}
  \tilde{F}_2^A = F_2^A + \alpha^A G_2, \qquad \tilde{H}_2 = H_2 - \alpha'
  \eta_{A B} \alpha^A F_2^B - \frac{\alpha'}{2} \eta_{A B} \alpha^A \alpha^B
  G_2, \qquad \dd F_2^A = \dd H_2 = 0.
\end{equation}
This gives
  $\dd \tilde{H}_3 = - \frac{\alpha'}{2} \eta_{A B} F^A \wedge F^B - H_2
  \wedge G_2 = - \frac{\alpha'}{2} \eta_{\dot{A} \dot{B}} F^{\dot{A}} \wedge
  F^{\dot{B}}$,
where
\begin{equation}
  F^{\dot{A}} = \begin{pmatrix}
    F^A\\
    G_2\\
    H_2
  \end{pmatrix}, \qquad \eta_{\dot{A} \dot{B}} = \begin{pmatrix}
    \eta_{A B} & 0 & 0\\
    0 & 0 & \frac{1}{\alpha'}\\
    0 & \frac{1}{\alpha'} & 0
  \end{pmatrix},
\end{equation}
are the multiplet of gauge fields and the indefinite-signature metric in the
reduced theory. The inverse of this metric also shows up in the level matching
condition:
\begin{equation}
  0 = n w + \frac{1}{2} \eta^{A B} Q_A Q_B + N - \tilde{N} = \frac{1}{2}
  \eta^{\dot{A}  \dot{B}} Q_{\dot{A}} Q_{\dot{B}} + N - \tilde{N}, \qquad
  Q_{\dot{A}} = \begin{pmatrix}
    Q_A\\
    \frac{n}{R}\\
    \frac{w R}{\alpha'}
  \end{pmatrix},
\end{equation}
where $n$ and $w$ are the KK mode and winding numbers on the circle.

To finish putting the reduced action \eqref{eqn:redAction} into the form
\eqref{eqn:effAction}, we identify:
\begin{equation}
  \varphi_{\dot{A} \dot{B}} = \begin{pmatrix}
    \varphi_{A B} + \alpha' e^{- 2 \sigma} \alpha_A \alpha_B & \varphi_{A C}
    \alpha^C + \frac{\alpha'}{2} e^{- 2 \sigma} \alpha^2 \alpha_A & - e^{- 2
    \sigma} \alpha_A\\
    \varphi_{B C} \alpha^C + \frac{\alpha'}{2} e^{- 2 \sigma} \alpha^2
    \alpha_B & \varphi_{C D} \alpha^C \alpha^D + \frac{1}{\alpha'} e^{2
    \sigma} + \frac{\alpha'}{4} e^{- 2 \sigma} \alpha^4 & - \frac{1}{2} e^{- 2
    \sigma} \alpha^2\\
    - e^{- 2 \sigma} \alpha_B & - \frac{1}{2} e^{- 2 \sigma} \alpha^2 &
    \frac{1}{\alpha'} e^{- 2 \sigma}
  \end{pmatrix},
\end{equation}
by examining the gauge kinetic terms. One can check that this satisfies the
constraints \eqref{eqn:phiCond} and moreover that
\begin{equation}
  \frac{1}{8} \nabla \varphi_{\dot{A}  \dot{B}} \cdot \nabla \varphi^{\dot{A} 
  \dot{B}} = \frac{1}{8} \nabla \varphi_{A B} \cdot \nabla \varphi^{A B} -
  (\nabla \sigma)^2 - \frac{\alpha'}{2} e^{- 2 \sigma} \varphi_{A B} \nabla
  \alpha^A \cdot \nabla \alpha^B,
\end{equation}
so the reduced action is again of the form \eqref{eqn:effAction}, proving that
the effective action takes this form in general by induction on the number of
compact dimensions $k$.

To compute the long range forces mediated by the massless fields appearing in
the effective action \eqref{eqn:effAction}, we choose a basis where the
background values of the moduli take a canonical form $\varphi^{A B}_0 =
\delta^{A B}$ so that $\varphi^{A B} = \delta^{A B} + \lambda^{A B} + O
(\lambda^2)$ as in \eqref{eqn:phiCanon}, and the kinetic terms for the $J
\tilde{J}$ moduli become
\begin{equation}
  S_{\text{mod}} = \frac{1}{16 \kappa_D^2}\! \int \! \dd^D x \sqrt{- g} e^{- 2
  \Phi_D} \nabla \varphi^{A B} \cdot \nabla \varphi_{A B} = - \frac{1}{8
  \kappa_D^2} \! \int \! \dd^D x \sqrt{- g} e^{- 2 \Phi_D} \delta_{a b}
  \delta_{\tilde{a}  \tilde{b}} \nabla \lambda^{a \tilde{a}} \cdot \nabla
  \lambda^{b \tilde{b}} .
\end{equation}
Using \eqref{eqn:massLevel}, we find
\begin{equation}
  \frac{\partial m^2}{\partial \lambda^{a \tilde{a}}} = \frac{2}{\alpha'} Q_a 
  \tilde{Q}_{\tilde{a}} \qquad \Rightarrow \qquad \frac{\partial m}{\partial
  \lambda^{a \tilde{a}}} = \frac{Q_a  \tilde{Q}_{\tilde{a}}}{\alpha' m} .
\end{equation}
Thus, the $J \tilde{J}$-mediated long-range force coefficient (defined in~\eqref{eqn:selfforce}) between
two particles of mass $m, m'$ and charge $Q_A, Q'_A$ is
\begin{equation}
  \mathcal{F}^{\lambda} = - \frac{4 \kappa_D^2}{{\alpha'}^2}  \frac{\delta^{a
  b} \delta^{\tilde{a} \tilde{b}} Q_a Q_b'  \tilde{Q}_{\tilde{a}}
  \tilde{Q}_{\tilde{b}}'}{m m'} . \label{eqn:lambdaforce}
\end{equation}
The gauge force is straightforward to read off from the effective action:
\begin{equation}
  \mathcal{F}^{\text{gauge}} = \frac{2 \kappa_D^2}{\alpha'} \varphi^{A B} Q_A
  Q_B' = \frac{2 \kappa_D^2}{\alpha'}  (\delta^{a b} Q_a Q_b' +
  \delta^{\tilde{a} \tilde{b}}  \tilde{Q}_{\tilde{a}} \tilde{Q}_{\tilde{b}}'), \label{eqn:gaugeforce}
\end{equation}
as are the graviton and dilaton contributions:\footnote{See \cite{Heidenreich:2019zkl} for a detailed treatment of the dilaton contribution.}
\begin{equation}
  \mathcal{F}^{\text{grav} + \Phi} = - \frac{D - 3}{D - 2} \kappa_D^2 m m' -
  \frac{1}{D - 2} \kappa_D^2 m m' = - \kappa_D^2 m m' . \label{eqn:gravdilaton}
\end{equation}
In net, we obtain:
\begin{equation}
  \mathcal{F}= \frac{2 \kappa_D^2}{\alpha'}  (\delta^{a b} Q_a Q_b' +
  \delta^{\tilde{a} \tilde{b}}  \tilde{Q}_{\tilde{a}} \tilde{Q}_{\tilde{b}}')
  - \frac{4 \kappa_D^2}{{\alpha'}^2}  \frac{\delta^{a b} \delta^{\tilde{a}
  \tilde{b}} Q_a Q_b'  \tilde{Q}_{\tilde{a}} \tilde{Q}_{\tilde{b}}'}{m m'} -
  \kappa_D^2 m m' .
\end{equation}
As observered in~\cite{Heidenreich:2019zkl}, this conveniently factors:
\begin{equation}
  \mathcal{F}= - \frac{4 \kappa_D^2}{{\alpha'}^2 m m'}  \biggl(
  \frac{\alpha'}{2} m m' - \delta^{a b} Q_a Q_b' \biggr) \biggl(
  \frac{\alpha'}{2} m m' - \delta^{\tilde{a} \tilde{b}} \tilde{Q}_{\tilde{a}}
  \tilde{Q}_{\tilde{b}}' \biggr) .
\end{equation}

\bibliographystyle{JHEP}
\bibliography{refs}

\providecommand{\href}[2]{#2}\begingroup\raggedright\begin{thebibliography}{10}

\bibitem{Vafa:2005ui}
C.~Vafa, \emph{{The String landscape and the swampland}},
  \href{http://arxiv.org/abs/hep-th/0509212}{{\tt hep-th/0509212}}.

\bibitem{Palti:2019pca}
E.~Palti, \emph{{The Swampland: Introduction and Review}},
  \href{http://dx.doi.org/10.1002/prop.201900037}{\emph{Fortsch. Phys.} {\bf
  67} (2019) 1900037}, [\href{http://arxiv.org/abs/1903.06239}{{\tt
  1903.06239}}].

\bibitem{vanBeest:2021lhn}
M.~van Beest, J.~Calder\'on-Infante, D.~Mirfendereski and I.~Valenzuela,
  \emph{{Lectures on the Swampland Program in String Compactifications}},
  \href{http://dx.doi.org/10.1016/j.physrep.2022.09.002}{\emph{Phys. Rept.}
  {\bf 989} (2022) 1--50}, [\href{http://arxiv.org/abs/2102.01111}{{\tt
  2102.01111}}].

\bibitem{Agmon:2022thq}
N.~B. Agmon, A.~Bedroya, M.~J. Kang and C.~Vafa, \emph{{Lectures on the string
  landscape and the Swampland}},  \href{http://arxiv.org/abs/2212.06187}{{\tt
  2212.06187}}.

\bibitem{ArkaniHamed:2006dz}
N.~Arkani-Hamed, L.~Motl, A.~Nicolis and C.~Vafa, \emph{{The String landscape,
  black holes and gravity as the weakest force}},
  \href{http://dx.doi.org/10.1088/1126-6708/2007/06/060}{\emph{JHEP} {\bf 06}
  (2007) 060}, [\href{http://arxiv.org/abs/hep-th/0601001}{{\tt
  hep-th/0601001}}].

\bibitem{Palti:2020mwc}
E.~Palti, \emph{{A Brief Introduction to the Weak Gravity Conjecture}},
  \href{http://dx.doi.org/10.31526/lhep.2020.176}{\emph{LHEP} {\bf 2020} (2020)
  176}.

\bibitem{Harlow:2022gzl}
D.~Harlow, B.~Heidenreich, M.~Reece and T.~Rudelius, \emph{{Weak gravity
  conjecture}},
  \href{http://dx.doi.org/10.1103/RevModPhys.95.035003}{\emph{Rev. Mod. Phys.}
  {\bf 95} (2023) 035003}, [\href{http://arxiv.org/abs/2201.08380}{{\tt
  2201.08380}}].

\bibitem{Dvali:2007hz}
G.~Dvali, \emph{{Black Holes and Large N Species Solution to the Hierarchy
  Problem}}, \href{http://dx.doi.org/10.1002/prop.201000009}{\emph{Fortsch.
  Phys.} {\bf 58} (2010) 528--536}, [\href{http://arxiv.org/abs/0706.2050}{{\tt
  0706.2050}}].

\bibitem{Dvali:2007wp}
G.~Dvali and M.~Redi, \emph{{Black Hole Bound on the Number of Species and
  Quantum Gravity at LHC}},
  \href{http://dx.doi.org/10.1103/PhysRevD.77.045027}{\emph{Phys. Rev. D} {\bf
  77} (2008) 045027}, [\href{http://arxiv.org/abs/0710.4344}{{\tt 0710.4344}}].

\bibitem{Ooguri:2016pdq}
H.~Ooguri and C.~Vafa, \emph{{Non-supersymmetric AdS and the Swampland}},
  \href{http://dx.doi.org/10.4310/ATMP.2017.v21.n7.a8}{\emph{Adv. Theor. Math.
  Phys.} {\bf 21} (2017) 1787--1801},
  [\href{http://arxiv.org/abs/1610.01533}{{\tt 1610.01533}}].

\bibitem{Alim:2021vhs}
M.~Alim, B.~Heidenreich and T.~Rudelius, \emph{{The Weak Gravity Conjecture and
  BPS Particles}},
  \href{http://dx.doi.org/10.1002/prop.202100125}{\emph{Fortsch. Phys.} {\bf
  69} (2021) 2100125}, [\href{http://arxiv.org/abs/2108.08309}{{\tt
  2108.08309}}].

\bibitem{Palti:2017elp}
E.~Palti, \emph{{The Weak Gravity Conjecture and Scalar Fields}},
  \href{http://dx.doi.org/10.1007/JHEP08(2017)034}{\emph{JHEP} {\bf 08} (2017)
  034}, [\href{http://arxiv.org/abs/1705.04328}{{\tt 1705.04328}}].

\bibitem{Heidenreich:2019zkl}
B.~Heidenreich, M.~Reece and T.~Rudelius, \emph{{Repulsive Forces and the Weak
  Gravity Conjecture}},
  \href{http://dx.doi.org/10.1007/JHEP10(2019)055}{\emph{JHEP} {\bf 10} (2019)
  055}, [\href{http://arxiv.org/abs/1906.02206}{{\tt 1906.02206}}].

\bibitem{Heidenreich:2015nta}
B.~Heidenreich, M.~Reece and T.~Rudelius, \emph{{Sharpening the Weak Gravity
  Conjecture with Dimensional Reduction}},
  \href{http://dx.doi.org/10.1007/JHEP02(2016)140}{\emph{JHEP} {\bf 02} (2016)
  140}, [\href{http://arxiv.org/abs/1509.06374}{{\tt 1509.06374}}].

\bibitem{Heidenreich:2020upe}
B.~Heidenreich, \emph{{Black Holes, Moduli, and Long-Range Forces}},
  \href{http://dx.doi.org/10.1007/JHEP11(2020)029}{\emph{JHEP} {\bf 11} (2020)
  029}, [\href{http://arxiv.org/abs/2006.09378}{{\tt 2006.09378}}].

\bibitem{Heidenreich:2016aqi}
B.~Heidenreich, M.~Reece and T.~Rudelius, \emph{{Evidence for a sublattice weak
  gravity conjecture}},
  \href{http://dx.doi.org/10.1007/JHEP08(2017)025}{\emph{JHEP} {\bf 08} (2017)
  025}, [\href{http://arxiv.org/abs/1606.08437}{{\tt 1606.08437}}].

\bibitem{Heidenreich:2017sim}
B.~Heidenreich, M.~Reece and T.~Rudelius, \emph{{The Weak Gravity Conjecture
  and Emergence from an Ultraviolet Cutoff}},
  \href{http://dx.doi.org/10.1140/epjc/s10052-018-5811-3}{\emph{Eur. Phys. J.
  C} {\bf 78} (2018) 337}, [\href{http://arxiv.org/abs/1712.01868}{{\tt
  1712.01868}}].

\bibitem{Grimm:2018ohb}
T.~W. Grimm, E.~Palti and I.~Valenzuela, \emph{{Infinite Distances in Field
  Space and Massless Towers of States}},
  \href{http://dx.doi.org/10.1007/JHEP08(2018)143}{\emph{JHEP} {\bf 08} (2018)
  143}, [\href{http://arxiv.org/abs/1802.08264}{{\tt 1802.08264}}].

\bibitem{Andriolo:2018lvp}
S.~Andriolo, D.~Junghans, T.~Noumi and G.~Shiu, \emph{{A Tower Weak Gravity
  Conjecture from Infrared Consistency}},
  \href{http://dx.doi.org/10.1002/prop.201800020}{\emph{Fortsch. Phys.} {\bf
  66} (2018) 1800020}, [\href{http://arxiv.org/abs/1802.04287}{{\tt
  1802.04287}}].

\bibitem{Lee:2018urn}
S.-J. Lee, W.~Lerche and T.~Weigand, \emph{{Tensionless Strings and the Weak
  Gravity Conjecture}},
  \href{http://dx.doi.org/10.1007/JHEP10(2018)164}{\emph{JHEP} {\bf 10} (2018)
  164}, [\href{http://arxiv.org/abs/1808.05958}{{\tt 1808.05958}}].

\bibitem{Lee:2018spm}
S.-J. Lee, W.~Lerche and T.~Weigand, \emph{{A Stringy Test of the Scalar Weak
  Gravity Conjecture}},
  \href{http://dx.doi.org/10.1016/j.nuclphysb.2018.11.001}{\emph{Nucl. Phys. B}
  {\bf 938} (2019) 321--350}, [\href{http://arxiv.org/abs/1810.05169}{{\tt
  1810.05169}}].

\bibitem{Banks:2010zn}
T.~Banks and N.~Seiberg, \emph{{Symmetries and Strings in Field Theory and
  Gravity}}, \href{http://dx.doi.org/10.1103/PhysRevD.83.084019}{\emph{Phys.
  Rev. D} {\bf 83} (2011) 084019}, [\href{http://arxiv.org/abs/1011.5120}{{\tt
  1011.5120}}].

\bibitem{Klaewer:2020lfg}
D.~Klaewer, S.-J. Lee, T.~Weigand and M.~Wiesner, \emph{{Quantum corrections in
  4d $N$ = 1 infinite distance limits and the weak gravity conjecture}},
  \href{http://dx.doi.org/10.1007/JHEP03(2021)252}{\emph{JHEP} {\bf 03} (2021)
  252}, [\href{http://arxiv.org/abs/2011.00024}{{\tt 2011.00024}}].

\bibitem{Lee:2019tst}
S.-J. Lee, W.~Lerche and T.~Weigand, \emph{{Modular Fluxes, Elliptic Genera,
  and Weak Gravity Conjectures in Four Dimensions}},
  \href{http://dx.doi.org/10.1007/JHEP08(2019)104}{\emph{JHEP} {\bf 08} (2019)
  104}, [\href{http://arxiv.org/abs/1901.08065}{{\tt 1901.08065}}].

\bibitem{Gendler:2022ztv}
N.~Gendler, B.~Heidenreich, L.~McAllister, J.~Moritz and T.~Rudelius,
  \emph{{Moduli space reconstruction and Weak Gravity}},
  \href{http://dx.doi.org/10.1007/JHEP12(2023)134}{\emph{JHEP} {\bf 12} (2023)
  134}, [\href{http://arxiv.org/abs/2212.10573}{{\tt 2212.10573}}].

\bibitem{Montero:2016tif}
M.~Montero, G.~Shiu and P.~Soler, \emph{{The Weak Gravity Conjecture in three
  dimensions}}, \href{http://dx.doi.org/10.1007/JHEP10(2016)159}{\emph{JHEP}
  {\bf 10} (2016) 159}, [\href{http://arxiv.org/abs/1606.08438}{{\tt
  1606.08438}}].

\bibitem{Aalsma:2019ryi}
L.~Aalsma, A.~Cole and G.~Shiu, \emph{{Weak Gravity Conjecture, Black Hole
  Entropy, and Modular Invariance}},
  \href{http://dx.doi.org/10.1007/JHEP08(2019)022}{\emph{JHEP} {\bf 08} (2019)
  022}, [\href{http://arxiv.org/abs/1905.06956}{{\tt 1905.06956}}].

\bibitem{Cheung:2014vva}
C.~Cheung and G.~N. Remmen, \emph{{Naturalness and the Weak Gravity
  Conjecture}},
  \href{http://dx.doi.org/10.1103/PhysRevLett.113.051601}{\emph{Phys. Rev.
  Lett.} {\bf 113} (2014) 051601}, [\href{http://arxiv.org/abs/1402.2287}{{\tt
  1402.2287}}].

\bibitem{BMSuperstring}
B.~Heidenreich and M.~Lotito, ``{Proving the Weak Gravity Conjecture in
  Perturbative String Theory -- Part II: The Superstring}.'' To appear.

\bibitem{Polchinski:1998rq}
J.~Polchinski, \emph{{String theory. Vol. 1: An introduction to the bosonic
  string}}.
\newblock Cambridge Monographs on Mathematical Physics. Cambridge University
  Press, 12, 2007,
  \href{http://dx.doi.org/10.1017/CBO9780511816079}{10.1017/CBO9780511816079}.

\bibitem{Harlow:2018tng}
D.~Harlow and H.~Ooguri, \emph{{Symmetries in quantum field theory and quantum
  gravity}}, \href{http://dx.doi.org/10.1007/s00220-021-04040-y}{\emph{Commun.
  Math. Phys.} {\bf 383} (2021) 1669--1804},
  [\href{http://arxiv.org/abs/1810.05338}{{\tt 1810.05338}}].

\bibitem{Benjamin:2020swg}
N.~Benjamin, H.~Ooguri, S.-H. Shao and Y.~Wang, \emph{{Twist gap and global
  symmetry in two dimensions}},
  \href{http://dx.doi.org/10.1103/PhysRevD.101.106026}{\emph{Phys. Rev. D} {\bf
  101} (2020) 106026}, [\href{http://arxiv.org/abs/2003.02844}{{\tt
  2003.02844}}].

\bibitem{Benjamin:2016fhe}
N.~Benjamin, E.~Dyer, A.~L. Fitzpatrick and S.~Kachru, \emph{{Universal Bounds
  on Charged States in 2d CFT and 3d Gravity}},
  \href{http://dx.doi.org/10.1007/JHEP08(2016)041}{\emph{JHEP} {\bf 08} (2016)
  041}, [\href{http://arxiv.org/abs/1603.09745}{{\tt 1603.09745}}].

\bibitem{Sen:1994eb}
A.~Sen, \emph{{Black hole solutions in heterotic string theory on a torus}},
  \href{http://dx.doi.org/10.1016/0550-3213(95)00063-X}{\emph{Nucl. Phys. B}
  {\bf 440} (1995) 421--440}, [\href{http://arxiv.org/abs/hep-th/9411187}{{\tt
  hep-th/9411187}}].

\bibitem{Goddard:1986bp}
P.~Goddard and D.~I. Olive, \emph{{Kac-Moody and Virasoro Algebras in Relation
  to Quantum Physics}},
  \href{http://dx.doi.org/10.1142/S0217751X86000149}{\emph{Int. J. Mod. Phys.
  A} {\bf 1} (1986) 303}.

\bibitem{Gibbons:1982jg}
G.~Gibbons, S.~Hawking, G.~T. Horowitz and M.~J. Perry, \emph{{Positive Mass
  Theorems for Black Holes}},
  \href{http://dx.doi.org/10.1007/BF01213209}{\emph{Commun. Math. Phys.} {\bf
  88} (1983) 295}.

\bibitem{Gibbons:1993xt}
G.~Gibbons, D.~Kastor, L.~London, P.~Townsend and J.~H. Traschen,
  \emph{{Supersymmetric selfgravitating solitons}},
  \href{http://dx.doi.org/10.1016/0550-3213(94)90558-4}{\emph{Nucl. Phys. B}
  {\bf 416} (1994) 850--880}, [\href{http://arxiv.org/abs/hep-th/9310118}{{\tt
  hep-th/9310118}}].

\bibitem{Ferrara:1995ih}
S.~Ferrara, R.~Kallosh and A.~Strominger, \emph{{N=2 extremal black holes}},
  \href{http://dx.doi.org/10.1103/PhysRevD.52.R5412}{\emph{Phys. Rev. D} {\bf
  52} (1995) R5412--R5416}, [\href{http://arxiv.org/abs/hep-th/9508072}{{\tt
  hep-th/9508072}}].

\bibitem{Cvetic:1995bj}
M.~Cvetic and A.~A. Tseytlin, \emph{{Solitonic strings and BPS saturated dyonic
  black holes}}, \href{http://dx.doi.org/10.1103/PhysRevD.53.5619}{\emph{Phys.
  Rev. D} {\bf 53} (1996) 5619--5633},
  [\href{http://arxiv.org/abs/hep-th/9512031}{{\tt hep-th/9512031}}].

\bibitem{Strominger:1996kf}
A.~Strominger, \emph{{Macroscopic entropy of N=2 extremal black holes}},
  \href{http://dx.doi.org/10.1016/0370-2693(96)00711-3}{\emph{Phys. Lett. B}
  {\bf 383} (1996) 39--43}, [\href{http://arxiv.org/abs/hep-th/9602111}{{\tt
  hep-th/9602111}}].

\bibitem{Ferrara:1996dd}
S.~Ferrara and R.~Kallosh, \emph{{Supersymmetry and attractors}},
  \href{http://dx.doi.org/10.1103/PhysRevD.54.1514}{\emph{Phys. Rev. D} {\bf
  54} (1996) 1514--1524}, [\href{http://arxiv.org/abs/hep-th/9602136}{{\tt
  hep-th/9602136}}].

\bibitem{Ferrara:1996um}
S.~Ferrara and R.~Kallosh, \emph{{Universality of supersymmetric attractors}},
  \href{http://dx.doi.org/10.1103/PhysRevD.54.1525}{\emph{Phys. Rev. D} {\bf
  54} (1996) 1525--1534}, [\href{http://arxiv.org/abs/hep-th/9603090}{{\tt
  hep-th/9603090}}].

\bibitem{BHbound}
B.~Heidenreich and M.~Lotito, ``{Non-perturbative black holes (are not) at weak
  coupling}.'' To appear.

\bibitem{Ceresole:2007wx}
A.~Ceresole and G.~Dall'Agata, \emph{{Flow Equations for Non-BPS Extremal Black
  Holes}}, \href{http://dx.doi.org/10.1088/1126-6708/2007/03/110}{\emph{JHEP}
  {\bf 03} (2007) 110}, [\href{http://arxiv.org/abs/hep-th/0702088}{{\tt
  hep-th/0702088}}].

\bibitem{Andrianopoli:2007gt}
L.~Andrianopoli, R.~D'Auria, E.~Orazi and M.~Trigiante, \emph{{First order
  description of black holes in moduli space}},
  \href{http://dx.doi.org/10.1088/1126-6708/2007/11/032}{\emph{JHEP} {\bf 11}
  (2007) 032}, [\href{http://arxiv.org/abs/0706.0712}{{\tt 0706.0712}}].

\bibitem{Andrianopoli:2009je}
L.~Andrianopoli, R.~D'Auria, E.~Orazi and M.~Trigiante, \emph{{First Order
  Description of D=4 static Black Holes and the Hamilton-Jacobi equation}},
  \href{http://dx.doi.org/10.1016/j.nuclphysb.2010.02.020}{\emph{Nucl. Phys. B}
  {\bf 833} (2010) 1--16}, [\href{http://arxiv.org/abs/0905.3938}{{\tt
  0905.3938}}].

\bibitem{Andrianopoli:2010bj}
L.~Andrianopoli, R.~D'Auria, S.~Ferrara and M.~Trigiante, \emph{{Fake
  Superpotential for Large and Small Extremal Black Holes}},
  \href{http://dx.doi.org/10.1007/JHEP08(2010)126}{\emph{JHEP} {\bf 08} (2010)
  126}, [\href{http://arxiv.org/abs/1002.4340}{{\tt 1002.4340}}].

\bibitem{Trigiante:2012eb}
M.~Trigiante, T.~Van~Riet and B.~Vercnocke, \emph{{Fake supersymmetry versus
  Hamilton-Jacobi}},
  \href{http://dx.doi.org/10.1007/JHEP05(2012)078}{\emph{JHEP} {\bf 05} (2012)
  078}, [\href{http://arxiv.org/abs/1203.3194}{{\tt 1203.3194}}].

\bibitem{Nakayama:2015hga}
Y.~Nakayama and Y.~Nomura, \emph{{Weak gravity conjecture in the AdS/CFT
  correspondence}},
  \href{http://dx.doi.org/10.1103/PhysRevD.92.126006}{\emph{Phys. Rev. D} {\bf
  92} (2015) 126006}, [\href{http://arxiv.org/abs/1509.01647}{{\tt
  1509.01647}}].

\end{thebibliography}\endgroup

\end{document}